\documentclass[12pt,reqno,hidelinks]{amsart}
\usepackage{mathrsfs}
\usepackage{etoolbox}
\usepackage{amssymb}
\usepackage{hyperref}
\usepackage{graphicx}
\usepackage{color}
\usepackage{amsthm}
\usepackage{amsmath}
\usepackage{mathrsfs}
\usepackage{mathtools}
\usepackage{verbatim}
\usepackage{amsthm}
\usepackage{framed}
\usepackage{wasysym}
\usepackage{upgreek}
\usepackage{color}
\usepackage[dvipsnames]{xcolor}
\usepackage{tensor}
\usepackage{accents}
\usepackage{dsfont}
\usepackage{hyperref}
\usepackage{enumerate}
\usepackage[normalem]{ulem}
\usepackage{longtable}
\usepackage{mathtools}
\usepackage{graphicx}
\usepackage[square, comma, sort&compress, numbers]{natbib}

 \theoremstyle{definition}

 \numberwithin{equation}{section}

\makeatletter

\numberwithin{equation}{section}
\newcommand{\vertiii}[1]{{\left\vert\kern-0.25ex\left\vert\kern-0.25ex\left\vert #1 \right\vert\kern-0.25ex\right\vert\kern-0.25ex\right\vert}}
\newcommand{\vertiiii}[1]{{\left\vert\kern-0.25ex\left\vert\kern-0.25ex\left\vert\kern-0.25ex\left\vert #1 \right\vert\kern-0.25ex\right\vert\kern-0.25ex\right\vert\kern-0.25ex\right\vert}}

\newcommand{\Rmnum}[1]{\expandafter\@slowromancap\romannumeral #1@}
\makeatother
\newtheorem{theorem}{Theorem}[section]
\newtheorem{lemma}[theorem]{Lemma}

\newtheorem{proposition}[theorem]{Proposition}

\theoremstyle{definition}

\newtheorem{assumption}[theorem]{Assumption}

\newtheorem{remark}[theorem]{Remark}

\newcommand{\AND}{{\quad\text{and}\quad}}
\newcommand{\p}[1]{
	\begin{pmatrix}
		#1
	\end{pmatrix}
}
\newcommand{\longeq}{\scalebox{3}[1]{=}}

\DeclareMathOperator{\cir}{Circ}
\DeclareMathOperator{\til}{Tild}
\DeclareMathOperator{\diag}{diag}

\setlength{\hoffset}{-20mm}
\setlength{\voffset}{-17mm}

\setlength{\textwidth}{17cm}
\setlength{\textheight}{23.5cm}%

\setlength{\marginparwidth}{25mm}%

\begin{document}
\title[Future stability of FLRW for a large class of perfect fluids]{Future stability of the FLRW spacetime for a large class of perfect fluids}
\author{Chao Liu$^\dag$}
\address[Chao Liu]{Center for Mathematical Sciences, Huazhong University of Science and Technology,
1037 Luoyu Road, Wuhan, Hubei Province, China; Beijing International Center for Mathematical Research (BICMR), Peking University, No.5 Yiheyuan Road Haidian District, Beijing, China. }
\email{chao.liu.math@foxmail.com}

\author{Changhua Wei$^\dag$}
\address[Changhua Wei]{Department of Mathematics, Zhejiang Sci-Tech University,
	Hangzhou,  310018,  China. }
\email{changhuawei1986@gmail.com. }
\thanks{$\dag$ The authors contribute equally to this article and should be regarded as co-first authors.}

\date{}
\maketitle
\allowdisplaybreaks

\begin{abstract}
	
	We establish the future non-linear stability
	of Friedmann-Lema\^{\i}tre-Robertson-Walker (FLRW) solutions to the Einstein--Euler equations of the universe filled with a large class of perfect fluids (the equations of state are allowed to be certain  nonlinear or linear types both). Several previous results as specific examples can be covered in the results of this article.  We emphasize that the future stability of FLRW metric for \textit{polytropic fluids} with positive cosmological constant has been a difficult problem and can not be directly generalized from the previous known results. Our result in this article has not only covered this difficult case for the polytropic fluids, but also unified more types of fluids in a same scheme of proofs.
	
	 \vspace{2mm}
	
	{{\bf Keywords:} Einstein--Euler system; FLRW metric; future non-linear stability; polytropic gas; Chaplygin gas}
	
	\vspace{2mm}
	
	{{\bf Mathematics Subject Classification:} Primary 35A01; Secondary 35Q31, 35Q76, 83C05, 83F05}
\end{abstract}

\setcounter{tocdepth}{6}


\section{Introduction}\label{section:1}
Cosmological observations predict that our universe is currently undergoing an accelerated expansion which is potentially achieved by various models. Candidates such as positive cosmological constant, quintessence of dark energy have been widely studied, for example, in \cite{Bamba2012,Benaoum2012,Bento2003,Debnath2004,GORINI2006,Heydari-Fard2007,Kamenshchik2000,Kahya2015,Pedram2008,Yang2014,Rendall2004a,Rendall2005,Rendall2006}. During these candidates, a well-known family of Friedmann-Lema\^{\i}tre-Robertson-Walker (FLRW) solutions are often used by cosmologists to model a fluid-filled, spatially homogeneous and isotropic universe. In mathematics, the future non-linear stability of perturbations
of FLRW solutions to the Einstein--Euler equations with a positive cosmological constant and a linear
equation of state $p = K\rho$ has been well studied. However, in reality, the equations of state of the fluids can not be precisely linear. A natural question arises: \textit{if the equation of state deviates from the linear one, what happens to the longtime behavior of perturbations of FLRW metrics?} Or, more generally,  \textit{how the equations of state of the perfect fluids influence the future non-linear stability
of FLRW solutions}. This question is attractive to us because if small deviations from the linear equation of state of fluids destroy the non-linear future stability of FLRW metrics, then the FLRW metric with positive cosmological constant and the linear model of equation of state of fluids is not decent to predict the future of the universe due to the instability of this model with respect to the equation of state of the filled fluids.
This article aim to solve the proposed question. However, our results can not answer above questions completely and it is very difficult to investigate this question directly, since the equation of state affects the system in very complicated ways. In fact, we attempt to investigate it by asking first \textit{what types of fluids can guarantee the validity} of the future non-linear stability of FLRW solutions to the Einstein--Euler system with a positive cosmological constant. In this article, we construct a large class of fluids from the mathematical point of view and demonstrate several common and frequently used fluids (the equations of state are allowed to be certain  nonlinear or linear types both) are in this class and prove that
these fluids do make sure the target stability holds. On the other hand, another direct motivation for us is to investigate the evolution of the general Chaplygin fluids and polytropic fluids
in accelerated expanding spacetime, which is not studied so far to the authors' knowledge. In fact, one would see the fluids proposed in this article include certain cases of general Chaplygin fluids and polytropic fluids. We emphasize that the future stability of FLRW metric for \textit{polytropic fluids} with positive cosmological constant has been a difficult problem and can not be directly generalized from the previous known results. Our result in this article has not only covered this difficult case for the polytropic fluids, but also unified more types of fluids in a same scheme of proofs.

The dimensionless Einstein--Euler system is given by
\begin{align}
\widetilde{G}^{\mu\nu}+\Lambda \widetilde{g}^{\mu\nu} = &\widetilde{T}^{\mu\nu},\label{e:Ein1} \\
\widetilde{\nabla}_{\mu}\widetilde{T}^{\mu\nu} = & 0, \label{e:Eu1}
\end{align}
where
$\widetilde{G}^{\mu\nu}=\widetilde{R}^{\,\mu\nu}-\frac{1}{2}\widetilde{R}\widetilde{g}^{\mu\nu}$ is the Einstein tensor of the metric
\begin{equation*}\label{1.3}
\widetilde{g}=\widetilde{g}_{\mu\nu}dx^{\mu}dx^{\nu},
\end{equation*}
and
\begin{equation*}\label{1.4}
\widetilde{T}^{\mu\nu}=(\rho+p)\widetilde{u}^{\mu}\widetilde{u}^{\nu}+p\widetilde{g}^{\mu\nu},
\end{equation*}
is the stress energy tensor of the perfect fluid. Here,  $\widetilde{R}_{\mu\nu},\,\widetilde{R}$ are the Ricci and scalar curvature of the metric $\widetilde{g}$ respectively, $\widetilde{\nabla}_{\mu}$ is the covariant derivative of $\widetilde{g}$, and  $\rho$, $p=p(\rho)$ denote the energy density and pressure of the perfect fluid, respectively. We require that $p(0)=0$ and $p(\rho)$ is analytic on a compact set $I_\rho\subset[0,+\infty]$.
$\widetilde{u}^\mu$ is the fluid four-velocity, which we assume is normalized by
\begin{equation}\label{e:nol1}
	\widetilde{g}_{\mu\nu}\widetilde{u}^{\mu}\widetilde{u}^{\nu}=-1.
\end{equation}
$\widetilde{g}^{\mu\nu}$ is the inverse of $\widetilde{g}_{\mu\nu}$ and
\begin{equation*}
	\widetilde{R}^{\mu\nu}=\widetilde{g}^{\alpha\mu}\widetilde{g}^{\beta\nu} \widetilde{R}_{\alpha\beta}.
\end{equation*}

A well-known cosmological model is  the family of Friedmann-Lema\^{\i}tre-Robertson-Walker (FLRW) solution to \eqref{e:Ein1}--\eqref{e:Eu1} representing a homogeneous, fluid filled universe that is undergoing accelerated expansion. We use $x^{i}$ ($i=1, 2, 3$) to denote the standard periodic coordinates on the
$3$-torus $\mathbb{T}^3$ and $\tau = x^{0}$ a time coordinate on the interval $(0, 1]$, then the FLRW solutions on the manifold
\begin{equation*}
	\mathfrak{M}= (0,1]\times \mathbb{T}^3
\end{equation*}
are defined by
\begin{align}
\widetilde{\eta}=&\frac{1}{\tau^{2}}\Bigl(-\frac{1}{\omega^{2}(\tau)}d\tau^{2}+\delta_{ij}dx^i d x^j\Bigr),\label{e:uncfmt}\\
\tilde{u} =&-\tau\omega \partial_\tau
\end{align}
and the corresponding density of the fluid $\bar{\rho}$ verifies the estimate (see \eqref{e:bgrhoest1} later)
\begin{equation}\label{e:bgdsty}
	\tau^4\bar{\rho}(1)\leq \bar{\rho}(\tau) \leq \tau^3\bar{\rho}(1),
\end{equation}
where $\omega(\tau) \in C^2([0,1])$ and the initial proper density $\bar{\rho}(1)$ can be freely specified.

\begin{remark}
	We emphasize that, as is pointed out at Remark $1.2$ in \cite{Liu2018}, the expression \eqref{e:uncfmt} of FLRW solutions is not the standard one because of the choice of the time coordinate which compactifies the time interval from infinity interval $[0, \infty)$ in the standard presentation to $(0, 1]$ in the
	coordinates used here. In order to recover the standard presentation from this expression,
	let us define a new time coordinate $t$ in term of $\tau\in (0,1]$\footnote{note the following identity implies
	\begin{equation*}
		\frac{dt}{d\tau}=-\frac{1}{\tau\omega(\tau)}.
\end{equation*}
},
	\begin{equation}\label{e:ttau}
		t:=\mathfrak{t} (1/\tau):=-\int^\tau_1\frac{1}{y \omega(y)} d y>0.
	\end{equation}
	It is evident that $\mathfrak{t}$ is a strictly increasing function of $1/\tau$ due to the strictly positive integrand. Therefore, there exists an inverse function $\mathfrak{t}^{-1}$, which we denote by $a(t)$, such that	
	\begin{equation}\label{e:ttau2}
		a(t):=\mathfrak{t}^{-1}(t)=\frac{1}{\tau}.
	\end{equation}
	According to our choice of time coordinate $\tau$, the future lies in the direction of decreasing $\tau$ and timelike infinity is
	located at $\tau = 0$.
	Transforming the time coordinate $\tau$ to $t$ via \eqref{e:ttau}--\eqref{e:ttau2}, the FLRW metric \eqref{e:uncfmt} recovers to the standard one
	\begin{equation*}
		ds^{2}=-dt^{2}+a^{2}(t)\delta_{ij}dx^i d x^j,
	\end{equation*}
	which can be found in a variety of references, for instance \cite{Wald2010}.
\end{remark}

\begin{remark}
	The fluid-four velocity $\widetilde{u}^\mu$ is assumed to be future oriented, which is equivalent
	to the condition
	\begin{equation*}
		\widetilde{u}^0<0.
	\end{equation*}
\end{remark}


%
%
%

Before stating the main theorem of this article, we fix notations and conventions first. A number of new variables and preliminary concepts are introduced as well.

\subsection{Notations} \label{S:notation}

\subsubsection{Indices and coordinates}
Unless stated otherwise, our indexing convention will be as follows: we use lower
case Latin letters, e.g. $i; j; k$, for spatial indices that run from $1$ to $3$, and lower case Greek letters, e.g. $\alpha, \beta, \gamma$; for spacetime indices that run from $0$ to $3$. We will follow the Einstein summation convention, that is, repeated lower and upper indices are implicitly summed over their ranges. We use $x^{i}$ ($i=1, 2, 3$) to denote the standard periodic coordinates on the
$3$-torus $\mathbb{T}^3$ and $\tau = x^{0}$ a time coordinate on the interval $(0, 1]$.


\subsubsection{Descriptions of background and perturbed manifolds}
Throughout this article, we use $\widetilde{g}$ and $\widetilde{\eta}$ to denote the original metric and original background FLRW metric respectively; we also use $g$ and $\eta$ to denote the conformal metric and the conformal background metric.
$\widetilde{\Gamma}$, $\tilde{\gamma}$, $\Gamma$ and $\gamma$ denote the Christoffel symbols with respect to $\widetilde{g}$, $\widetilde{\eta}$, $g$ and $\eta$, respectively, similar conventions are used for all kinds of the curvature tensors $\widetilde{R}$, $\widetilde{\mathcal{R}}$, $R$, $\mathcal{R}$.





\subsubsection{Derivatives} \label{s:deriv}
Partial derivatives with respect to coordinates $(x^\mu)=(\tau,x^i)$ will be denoted by $\partial_\mu = \partial/\partial x^\mu$ which are the partial derivatives of the conformal spacetime.  $\widetilde{\nabla}$ and $\nabla$ are the covariant derivatives of the original physical spacetime and the conformal spacetime, respectively. We use
$Du=(\partial_j u)$ and $\partial u = (\partial_\mu u)$ to denote the spatial and spacetime gradients, respectively.
$f^\prime(g)$, or simply $f^\prime$ if there is no confusion, will be used to denote $f^\prime(g):=df(g)/dg$.

Greek letters will also be used to denote multi-indices, e.g.
$\alpha = (\alpha_1,\alpha_2,\ldots,\alpha_n)\in \mathbb{Z}_{\geq 0}^n$, and we will employ the standard notation $D^\alpha = \partial_{1}^{\alpha_1} \partial_{2}^{\alpha_2}\cdots
\partial_{n}^{\alpha_n}$ for spatial partial derivatives. It will be clear from context whether a Greek letter stands for a spacetime coordinate index or a multi-index. Furthermore, we will use $D^k u = \{ D^\alpha u \,|\, |\alpha|=k\}$ to denote the collection of partial derivatives of order $k$. 

Given a vector-valued map $f(u)$, where $u$ is a vector, we use $D f$ and $D_u f$ interchangeably to denote the derivative with respect to the vector $u$, and use the standard notation
\begin{equation*}
D f(u)\cdot \delta u := \left.\frac{d}{dt}\right|_{t=0} f(u+t\delta u)
\end{equation*}
for the action of the linear operator $D f$ on the vector $\delta u$. For vector-valued maps $f(u,v)$ of two (or more)
variables, we use the notation $D_1 f$ and $D_u f$ interchangeably for the partial
derivative with respect to the first variable, i.e.
\begin{equation*}
D_u f(u,v)\cdot \delta u := \left.\frac{d}{dt}\right|_{t=0} f(u+t\delta u,v),
\end{equation*}
and a similar notation for the partial derivative with respect to the other variable.

\subsubsection{Function spaces}
For a function $u(\tau,x)$, we define the following standard Sobolev norms
\begin{align*}
\|u(\tau,x)\|_{L^{2}(\mathbb T^{n})}:=& \left(\int_{\mathbb T^{n}}|u(\tau,x)|^{2}d^nx\right)^{\frac{1}{2}},
\\
\|u(\tau,x)\|_{H^{k}(\mathbb T^{n})}:=& \sum_{|\alpha|=0}^{k}\|D^\alpha u(\tau,x)\|_{L^{2}(\mathbb{T}^n)},
\intertext{and}
\|u(\tau,x)\|_{L^{\infty}(\mathbb T^{n})}:=&\text{ess} \sup_{x\in\mathbb T^{n}}|u(\tau,x)|.
\end{align*}

\subsubsection{Remainder terms\label{remainder}}
In order to simplify the handling of remainder terms whose exact forms are not important, we will use, unless otherwise stated,
upper case calligraphic letters, e.g.,
$\mathcal{S}(\tau, \xi)$, $\mathcal{T}(\tau, \xi)$ and $\mathcal{U}(\tau, \xi)$, to denote vector-valued maps that are elements
of the space $C^1([0,1], C^\infty(\mathbb{R}^M))$ for $\xi\in \mathbb{R}^M$ and upper case letters in typewriter font, e.g.,
$\texttt{S}(\tau, \xi)$, $\texttt{T}(\tau, \xi)$ and $\texttt{U}(\tau, \xi)$, to denote vector-valued maps that are elements
of the space $C^0([0,1], C^\infty(\mathbb{R}^M))$. We also remark that because the exact forms of these callgraphic or typewriter fond remainders are not important, the remainders of the same letter may change from line to line.

We will say that a function $f(x,y)$ \textit{vanishes to the $n^{\text{th}}$ order in $y$} if it satisfies $f(x,y)\sim \mathrm{O}(y^n)$ as $y\rightarrow 0$, that is, there exists a positive constant $C$ such that $|f(x,y)|\leq C|y|^n$ as $y \rightarrow 0$.

\subsubsection{Intermediate point}\label{s:intpt}
We use $\alpha_{K_\ell}$ to denote the intermediate point between $\bar{\alpha}$ and $\alpha$ measured by $K_\ell$ in a linear fashion, that is,
\begin{equation*}
	\alpha_{K_\ell}:=\bar{\alpha}+K_\ell(\alpha-\bar{\alpha})
\end{equation*}
for some constant $K_\ell\in (0,1)$ ($\ell=1,2,\cdots$).

\subsection{Constraints on the perfect fluids}\label{S:Makinolike}
Throughout this article, we concentrate on a type of perfect fluids with the equation of state $p=p(\rho)$ where $p(0)=0$ and $p(\rho)$ is analytic on a compact set $I_\rho\subset[0,+\infty]$,  and we also require that for this fluid,
there exists an invertible transformation $\alpha=\mu^{-1}(\rho)$ determined by a set of quantities $\{\mu(\alpha), \varrho, \varsigma, \beta(\tau)\}$ satisfying the following Assumptions \ref{a:Maksym}--\ref{a:postvty}, which rephrase this fluid in terms of a new density variable $\alpha$, and $\{\varrho, \varsigma, \beta(\tau)\}$ restrict the property of the transformation $\mu$.

\begin{assumption}\label{a:Maksym} \emph{(The symmetrization condition)}
	There exists an invertible transformation
	\begin{eqnarray*}
		C^2 \ni \mu: \qquad I_\alpha &\rightarrow & I_\rho,  \\
		\alpha(x^\mu)  &\mapsto &  \rho(x^\mu),
	\end{eqnarray*}
	that is, $\mu(\alpha)=\rho$, where $I_\alpha\subset [-\infty, +\infty]$ is a compact set, such that a quantity $\lambda(\alpha)$ constructed by $\mu$ and the pressure of the fluid $p$ is uniformly bounded and away from $0$. That is, there is a constant
	\begin{equation}\label{e:hdel}
	    \hat{\delta}\in \Bigl(0,  \min \Bigl\{ \frac{3}{4}\Bigl(1+\sqrt{\frac{3}{\Lambda}}\Bigr) ,  \frac{ \Lambda}{3+\Lambda}\Bigl(1+\sqrt{\frac{3}{\Lambda}}\Bigr), 1 \Bigr\} \Bigr),
	\end{equation}
	such that
	\begin{equation}\label{e:Maksym}
	\lambda(\alpha) :=\frac{s(\alpha)}{ \mu(\alpha)+\mu^* p (\alpha) }\frac{d\mu(\alpha)}{d\alpha} \in [\hat{\delta},1/\hat{\delta}]
	\end{equation}
	and $\lambda\in C^2(I_\alpha)$,
	where
	\begin{equation*}
	s :=\mu^* c_s  \AND c_s :=\sqrt{\frac{ d p }{d\rho} } \label{e:qdef}
	\end{equation*}
	describe the sound speed and $\mu^*$ is the pullback of $\mu$. We further assume $s\in C^2(I_\alpha)$.
\end{assumption}

\begin{assumption} \label{a:tLip}
	Suppose $\bar{\rho}=\bar{\rho}(\tau)$ is the density of the homogeneous, isotropic fluid (for example, in this article, we take it to be the background FLRW solution \eqref{e:uncfmt}--\eqref{e:bgdsty}), then we denote
	\begin{equation*}\label{e:abdef}
	\bar{\alpha}:=\mu^{-1}(\bar{\rho}).
	\end{equation*}
	Assume there exists a function $\varrho\in C\bigl([0,1], C^\infty( \mathbb{R})\bigr)$ satisfying $\varrho(\tau,0)=0$ and a rescaling function $\beta(\tau)\in C([0,1]) \cap C^1((0,1])$ of $\alpha$
	such that
	\begin{align}\label{e:rhopro}
	\mu(\alpha)-\mu(\bar{\alpha})=\tau^{\varsigma}  \varrho\bigl(\tau,  \beta^{-1}(\tau)(\alpha- \bar{\alpha})\bigr) ,\quad \varsigma \geq 2.
	\end{align}
\end{assumption}

\begin{assumption}\label{a:postvty}
	Let $q(\alpha):= s(\alpha)/\lambda(\alpha) $, then $\bar{q}:=q(\bar{\alpha})=\bar{s}/\bar{\lambda}$ where $\bar{s}:=s(\bar{\alpha})$,  $\bar{\lambda}:=\lambda(\bar{\alpha})$. Suppose
	\begin{equation}
		\bar{s} \lesssim \beta(\tau), \qquad  \lambda^\prime (\bar{\alpha}) \partial_\tau \beta(\tau)
		\lesssim 1 \AND  \frac{\bar{s}}{\tau}  \lambda^\prime (\bar{\alpha})
		\lesssim 1, \label{e:B0bd}
	\end{equation}
	and one of the following two conditions holds,
	\begin{enumerate}
		\item If there is a positive constant $\hat{\delta}$ given by \eqref{e:hdel}, such that $\beta(\tau)$ is bounded by
		\begin{align}\label{e:btest}
		 \frac{1}{C^*\hat{\delta}} \tau \leq \beta(\tau) \leq \frac{1}{C^*\hat{\delta}} \sqrt{\tau}
		\AND
		\chi(\tau):= \tau\partial_\tau \ln \beta(\tau)\geq 0,
		\end{align}
		where
		$	C^*:=\bigl(\sqrt{\frac{3}{\Lambda}}+1\bigr)^{-1}\bigl(\frac{9}{4\Lambda^2}+2\bigr)^{-\frac{1}{2}}$
		and $\chi(\tau)$ satisfies
		\begin{gather}
		1-3\bar{s}^2  \geq  \chi(\tau) +\hat{\delta}  \label{e:assbg2} 	
		\intertext{and}
		\frac{1}{3} \chi(\tau)  + \frac{1}{\hat{\delta}} \geq    q^\prime(\bar{\alpha})\geq  \frac{1}{3}  \chi(\tau)  +\hat{\delta}  \label{e:assbg1},
		\end{gather}
		for all $\tau\in[0,1]$.
		\item
		If $\beta\equiv $ constant$>0$, $s$ and one of the following cases happens
		\begin{enumerate}
			\item $q=\bar{q}$ and $ \hat{\delta}\leq 1-3\bar{s}^2\leq 1-3\hat{\delta}^2$ where $\hat{\delta}$ is given by \eqref{e:hdel};
			\item $q=\bar{q}$ and $ 1-3s^2 =0 $;
		\end{enumerate}
	for all $\tau\in[0,1]$.
	\end{enumerate}	
\end{assumption}

Furthermore, Assumption \ref{a:postvty}.$(1)$ and $(2)$ separate the perfect fluids into two classes according to their different proofs. We mention \textit{Fluids $(I)$} if Assumption \ref{a:postvty}.$(1)$ holds and the other one is \textit{Fluids $(II)$} which satisfy Assumption \ref{a:postvty}.$(2)$.

\begin{remark}
	With the help of that $\bar{s}^2\geq 0$, \eqref{e:btest} and \eqref{e:assbg2} imply
	\begin{equation}\label{e:chiest}
		0 \leq \chi(\tau) \leq 1-\hat{\delta}.
	\end{equation}
	In addition, \eqref{e:btest} and \eqref{e:assbg2} and Assumption \ref{a:postvty}.$(2)$, with the fact that $c_s(\rho)=s(\alpha)$, yield that
	\begin{equation}\label{e:srange}
		0 \leq \bar{s}^2 \leq \frac{1}{3} \AND 0 \leq \bar{c}_s^2:=c^2_s(\bar{\rho}) \leq \frac{1}{3}.
	\end{equation}
	By \eqref{e:btest} and \eqref{e:chiest}, we arrive at
	\begin{equation}\label{e:dtbt2}
		\bar{s} \partial_\tau \beta  \lesssim\beta \partial_\tau \beta  \lesssim \frac{\beta^2}{\tau} \lesssim 1 \AND \bar{s}\beta \lesssim \tau \quad (\text{i.e. }\bar{q}\beta \lesssim \tau)
	\end{equation}
	for $\tau\in[0,1]$.
\end{remark}

\begin{remark}
	We give \textit{three examples} in \S \ref{s:eplMF} which guarantee that the set of the fluids satisfying above assumptions is \textit{not empty}. These three examples are
	\begin{enumerate}
		\item \label{i:1} Fluids with the \textit{linear} equation of state  $p=K\rho\;(K\in(0,\frac{1}{3}])$. This is a standard fluid model in cosmology and the future stability has been well investigated in \cite{RodnianskiSpeck:2013,Speck2012,Oliynyk2016a,Liu2017,Liu2018}. Especially, \cite{Liu2017,Liu2018} also conclude the statements of cosmological Newtonian limits on large scale based on such a linear model of perfect fluids;
		\item \label{i:2}  \textit{Chaplygin gas} with the equation of state that
		 \begin{equation}
		 	p=-\frac{\Lambda^{1+\vartheta}}{(\rho+\Lambda)^{\vartheta}}+\Lambda \qquad \Bigl(\vartheta\in\Bigl(0,
		 	\sqrt{\frac{1}{3}}\Bigr)\Bigr) \label{e:Chapnew};
		 \end{equation}
		 is a revised version of Chaplygin gases for system \eqref{e:Ein1}--\eqref{e:Eu1} which is equivalent\footnote{In \cite{LeFloch2015a}, the authors studied the Einstein equation \begin{equation}\label{Ein-L}
		\tilde{G}^{\mu\nu}=\tilde{T}^{\mu\nu}\end{equation} with Chaplygin equation of state
			\begin{equation}\label{e:Chapori}
			\tilde{p}=-\frac{A}{\tilde{\rho}^{\vartheta}},
			\end{equation}
			where $\tilde{\rho}-\Lambda\geq0$. It is easy to check that
			equation \eqref{e:Ein1} is equivalent to \eqref{Ein-L} if we use the function transformation $\tilde{\rho}=\rho+\Lambda$ and
			 $A=\Lambda^{1+\vartheta}$. }  to the one studied in \cite{LeFloch2015a}.
The Chaplygin gas is a cosmological gas model for dark energy. In other words, it essentially plays the role of positive cosmological constant. \cite{LeFloch2015a} adopts the method of \cite{Oliynyk2016a} in deriving the long time stability of irrotational Chaplygin gas filled universe.
		 \item \label{i:3}  \textit{Polytropic gas} with  $p=K\rho^{\frac{n+1}{n}}\; (n\in(1,3))$. The polytropic gas is also a well-known fluid model in stellar system. However, there is no results known about polytropic fluids filled general relativistic universe to the best of our knowledge. In fact, this article, after proving the future stability of the fluids satisfying our assumptions, implies the corresponding results of polytropic ones.
	\end{enumerate}
 Hence, the fluid considered here is a decent model at least containing all these common models in astrophysics.
\end{remark}

\begin{remark}
	Assumption \ref{a:Maksym} is a generalization of the standard symmetrization condition of the transformation of the Makino-type variable (see, for example, \cite[Page $114$]{Brauer2014}).
\end{remark}

\begin{remark}
	Assumption \ref{a:tLip} implies that
\begin{align}\label{e:p-p}
\frac{\rho-\bar{\rho}}{\tau^2}= \tau^{\varsigma-2}\varrho\bigl(\tau, \beta^{-1}(\alpha-\bar{\alpha})\bigr) \AND \frac{p-\bar{p}}{\tau^2}= c_s^2\bigl(\rho_{K_4}\bigr)\tau^{\varsigma-2}\varrho\bigl(\tau, \beta^{-1}(\alpha-\bar{\alpha})\bigr),
\end{align}
which will be used in \eqref{e:t2rho} to make sure $\frac{\rho-\bar{\rho}}{\tau^2}$ and $\frac{p-\bar{p}}{\tau^2}$ are not $\tau$-singular terms in the remainder terms of the Einstein equation (see \eqref{e:t2rho} for details).
\end{remark}

\begin{remark}
Let us state an alternative but slightly stronger expression of 	Assumption \ref{a:tLip}.\eqref{e:rhopro} which is given by
	\begin{equation}
		|\mu'(\alpha)|\lesssim \frac{\tau^\varsigma}{\beta}, \qquad \varsigma\geq 2,  \label{e:111}
	\end{equation}
	for $\alpha\in I_\alpha$.  Noting the expansion of $\mu$,
	\begin{align*}
		\mu(\alpha)=\mu(\bar{\alpha})+\beta\mu'(\alpha_{K_8})\beta^{-1} (\alpha-\bar{\alpha})
	\end{align*}
	for some constant $K_8\in(0,1)$, we can, with the help of \eqref{e:111},  arrive at Assumption \ref{a:tLip}.\eqref{e:rhopro} readily.
\end{remark}

\begin{remark}
	Once we rewrite Einstein--Euler equations in terms of the singular symmetric hyperbolic formulations \eqref{e:model1} given in Appendix \ref{section:3.5}, then
	Assumption \ref{a:postvty}.$(1)$ and $(2)$ ensure the positivity of the matrix $\mathbf{B}$ of the singular term in the singular hyperbolic system \eqref{e:model1}. While Assumption \ref{a:postvty}.\eqref{e:B0bd} ensures that the coefficient matrix $B^0\in C^1([0,1],C^\infty(\mathbb{R}^M))$ in the model equation \eqref{e:model1}.
\end{remark}



\subsection{Conformal Einstein--Euler system} \label{s:cees}

The main tool of this article is  Oliynyk's conformal singular system which was first established by T. Oliynyk \cite{Oliynyk2016a}, then developed by \cite{Liu2017,Liu2018} to prove the cosmological Newtonian limits on large scales and applied by \cite{LeFloch2015a} to future stability of Chaplygin gas filled universe.
Instead of describing the spacetime in terms of physical metric  $\widetilde{g}$ directly, we turn to the conformal one
\begin{equation}\label{1.9}
g_{\mu\nu}=e^{-2\Phi}\widetilde{g}_{\mu\nu}, \quad (\text{i.e. } g^{\mu\nu}=e^{2\Phi}\widetilde{g}^{\mu\nu}),
\end{equation}
and the fluid four velocity governed by
\begin{equation}\label{4.29}
u^{\mu}=e^{\Phi}\tilde{u}^{\mu},
\end{equation}
where, throughout this article, we take explicitly
\begin{equation}\label{1.10}
\Phi=-\ln(\tau).
\end{equation}
In addition, by \eqref{e:nol1}, there is a normalization relation of the conformal four velocity,
\begin{equation}\label{e:nol2}
	u^\mu u_\mu=-1.
\end{equation}

Under the conformal transformation \eqref{1.9}--\eqref{1.10}, the conformal background metric becomes
\begin{equation*}\label{3.9}
\eta=-\frac{1}{\omega^{2}(\tau)}d\tau^{2}+\delta_{ij}dx^i d x^j,
\end{equation*}
and recalling identity
\begin{align*}
	\tilde{R}_{\mu\nu}-R_{\mu\nu}=-g_{\mu\nu}\Box \Phi - 2\nabla_\mu \nabla_\nu \Phi + 2(\nabla_\mu \nabla_\nu \Phi-|\nabla\Phi|^2_g g_{\mu\nu}),
\end{align*}
where $\nabla$ and $R_{\mu\nu}$ are the covariant derivative and the Ricci tensor of $g_{\mu\nu}$, respectively, $\Box=\nabla_\nu \nabla^\nu$ and $|\nabla\Phi|^2_g=g^{\mu\nu} \nabla_\mu \Phi \nabla_\nu \Phi$, and then
the equations \eqref{e:Ein1}--\eqref{e:Eu1} become \textit{conformal Einstein--Euler equations}, 
\begin{align}
G^{\mu\nu} =T^{\mu\nu}:=&e^{4\Phi}\widetilde{T}^{\mu\nu}-e^{2\Phi}\Lambda g^{\mu\nu}+2(\nabla^{\mu}\nabla^{\nu}\Phi-\nabla^{\mu}\Phi\nabla^{\nu}\Phi)-(2\Box_{g}\Phi+|\nabla\Phi|_{g}^{2})g^{\mu\nu},  \label{e:cEin1} \\
\nabla_{\mu}\widetilde{T}^{\mu\nu} =&-6\widetilde{T}^{\mu\nu}\nabla_{\mu}\Phi+g_{\kappa\lambda}\widetilde{T}^{\kappa\lambda}g^{\mu\nu}\nabla_{\mu}\Phi,  \label{e:cEu1}
\end{align}
where here and in the following, unless otherwise specified, we raise and lower all coordinate tensor indices using
the conformal metric $g_{\mu\nu}$. Note that \eqref{e:cEu1} can be derived due to the identity of the difference $\tilde{\Gamma}^\gamma_{\mu\nu}-
\Gamma^\gamma_{\mu\nu}$,
\begin{equation*} \label{Gammadif}
\tilde{\Gamma}^\gamma_{\mu\nu}-\Gamma^\gamma_{\mu\nu} =
g^{\gamma \alpha}\bigl(g_{\mu \alpha} \nabla_\nu\Phi
+ g_{\nu\alpha} \nabla_\mu\Phi - g_{\mu\nu} \nabla_\alpha\Phi \bigr).
\end{equation*}
Contracting the free indices of \eqref{e:cEin1} gives $R=4\Lambda-T$,
where $T=g_{\mu\nu}T^{\mu\nu}$ and $R$ is
the Ricci scalar of the conformal metric. Using this and the definition $G^{\mu\nu} =
R^{\mu\nu}-\frac{1}{2}R g^{\mu\nu}$ of the Einstein tensor, we can write \eqref{e:cEin1} as
\begin{align}\label{3.11}
-2R^{\mu\nu} =&-4\nabla^{\mu}\nabla^{\nu}\Phi+4\nabla^{\mu}\Phi\nabla^{\nu}\Phi
-2[\Box_{g}\Phi+2|\nabla\Phi|^{2}_{g} \notag \\
&+\Bigl(\frac{\rho-p}{2}+\Lambda\Bigr)g^{\mu\nu}
-2e^{2\Phi}\left(\rho+p\right)u^{\mu}u^{\nu},
\end{align}
which we call as the \textit{conformal Einstein equations}.

\subsection{The wave gauge}\label{section:3.1}
In order to rewrite the conformal Einstein equation into a hyperbolic system, wave gauge is a useful technique. In this section, we introduce the wave gauge in the spirit of \cite{Oliynyk2016a,Liu2017,Liu2018,LeFloch2015a} which is the conformal version of the original one used in \cite{Ringstroem2008,Speck2012,RodnianskiSpeck:2013}.
Straightforward calculations show that the Christoffel
symbols, contracted Christoffels and Ricci tensors are
\begin{align*}
\gamma^{\lambda}_{\mu\nu}=&\frac{1}{2}\eta^{00}(\partial_{\tau}\eta_{00})\delta_{0}^{\lambda}\delta_{\mu}^{0}\delta_{\nu}^{0}
=-\frac{\partial_{\tau}\omega}{\omega}\delta_{0}^{\lambda}\delta_{\mu}^{0}\delta_{\nu}^{0}, \quad \gamma^{\mu}=-\frac{\Omega}{\tau}\delta^{\mu}_{0}
\intertext{and}
\mathcal{R}_{\mu\nu}=&\partial_{\alpha}\gamma^{\alpha}_{\mu\nu}-\partial_{\mu}\gamma^{\alpha}_{\alpha \nu}+\gamma^{\alpha}_{\alpha \lambda}\gamma^{\lambda}_{\mu\nu}-\gamma^{\alpha}_{\mu\lambda}\gamma^{\alpha}_{\alpha \nu}=0.
\end{align*}
where for the simplicity of notations, we denote
\begin{equation}\label{e:Omeg}
\Omega(\tau) =-\tau \omega \partial_\tau \omega.
\end{equation}

Define the wave coordinates as
\begin{align}\label{E:WAVEGAUGE}
Z^\mu = 0,
\end{align}
where
\begin{equation} \label{Zdef}
Z^\mu = X^\mu+Y^\mu
\end{equation}
with
\begin{align}
X^\mu&:= \Gamma^\mu
- \gamma^\mu =- \partial_\nu g^{\mu\nu}
+\frac{1}{2} g^{\mu\sigma} g_{\alpha\beta} \partial_\sigma g^{\alpha\beta}
+\frac{\Omega}{\tau}\delta^\mu_0 \qquad  \bigl( \Gamma^\mu= g^{\sigma\nu} \Gamma^\mu_{\sigma\nu}\bigr)
\label{E:XY}
\intertext{and}
 Y^\mu&:= -2( g^{\mu\nu}-\eta^{\mu\nu}) \nabla_\nu\Phi= -2 g^{\mu\nu}\nabla_\nu\Phi+2\frac{\omega^2}{\tau} \delta^\mu_0. \label{Ydef}
\end{align}
Then,
\begin{equation}\label{4.5}
Z^{\mu}=X^{\mu}+Y^{\mu}=\Gamma^{\mu}+\frac{2}{\tau}\left(g^{\mu0}+\Bigl(\omega^{2}+\frac{\Omega}{2}\Bigr)
\delta^{\mu}_{0}\right)= \Gamma^{\mu}+\frac{2}{\tau}\left(g^{\mu0}+\frac{\psi(\tau)}{3}\delta^{\mu}_{0}\right)=0,
\end{equation}
where we, for simplicity, denote
\begin{equation}\label{e:lamtau}
	\frac{\psi(\tau)}{3}:=\omega^{2}+\frac{\Omega}{2}.
\end{equation}

\begin{remark}\label{remark:3.1}
	A well known result from Zengino\u{g}lu \cite{Zenginoglu2008} implies if initially $Z^{\mu}=0$, then $Z^{\mu}\equiv0$ in the whole evolution of the Einstein--Euler system.

\end{remark}

\subsection{Transformed field variables}

The gravitational and matter field variables, such as $\{g^{\mu\nu}(x^\alpha), \rho(x^\alpha), u^\mu(x^\alpha)\}$, as they stand for, are not suitable
for establishing the global existence of solutions.
In order to obtain suitable variables, we employ the following
field variables, which are adopted first by \cite{Oliynyk2016a} and then used in \cite{Liu2017,Liu2018,LeFloch2015a} via their variations.

Define the densitized three-metric
$
\textbf{g}^{ij}=\det(\check{g}_{lm})^{\frac{1}{3}}g^{ij},
$
where
$
\check{g}_{lm}=(g^{lm})^{-1},
$
and introduce the variable
\begin{equation}\label{e:q}
\textbf{q}=g^{00}-\eta^{00}+\frac{\eta^{00}}{3}\ln(\det(g^{ij})).
\end{equation}
Then let
\begin{eqnarray} \label{4.12}
\textbf{u}^{0\nu}&=&\frac{g^{0\nu}-\eta^{0\nu}}{2\tau},\\\label{4.13}
\textbf{u}_{0}^{0\nu}&=&\partial_{\tau}(g^{0\nu}-\eta^{0\nu})-\frac{3(g^{0\nu}-\eta^{0\nu})}{2\tau},\\\label{4.14}
\textbf{u}_{i}^{0\nu}&=&\partial_{i}(g^{0\nu}-\eta^{0\nu}),\\\label{4.15}
\textbf{u}^{ij}&=&\textbf{g}^{ij}-\delta^{ij},\\\label{4.16}
\textbf{u}_{\mu}^{ij}&=&\partial_{\mu}\textbf{g}^{ij},\\\label{4.17}
\textbf{u}&=&\textbf{q},\\\label{4.18}
\textbf{u}_{\mu}&=&\partial_{\mu}\textbf{q}. \\
\label{e:densty}\alpha &=&\beta(\tau)\zeta,\\
\label{e:densty2}\bar{\alpha} &=&\beta(\tau)\bar{\zeta},\\
\label{e:velcty}u^{i}&=&\beta(\tau)v^{i}, \\
\label{e:denstydif}\delta\zeta&=& \zeta-\bar{\zeta}, 
\end{eqnarray}
where $\beta(\tau)$ comes from Assumption \ref{a:tLip}.

\subsection{Initial data}\label{remark:1.2}
It is well known that the initial data for the conformal Einstein--Euler equations cannot be chosen freely on the initial hypersurface
\begin{equation*}
\Sigma_{1} = \{1\}\times \mathbb{T}^3 \subset \mathfrak{M}=(0,1]\times \mathbb{T}^3 .
\end{equation*}
Indeed, a number of constraints, which we can separate into gravitational, gauge and velocity normalization, must be satisfied on
$\Sigma_{1}$. In specific,
the initial data are governed by the constraint equations which are essentially an elliptic system. We have to specify some part of the data, then the other data will be derived from these free ones via constraints. Let
\begin{align*}
	g^{\mu\nu}|_{\tau=1} =&g_{0}^{\mu\nu}(x),\quad \partial_{\tau}g^{\mu\nu}|_{\tau=1}=g_{1}^{\mu\nu}(x),\quad \rho|_{\tau=1}=\rho_0(x),\quad v^{\alpha}|_{\tau=1}=\nu^{\alpha}(x).
\end{align*}
Above initial data set $(g_{0}^{\mu\nu}(x),g_{1}^{\mu\nu}(x),\rho_0(x),\nu^{\alpha}(x))$ can not be chosen arbitrarily. They must satisfy the Gauss-Codazzi equations, which are equivalent to $(G^{\mu0}-T^{\mu0})|_{\tau=1}=0$. Moreover, they also satisfy the wave coordinates condition $Z^{\mu}|_{\tau=1}=0$, the precise definition of $Z^{\mu}$ can be found in Section \ref{section:3.1}.

There are a number of distinct methods available to solve these constraint equations. However, we do not intent to state or prove the exact initialization theorem in the current article. One can always use the similar method in \cite{Oliynyk2009a,Oliynyk2009b,Liu2017,Liu2018} which is an adaptation of the method introduced by Lottermoser in \cite{Lottermoser1992} to recover the statements and proofs. The data exist but may not be uniquely selected. In this article, we \textit{assume} the data have already been selected properly and only focus on the further evolutions as $\tau\searrow 0$.

\subsection{Main Theorem} \label{section:1.3}

With above notations, our main result is stated here and the proof will be given in \S\ref{s:theorem:1.4}.
\begin{theorem}\label{theorem:1.4}
Suppose $k\in \mathbb{Z}_{\geq3}$, $\Lambda>0$,  $g_{0}^{\mu\nu} \in H^{k+1}(\mathbb T^{3})$, $g_{1}^{\mu\nu} ,\,\rho_0 ,\,\nu^{\alpha} \in H^{k}(\mathbb T^{3})$, $\rho_0>0	$ for all $x\in\mathbb T^{3}$ and the unknowns are determined by the data on the initial hypersurface that
\begin{equation}\label{e:data}
	(g^{\mu\nu},\partial_\tau g^{\mu\nu}, \rho, u^i)|_{\tau=1}=(g_{0}^{\mu\nu},g_{1}^{\mu\nu},\rho_0,\nu^{\alpha})
\end{equation}
which solve the constraint equations
\begin{equation*}
	(G^{0\mu}-T^{0\mu})|_{\tau=1}=0 \AND Z^\mu|_{\tau=1}=0.
\end{equation*}
The fluids for the Einstein--Euler system satisfy Assumption \ref{a:Maksym}--Assumption \ref{a:postvty}.
Then there exists a constant\footnote{This constant $\sigma$ can be considered small eventually from the proof of this Theorem.  } $\sigma>0$, such that if
$$
\|g_{0}^{\mu\nu}-\eta^{\mu\nu}(1)\|_{H^{k+1}}+\|g_{1}^{\mu\nu}-\partial_{\tau}\eta^{\mu\nu}(1)\|_{H^{k}}+\|\rho_0 -\bar{\rho}(1)\|_{H^{k}}+
\|\nu^{i} \|_{H^{k}}<\sigma,
$$
there exists a unique classical solution $g^{\mu\nu}  \in C^{2}((0,1]\times\mathbb T^{3})$, $\rho, \, v^i\in C^{1}((0,1]\times\mathbb T^{3})$ to the conformal Einstein--Euler system  \eqref{e:cEin1}--\eqref{e:cEu1} that satisfies the initial data \eqref{e:data}, the wave gauge $Z^\mu=0$ in $(0,1]\times \mathbb{T}^3$ and the following regularity conditions
\begin{align}
(g^{\mu\nu},u^{\mu}, \rho )\in \bigcap_{\ell=0}^2 C^{\ell}((T_1,1],H^{k+1-\ell}(\mathbb{T}^3 ))
\times \bigcap_{\ell=0}^1 &  C^{\ell}((T_1,1],H^{k-\ell}(\mathbb{T}^3))  \notag \\
& \times \bigcap_{\ell=0}^1  C^{\ell}((T_1,1],H^{k-\ell}(\mathbb{T}^3)), \label{e:gvrred}
\end{align}
and the estimates that
\begin{align*}
	\|g^{\mu\nu}(\tau)-\eta^{\mu\nu}(\tau)\|_{H^{k+1}}+\|\partial_{\kappa}& g^{\mu\nu}(\tau)-\partial_{\kappa}\eta^{\mu\nu}(\tau)\|_{H^{k}} \notag \\
	&+\|\rho(\tau)-\bar{\rho}(\tau)\|_{H^{k}}+\|u^{i}(\tau)\|_{H^{k}}\lesssim\sigma.
\end{align*}
\end{theorem}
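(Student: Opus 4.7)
The plan is to reduce the conformal Einstein--Euler system \eqref{e:cEin1}--\eqref{e:cEu1} to an Oliynyk-type singular symmetric hyperbolic system of the model form \eqref{e:model1}, and then to apply the Fuchsian global existence machinery of \cite{Oliynyk2016a}. First I would impose the wave gauge \eqref{E:WAVEGAUGE}--\eqref{4.5} to convert \eqref{3.11} into a reduced quasilinear wave system for $g^{\mu\nu}$, using Remark \ref{remark:3.1} to guarantee that $Z^\mu = 0$ propagates from the initial slice. Next I would symmetrize the Euler part via the Makino change of variable $\alpha = \mu^{-1}(\rho)$ from Assumption \ref{a:Maksym}: in the variable $\alpha$, the quantity $\lambda(\alpha)$ defined in \eqref{e:Maksym} is uniformly bounded away from zero and infinity, which is precisely what is needed for a non-degenerate symmetrizer of the fluid system even when $c_s(\rho)$ degenerates at $\rho = 0$ (as happens for the polytropic case \ref{i:3}). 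Passing to the centered, rescaled variables \eqref{4.12}--\eqref{e:denstydif}, the combined system takes the form
\[
B^0(\tau,U)\,\partial_\tau U + B^i(\tau,U)\,\partial_i U \;=\; \tfrac{1}{\tau}\mathbf{B}(\tau,U)\,U + F(\tau,U),
\]
with a singular $\tau^{-1}$ source capturing the cosmological expansion.

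The core task is then to check the structural hypotheses of the Fuchsian theorem. Condition (i), that $B^0$ and $B^i$ are symmetric with $B^0 \in C^1([0,1], C^\infty)$ uniformly positive definite, follows from $\lambda\in[\hat\delta,1/\hat\delta]$ together with the bounds \eqref{e:B0bd}, the latter being the reason we require $\bar s \lesssim \beta$, $\lambda'(\bar\alpha)\partial_\tau\beta\lesssim 1$ and $\bar s\lambda'(\bar\alpha)/\tau \lesssim 1$. Condition (ii), regularity of the remainders at $\tau = 0$, is supplied by Assumption \ref{a:tLip}: the factorization \eqref{e:rhopro} with $\varsigma\ge 2$ implies \eqref{e:p-p}, so that the quotients $(\rho-\bar\rho)/\tau^2$ and $(p-\bar p)/\tau^2$ appearing when one linearizes \eqref{3.11} around the background are not singular but lie in the calligraphic remainder class of \S\ref{remainder}. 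The estimates \eqref{e:dtbt2} furnish the remaining bounds needed to control the background-dependent coefficients coming from $\beta, \bar s, \bar\lambda$.

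The main obstacle, and the part where the novelty of the Makino-type hypothesis lies, is condition (iii): non-negativity of the symmetric part of $\mathbf{B}$ on the full space, plus strict positivity on a subspace conjugate to the non-trivial modes. This is precisely the role of Assumption \ref{a:postvty} and its bifurcation into cases $(1)$ and $(2)$. For Makino-type $(1)$ fluids, $\mathbf{B}$ has a block involving the coefficients $1-3\bar s^2$, $\chi(\tau)$ and $q'(\bar\alpha)$; the sandwich inequality \eqref{e:assbg1} together with \eqref{e:assbg2} is designed so that, after completing the square against the off-diagonal blocks controlled by $\Lambda$ via $\hat\delta < \Lambda/(3+\Lambda)(1+\sqrt{3/\Lambda})$ from \eqref{e:hdel}, $\mathbf{B}$ becomes uniformly positive semidefinite, with $\hat\delta$-sized positive eigenvalues on the relevant subspace. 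For Makino-type $(2)$ fluids, $\beta\equiv\text{const}$ and $q\equiv\bar q$ decouple several rows of $\mathbf{B}$; positivity then reduces to the algebraic condition $1-3\bar s^2 > 0$ (case (a), handled directly) or to $1-3s^2 = 0$ (case (b), in which the corresponding mode is transported rather than damped and is analyzed using the Riccati-type structure of the momentum equation instead of the Fuchsian estimate). The boundary values of $\hat\delta$ in \eqref{e:hdel} are dictated by exactly the margin required to dominate the $\Lambda$-dependent cross terms in both verifications.

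Once these three conditions are in place, I would close the argument by a standard bootstrap. Local existence on an interval $(\tau_\ast,1]$ with norm bounded by $C\sigma$ follows from the symmetric hyperbolic local theory applied to the reduced system with data \eqref{e:data}; the Fuchsian energy estimate applied to $U$ yields a $\tau$-uniform $H^k$-bound of order $\sigma$ on the entire interval, allowing continuation to $\tau\searrow 0$ and producing the regularity claimed in \eqref{e:gvrred}. Undoing the rescalings \eqref{e:densty}--\eqref{e:velcty} and using \eqref{e:btest} (or $\beta$ constant in the type $(2)$ case) transfers the bound on $(\zeta-\bar\zeta, v^i)$ to the announced estimate on $\|\rho-\bar\rho\|_{H^k}+\|u^i\|_{H^k}$. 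Finally, propagation of the Gauss--Codazzi constraints $G^{0\mu}-T^{0\mu}=0$ and of $Z^\mu = 0$ from their initial validity follows from the contracted Bianchi identity together with Remark \ref{remark:3.1}, so the solution of the reduced system is in fact a solution of the full conformal Einstein--Euler system, completing the proof.
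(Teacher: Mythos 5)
Your overall strategy coincides with the paper's: impose the wave gauge, symmetrize the Euler equations with the Makino variable $\alpha=\mu^{-1}(\rho)$, pass to the rescaled variables, verify the hypotheses of the singular symmetric hyperbolic model \eqref{e:model1}, and close with the Fuchsian energy estimate, undoing the rescalings and propagating the constraints at the end. However, there is a genuine gap at the step where you assert that ``passing to the centered, rescaled variables \eqref{4.12}--\eqref{e:denstydif}, the combined system takes the form'' of the model equation. It does not: after the rescaling $u^i=\beta(\tau)v^i$ (forced on you by the need to keep the system symmetric once $\delta\zeta$ replaces $\alpha-\bar\alpha$), the source term $S_i$ in \eqref{Euler-final1a}--\eqref{Euler-final1b} contains $\tau$-singular couplings to the gravitational variables of the type $\frac{1}{\beta(\tau)}\bigl((1+6\bar s^2)\textbf{u}^{0j}+\textbf{u}^{0j}_0\bigr)$ and $\frac{1}{\beta(\tau)}\textbf{u}^{00}_i$, see \eqref{e:Si}, and these do not sit inside the allowed singular structure $\frac{1}{\tau}\textbf{B}\textbf{P}u$ of Theorem \ref{pro:3.16}. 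The paper's Step $4$ — the further change of velocity variable $\textbf{v}^k=v^k-Ag^{0k}=v^k-2\tau A\,\textbf{u}^{0k}$ with $A=-3\bar s^2/(\omega\beta)$ from \eqref{e:NewVar}--\eqref{A} — is precisely what recombines $\textbf{u}^{0j}$ and $\textbf{u}^{0j}_0$ into the coefficient $(1-3\bar s^2)(\textbf{u}^{0j}_0+\textbf{u}^{0j})$ that can be absorbed into the blocks $\textbf{E}_0,\textbf{E}_q$ of \eqref{symmetric-Euler} and then handled by the block-positivity lemma. Your proposal contains no substitute for this step, so the reduction you describe for Makino-type $(1)$ fluids would fail as written.

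A second gap concerns Makino-type $(2)$ fluids. The paper must change the fluid velocity variable from $u^j$ to $u_q=g_{q0}u^0+g_{qi}u^i$ (see \eqref{e:u-q} and \eqref{lower-case}) precisely so that Condition \eqref{c:6}, $\textbf{P}^\perp B^0(t,\textbf{P}^\perp u)\textbf{P}=\textbf{P}B^0(t,\textbf{P}^\perp u)\textbf{P}^\perp=0$, holds; with $u^j$ retained as the variable this condition is violated, as the paper notes in \S\ref{section:3.3a}. Your proposal never addresses Condition \eqref{c:6} for this case. Moreover, your treatment of case $(2)(b)$ ($1-3s^2\equiv 0$) via ``the Riccati-type structure of the momentum equation instead of the Fuchsian estimate'' does not match what is required and is not developed enough to be checkable: in the paper this case stays entirely within the same Fuchsian framework, simply by taking the projection $\hat{\textbf{P}}^\dagger\equiv 0$ in \eqref{e:case1b}, after which Conditions \eqref{c:5}--\eqref{c:7} are verified directly and the undamped fluid modes are still controlled by the same energy estimate of Theorem \ref{pro:3.16}. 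The remaining ingredients of your outline (regularity of the remainders via $\varsigma\ge 2$, positivity of $\textbf{B}$ via Assumption \ref{a:postvty} and the $\hat\delta$-margins in \eqref{e:hdel}, local existence plus continuation, and recovery of the physical estimates) do agree with the paper, but the two missing variable changes above are essential ideas of the proof, not routine details.
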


\begin{remark}
	We do not include the asymptotic behaviors of the solutions in this article, but they can be derived in a similar way to \cite{Oliynyk2016a} involving some calculations of decay exponents.
\end{remark}

\subsection{Prior and related work}
\label{section:1.2}

The stability problems of certain exact solutions to Einstein--matter systems are  important in mathematical general relativity and there are a lot of works related to them. We only mention some of them which directly related to our current article, readers can find more in these references. First, a  well known and groundbreaking work by D. Christodoulou and S. Klainerman is the stability of Minkowski spacetime \cite{Christodoulou1993} as a solution to the Einstein-vacuum equations. Another remarkable approach based on the wave coordinates was given by  H. Lindblad and I. Rodnianski \cite{Lindblad2005,Lindblad2010}.

To serve our purpose, we turn to  the fully nonlinear future stability of Friedmann-Lema\^{\i}tre-Robertson-Walker (FLRW) solutions with the positive cosmological constant which has been well studied during this decade. H. Ringstr\"{o}m \cite{Ringstroem2008} first investigated the future global non-linear stability in
the case of Einstein's equations coupled with a non-linear scalar field $\widetilde{T}^{\mu\nu}=\widetilde{\partial}^{\mu}\Psi\widetilde{\partial}^{\nu}\Psi-[\frac{1}{2}\widetilde{g}_{\mu\nu}\widetilde{\partial}^{\mu}\Psi\widetilde{\partial}^{\nu}\Psi+V(\Psi)]\widetilde{g}^{\mu\nu}$ where, under the assumption that $V(0)>0,\, V^{'}(0)=0,\,V^{''}(0)>0$, $V(\Psi)$ plays the role of the positive cosmological constant which appears in the most of the following works. The main observation of this paper is two-fold: First, the Einstein-nonlinear scalar field system can be formulated as a system of nonlinear wave equations with the help of generalized wave coordinates; Second, the problem under considerations describes the accelerated expansion of the universe, and the accelerated expansion provides dispersive terms, which lead to exponential decay for solutions. Inspired by H. Ringstr\"{o}m's work, I. Rodnianski and J. Speck \cite{RodnianskiSpeck:2013} established the future non-linear stability of these FLRW solutions with positive cosmological constant and linear equation of state under the condition $0 < K < 1/3$ and the assumption of zero fluid vorticity. After that, M. Had\v{z}i\'c and J. Speck \cite{Hadzic2015} and J. Speck \cite{Speck2012} answered that this future non-linear stability result remains true
for fluids with non-zero vorticity and also for the equation of state $p = 0$. By employing the conformal method developed by H. Friedrich \cite{Friedrich1986,Friedrich1991}, C. L\"ubble and J. A. V. Kroon \cite{Luebbe2013} proved the above question for the equation of state with parameter values $K = 1/3$ that is the pure radiation universe. After these, T. Oliynyk \cite{Oliynyk2016a} gave an alternative proof for non-linear future stability problems of FLRW solutions based on conformal singular hyperbolic formulations of Einstein--Euler equations, a completely different method comparing with above works, which provides the basic tool of the current article. 
One  advantage of the this method is that, under a conformal transformation,  by choosing the conformal factor and the source term of the wave gauge and variables judiciously, the whole Einstein--Euler system can be turned into a symmetric hyperbolic system with singular in time terms (with ``good'' sign) and solutions defined on finite interval of time. By a variation of standard energy estimates, one can then get the global nonlinear stability of a family of  FLRW solutions and establish the asymptotic behavior of perturbed solutions in the far future. Finally, we point out  that, in the regime $K>\frac{1}{3}$,   \cite{Rendall2004} has found some evidence for instability using formal expansions. Recently, T. Oliynyk \cite{Oliynyk2020} studied the relativistic perfect fluids with linear equation of state on exponentially expanding FLRW spacetimes, and proved the future stability of nonlinear perturbations of a class of homogeneous solutions when $1/3<K<1/2$ under small initial data hypothesis.

Inspired by the structure of the conformal singular hyperbolic equations in \cite{Oliynyk2016a}, C. Liu and T. Oliynyk \cite{Liu2017,Liu2018,LiuThesis} answered a fundamental question:\textit{ On what scales can Newtonian cosmological simulations be trusted to approximate realistic relativistic cosmologies?} They investigated the fully nonlinear long time behavior of our universe based on a small perturbation of the FLRW metric with perfect fluid of linear equation of state and positive cosmological constant, and rigorously proved the errors of solutions between Newtonian gravity (governed by Poisson--Euler equations) and general relativity (governed by Einstein--Euler equations) are well-controlled for the long time evolution in future direction in suitable norms by judicious initial data selections. That is, they established the existence of $1$-parameter families of $\epsilon$-dependent solutions to the Einstein--Euler equations
with a positive cosmological constant $\Lambda >0$ and a linear equation of state $p=\epsilon^2 K \rho$, $0<K\leq 1/3$, for the parameter
values $0<\epsilon < \epsilon_0$. These solutions
exist globally to the future, converge as $\epsilon \searrow 0$ to solutions of the cosmological Poisson--Euler equations of Newtonian gravity,
and are inhomogeneous nonlinear perturbations of FLRW fluid solutions. This cosmological Newtonian limit problem have been investigated for both the isolated version (periodic universe) of cosmology \cite{Liu2017} and the cosmological large space-time scales (multi-body version of universe) \cite{Liu2018}. These two results answered the feasibility of Newtonian approximations used in astrophysics for 100 years old with rigid mathematical proofs (astrophysicists use this approximation as a hypothesis for a century) and intuitively, this question can be viewed as a stability problem of general relativistic cosmological solution around the Newtonian gravity in large scales.

We emphasize that the above results are obtained by assuming linear equation of state of the perfect fluids. P. LeFloch and C. Wei \cite{LeFloch2015a} investigated the Einstein-Chaplygin fluids with equation of state $p=-\frac{A}{\rho^{\vartheta}}$ with $\Lambda=0$. This model is widely investigated by physicists as a candidate for the dark energy. In \cite{LeFloch2015a}, they only considered the irrotational fluids, such that the matter is indeed a scalar field, under which the mechanism for the accelerated expansion of the spacetime is the negativity of the pressure (an analogue of positive cosmological constant). The main idea comes from T. Oliynyk's conformal singular hyperbolic equations \cite{Oliynyk2016a} and the main difference is that they choose different conformal factor to make sure that the irrotational Euler equations are regular. However, this technique can not be applied to the general system due to the normalization condition $\tilde{u}^{\mu}\tilde{u}_{\mu}=-1$. Thus, \textit{one of the main motivations} of this article is to investigate the evolution of the general Chaplygin fluids in accelerated expanding spacetime. 

Recently, Beyer, Oliynyk and Olvera-Santamaría \cite{Beyer2019} generalized the global existence theory of this singular formulations and applied this theory to semilinear wave equations near spatial infinity on Minkowski and Schwarzschild spacetimes, and to the relativistic Euler equations with Gowdy symmetry on Kasner spacetimes.

It is worth noting that in $1994$, U. Brauer, A. Rendall and O. Reula \cite{Brauer1994} proved a fully non-linear future stability problem on the Newtonian cosmological model, which laid the foundation on the corresponding stability problem in the Newtonian setting. By the classical energy estimate, they showed the fluids with equation of state $p=K\rho^{\frac{n+1}{n}}$ is globally stable in accelerated expanding universe when the initial data are small.

On the other direction, I. Rodnianski and J. Speck \cite{Rodnianski2018a,Rodnianski2018} has proven in the collapsing direction $t\searrow 0 $ the FLRW solution is globally nonlinearly stable under small perturbations of its initial data at $t=1$. They formulated their results in the constant mean curvature (CMC)-transported spatial coordinates gauge. Along this direction, we point out another two important works by L. Andersson and A. Rendall \cite{Andersson2001}, and H. Rinstr\"om \cite{Ringstroem2001}, respectively.

We also mention that there are a series of works about anti-de-Sitter spacetimes (that is, for $\Lambda<0$), for example, by Friedrich, Dafermos, Holzegel and Smulevici, Moschidis \cite{Friedrich1995,Dafermos2006,Holzegel2011,HOLZEGEL2012,Moschidis2018}, respectively and references cited therein.

\subsection{Overview of the method}
\subsubsection{Motivations of the perfect fluids}
We first introduce a new class of fluids, which includes several common and frequently used fluid models such as, by directly verifying the definition in the next section, fluids with linear equation of state, Chaplygin gases and polytropic gases, etc. As we have pointed out at the beginning of this paper, our \textit{direct purpose} of current article is to find out what types of fluids can guarantee the validity that the future non-linear stability of FLRW solutions to the Einstein--Euler system with a positive cosmological constant and try to identify as many fluids as possible. Once this type of fluids covers a large class of reasonable fluids, then the stability problem of FLRW metric is \textit{``stable'' with respect to equations of state of this class of fluids} in some sense. The assumptions on the perfect fluids come from the proof of the future non-linear stability of FLRW solutions to the Einstein--Euler system with a positive cosmological constant. We carefully give these assumptions to make sure the stability result holds and this class contains as many types of fluids as possible.

\subsubsection{Conformal singular formulations of Einstein equations}
The main tool of the proof of Theorem \ref{theorem:1.4} is  the conformal singular hyperbolic formulation of \cite{Oliynyk2016a}. The idea of writing conformal Einstein equations to this formulation belongs to \cite{Oliynyk2016a} and is based on the following three main ingredients:
\begin{enumerate}
	\item Conformal transformation with certain conformal factor $\Phi$ defined by \eqref{1.10}. This is reasonable because the conformal FLRW metric is close to the Minkowski metric which can simplify the geometry and calculations, and the conformal time coordinate $\tau\in(0,1]$ compatible with this conformal transformation compacifies the original time $t\in[0,\infty)$ to a finite interval, which \textit{converts a long time problem into a short finite time one}. The cost of this compactification of time is the equation becomes singular in time but singular in a ``good'' way.
	\item Wave gauge defined by \eqref{E:WAVEGAUGE}--\eqref{4.5}. This is not surprising, since this wave gauge comes from the standard one without source term via conformal transformations, and this new source term eliminates and adjusts the bad terms of the remainders to the suitable singular terms and regular terms consistent with the singular hyperbolic system \eqref{e:model1}.
	\item Good analyzable variables given by \eqref{e:q}--\eqref{e:denstydif} are the key part of writing Einstein equations into the appropriate singular formulations \eqref{e:model1}.
\end{enumerate}
By using these three ingredients, it is not hard to rewrite conformal Einstein system \eqref{3.11} into a symmetric hyperbolic system with $\tau$-singular term like  \eqref{e:model1}.

\subsubsection{Conformal singular formulations of Euler equations}
In order to rewrite the Euler equations as the target singular equation, more attention is required than the one with the linear equation of state or Chaplygin gases  due to the generality of this large class of fluids, and this provides the \textit{main innovations} of this article. Let us briefly point out the difficulties here and address the key ideas about how to overcome these difficulties. Because of differences between the structures of \textit{Fluids} $(I)$ and \textit{Fluids} $(II)$, we have to write them to the aimed singular system \eqref{e:model1}--\eqref{e:model2}  in different ways.

Let us first concentrate on the \textit{Fluids} $(I)$, and there are four steps to arrive at the expected formulations. The \textit{first step} is standard, by using the normalization relation of velocity $u^\mu u_\mu=-1$, Rendall's projection \eqref{e:Ldef} and reflection operator $M_{ki}$ defined in \eqref{e:Mdef} to write the conformal Euler equations \eqref{e:cEu1} to a symmetric hyperbolic system \eqref{Euler-original1}--\eqref{Euler-original2} (this formulation is widely used and can be found, for example, in \cite{Rendall1992,Oliynyk2015,Liu2017,Liu2018}). For the problems involving time away from $\tau=0$, this formulation is symmetric hyperbolic equation. However, when $\tau$ is close to $0$, the coefficient matrix will \textit{degenerate}, which \textit{destroys the structure of hyperbolicity} of this system. In specific, multiplying suitable factors, for example, $s^2/(\rho+p)$ on both sides of \eqref{Euler-original1} and $\rho+p$ on both sides of \eqref{Euler-original2}, then \eqref{Euler-original1}--\eqref{Euler-original2} become symmetric. However, from the behavior of the background density $\bar{\rho}\searrow 0$ and pressure $\bar{p}\searrow 0$, as $\tau\rightarrow 0$ (see \eqref{e:pest}--\eqref{e:bgrhoest1}), the expected behaviors of $\rho$ and $p$ due to small perturbation of initial data $\bar{\rho}(1)$ and $\bar{p}(1)$ are also $\rho\searrow 0$ and $p\searrow 0$ as $\tau\rightarrow 0$, which leads to a fact that some of elements of the coefficient matrix of above symmetric hyperbolic system tend to $0$ or $\infty$ as $\tau\rightarrow 0$. The degenerated coefficients when time is close to $0$ violate the Condition \eqref{c:5} in Appendix \ref{section:3.5}.

In \textit{step two}, the idea to overcome above difficulty is to generalize the idea of the non-degenerated symmetrization of Euler equations originated by Makino  \cite{Makino1986} who firstly came up this formulation for polytropic fluids in a compact set in Newtonian setting. We extract the key properties of his idea to write above degenerated equations to a non-degenerated hyperbolic system by introducing a new density variable $\alpha$ defined by Assumption \ref{a:Maksym}. However, this is still not enough to apply the theorem in Appendix \ref{section:3.5}, because the current variable $\alpha-\bar{\alpha}$ can not make sure some of the remainders in Einstein equation to be regular in $\tau$ (eventually, we need the complete Einstein--Euler system to be the target singular system in Appendix \ref{section:3.5}). In other words, The variable $\alpha-\bar{\alpha}$ in above system is not a suitable one in the singular hyperbolic system.

In order to eliminate the singular terms in Einstein equation, we choose, in \textit{step three}, $\delta\zeta$ as the variable describing the density instead of $\alpha-\bar{\alpha}$. With the help of Assumption \ref{a:tLip}.\eqref{e:rhopro}, the remainder terms including $(\rho-\bar{\rho})/\tau^2$ and $(p-\bar{p})/\tau^2$ in the Einstein equations are \textit{regular} in $\tau$ and analytic in $\delta\zeta$, see \eqref{e:t2rho}. 
 After changing $\alpha-\bar{\alpha}$ to $\delta\zeta$, $u^i$ have to been changed to $v^i$ defined by $u^{i}= \beta(\tau)v^{i}$ to make sure this system to be \textit{symmetric}. Hence, we have to change the variables of fluids from $(\alpha-\bar{\alpha},u^i)$ to $(\delta\zeta,v^i)$ to rewrite the equations. 

However, the process of rescaling the velocity $u^i$ to $v^i$ by $\beta$ brings a new singular term $\frac{\partial_{\tau}g^{0l}}{\beta(\tau)}$ to Euler equations. 
Although equations \eqref{Euler-final1a}--\eqref{Euler-final1b} seem to be consistent with the non-degenerated singular hyperbolic system given in Appendix \ref{section:3.5}, the remainders involves $\tau$-singular terms of $\textbf{u}^{00}_i$, $\textbf{u}^{0j}$ and $\textbf{u}^{0j}_0$ in a ``bad'' way in \eqref{e:Si}, which destroys the structure of the singular term in the system of Appendix \ref{section:3.5}. In order to conquer this difficulty, we introduce a new variable $
\textbf{v}^{k}=v^{k}-Ag^{0k}=v^{k}-2\tau A \textbf{u}^{0k}$,
where $
A= -\frac{3\bar{s}^2}{\omega\beta(\tau)}$.
This new variable adjusts the relations of $\textbf{u}^{0j}$ and $\textbf{u}^{0j}_0$ to a ``good'' form, which leads to the ``right'' formulation of the full set of Einstein--Euler system agreeing with the model \eqref{e:model1} in Appendix \ref{section:3.5}.

For \textit{Fluids} $(II)$, when $\beta\equiv $ constant, by \eqref{e:rhopro} in Assumption \ref{a:tLip}, we only need to  proceed the first and second steps as above \textit{Fluids} $(I)$, but give up Step $3$--$4$. However,
we have to change $u^j$ to $u_q$, otherwise Condition \eqref{c:6} in Appendix \ref{section:3.5} fails due to the degeneracy of $\textbf{BP}$. Since $\textbf{P}u$ is the velocity $u^j$, by letting $\textbf{P}u=0$ in $\textbf{P}^\perp B(t,u) \textbf{P}$ and noting $u_q=g_{q0} u^0+g_{qi} u^i$, then $\textbf{P}^\perp B(t,\textbf{P}^\perp u) \textbf{P}=\textbf{P} B(t,\textbf{P}^\perp u) \textbf{P}^\perp \neq 0$. A good expression of the Euler equations verifying Condition \eqref{c:6} in Appendix \ref{section:3.5} relies on the variable $u_q$ instead of $u^i$. We only need to rewrite the Euler equations in terms of variables $(\delta\zeta, u_q)$ further to obtain the system consistent with the model equation  \eqref{e:model1} in Appendix \ref{section:3.5}.


\subsection{Paper outline}
In \S \ref{S:MAK}, we first analyze some key asymptotic properties of the background solutions, and then verify that  several common and frequently used examples belong to the fluids satisfying our assumptions.

In \S \ref{section:2}, we employ the variables \eqref{e:q}--\eqref{e:denstydif} and the wave gauge \eqref{E:WAVEGAUGE}--\eqref{4.5} to write the conformal
Einstein system, given by \eqref{3.11}, as a symmetric hyperbolic system, see \eqref{4.26}--\eqref{4.28}, that
is singular in $\tau$. The method of this transformation comes from \cite{Oliynyk2016a} originally.

In \S \ref{section:3}, we further write conformal Euler equation \eqref{e:cEu1} as the non-degenerated singular symmetric hyperbolic system satisfying the conditions in the Appendix \ref{section:3.5} for \textit{Fluids} $(I)$ and $(II)$ fluids respectively. This section is the main innovation of this article and the goal of this section is to derive equations \eqref{symmetric-Euler} and \eqref{lower-case}. The key to achieve this goal is to choose the right variables. For \textit{Fluids} $(I)$, we propose a four-step process and for \textit{Fluids} $(II)$, we use the appropriate variable of velocity $u_i$ instead of $u^i$.

In \S \ref{s:theorem:1.4}, we complete the proof of Theorem \ref{theorem:1.4} by using the results from \S \ref{section:2} to \S \ref{section:3} to verify that all the
conditions from the model equation \eqref{e:model1} in Appendix \ref{section:3.5} hold for the formulation \eqref{symmetric-Euler1Ma1} and \eqref{symmetric-Euler1Ma2} of the conformal Einstein--Euler equations. This
allows us to apply Theorem \ref{pro:3.16} to obtain the desired conclusion.


In the Appendix \ref{section:3.5}, The slightly revised version of  Theorem of singular hyperbolic equations has been introduced and we point out how to revise the proof due to the revisions of the statement.


\section{Examples of the perfect fluids}\label{S:MAK}


Let us first derive some general properties of the background solutions which will play the role of the center of perturbations.

\subsection{Analysis of FLRW spacetimes}\label{section:2.3}

Einstein equations for the FLRW solutions reduce to the Friedman equations    	
\begin{align}
-3\omega^2-\Omega+\Bigl(\frac{\bar{\rho}-\bar{p}}{2}+\Lambda\Bigr)=&0  \label{3.18}
\intertext{and}
-6\Omega-6\omega^{2}+2\Lambda-(\bar{\rho}+3\bar{p})=&0 \label{3.19}
\end{align}
where we recall $\Omega(\tau) =-\tau \omega \partial_\tau \omega$ is defined by \eqref{e:Omeg}.
The background solution verifies the Euler equations that
\begin{align}\label{e:bgrhoeq}
\partial_0 \bar{\rho} =\frac{3}{\tau} (\bar{\rho} +\bar{p} ).
\end{align}
In fact, equations \eqref{3.18}, \eqref{3.19} and \eqref{e:bgrhoeq} form the Einstein--Euler system for the background FLRW solutions.

Let us estimate several crucial quantities which characterize the asymptotic behaviors of the background solutions. First note that \eqref{3.18} and \eqref{3.19} yield
\begin{equation}\label{3.20}
\omega^{2}=\frac{1}{3}(\Lambda+\bar{\rho}) \quad \text{and} \quad
\Omega=-\frac{\bar{\rho}+\bar{p}}{2}.
\end{equation}

Due to the fact that $p=p(\rho)= c_s^2(K_0\rho)\rho$ where $K_0\in (0,1)$, with te help of \eqref{e:srange}, we derive
\begin{equation}\label{e:pest}
0 \leq \bar{p} \leq \frac{1}{3} \bar{\rho}.
\end{equation}
Then integrating \eqref{e:bgrhoeq}, with the help of \eqref{e:pest} and \eqref{3.20}, yields
\begin{gather}
\tau^4\bar{\rho}(1)\leq \bar{\rho}(\tau) \leq \tau^3\bar{\rho}(1),
 \label{e:bgrhoest1} \\
 \frac{1}{3}\bar{\rho}(1)\tau^4\leq\omega^2-\frac{\Lambda}{3}\leq \frac{1}{3} \bar{\rho}(1)\tau^3, \label{e:bgrhoest4}  \\
-\frac{2}{3} \tau^3\bar{\rho}(1)  
\leq \Omega
\leq -\frac{1}{2} \tau^4\bar{\rho}(1) \label{e:bgrhoest3}
\end{gather}
and
\begin{align}
3\tau^3\bar{\rho}(1)
 \leq\partial_\tau \bar{\rho}
  \leq 4 \tau^2\bar{\rho}(1).   \label{e:bgrhoest2}
\end{align}
Moreover, recalling $\psi(\tau)$ in \eqref{e:lamtau}, let us estimate its time derivative which will be used in the derivations of reduced conformal Einstein equations. By \eqref{3.20}, we arrive at
\begin{align}\label{e:dtlam}
	\partial_\tau\psi(\tau)=\partial_\tau \Bigl(3\omega^2+\frac{3}{2}\Omega\Bigr) =\partial_\tau\bar{\rho}  -\frac{3}{4} \partial_\tau (\bar{\rho}+\bar{p})=\frac{1}{4}\partial_\tau\bar{\rho}  - \frac{3}{4}  \partial_\tau \bar{p}.
\end{align}
Note that
\begin{align}\label{e:dtbgp}
	\partial_\tau \bar{p}=\bar{c}_s^2  \partial_\tau \bar{\rho} \in \Bigl[0, \frac{1}{3}\partial_\tau \bar{\rho}\Bigr].
\end{align}
Gather \eqref{e:bgrhoest2}--\eqref{e:dtbgp} together, we arrive at
\begin{align}\label{e:dtlambda}
	0\leq \partial_\tau\psi(\tau) \leq \tau^2\bar{\rho}(1).
\end{align}
In addition, direct calculations show that
\begin{align}\label{e:dt2omg}
	\partial^2_\tau \omega=\frac{1}{\tau^2\omega}\frac{\bar{\rho}+\bar{p}}{2}\Bigl(\frac{\Omega}{\omega^2}-1\Bigr)+\frac{1}{\tau\omega}\frac{1}{2} (\partial_\tau \bar{\rho} +\partial_\tau \bar{p})
\end{align}
with the help of \eqref{e:bgrhoest1}--\eqref{e:dtlambda}, \eqref{e:dt2omg} implies that $\omega \in C^2([0,1])$.

\subsection{Examples of the target fluids}\label{s:eplMF}

In this section, we give three examples of the target fluids, which demonstrates that the set of the fluids considered in this paper is \textit{non-empty}.
\subsubsection{The polytropic gas}\label{s:pogas}
Suppose the equation of state of the polytropic gas is given by
\begin{equation}\label{e:eospo}
p=K\rho^{\frac{n+1}{n}}
\end{equation}
for $n\in(1,3)$. We also assume the initial homogeneous, isotropic density is bounded by
\begin{align}\label{e:denstrg}
	0<\bar{\rho}(1)\leq \frac{1}{(4Kn(n+1))^n} \biggl(2n\sqrt{\frac{1}{3}\Bigl(1-\frac{3-\epsilon}{2n}\Bigr)}\biggr)^{2n}.
\end{align}

We introduce the standard relationships between $\rho$ and $\alpha$ to define Makino density $\alpha$ which are (for details, see, for example, \cite{Oliynyk2008a,Rendall1992})
\begin{align}\label{e:polyra}
\rho=&\mu(\alpha)=\frac{1}{\bigl(4Kn(n+1)\bigr)^n}\alpha^{2n}  \\
\intertext{and}
\lambda=&\lambda(\alpha)=\Bigl(1+\frac{1}{4n(n+1)}\alpha^2\Bigr)^{-1}. \label{e:pollab}
\end{align}

Direct calculations imply that
\begin{equation}\label{e:pspol}
p=\frac{K}{(4Kn(n+1))^{n+1}}\alpha^{2(n+1)} \AND s=\frac{\alpha}{2n}.
\end{equation}
Furthermore,
\begin{align*}
\frac{d\mu(\alpha)}{d\alpha} = \frac{2n}{\bigl(4Kn(n+1)\bigr)^n}\alpha^{2n-1}=\frac{\lambda(\alpha)(\rho+p)}{s(\alpha)},
\end{align*}
which verifies Assumption \ref{a:Maksym}.

In order to verify Assumption \ref{a:tLip} and \ref{a:postvty}, we first calculate the background density $\bar{\alpha}$. Integrating \eqref{e:bgrhoeq} yields
\begin{align}\label{e:intbg}
\int^{\bar{\rho}(\tau)}_{\bar{\rho}(1)}\frac{d\xi}{\xi+p(\xi)}=\ln \tau^3.
\end{align}
Combining the equation of state \eqref{e:eospo}, transformation \eqref{e:polyra} and  \eqref{e:intbg} yields
\begin{align*}
2n \int^{\bar{\alpha}(\tau)}_{\bar{\alpha}(1)}\frac{dy}{y\bigl(1+\frac{1}{4n(n+1)}y^2\bigr)}=\ln \tau^3
\end{align*}
which, in turn, implies
\begin{align}\label{e:alpbg}
\bar{\alpha}(\tau)=\tau^{\frac{3}{2n}}Q(\tau) \in [0,  \bar{\alpha} (1)  ]
\end{align}
where
\begin{equation*}
	Q(\tau)=\biggl(\frac{1}{\bar{\alpha}^2(1)}+\frac{1}{4n(n+1)}-\frac{1}{4n(n+1)}\tau^{\frac{3}{n}}\biggr)^{-\frac{1}{2}}
\end{equation*}
for $\tau\in [0,1]$. Now let us continue to verify Assumptions \ref{a:tLip}--\ref{a:postvty}.

For every $n\in(1,3)$ and for every $\varepsilon \in (0,1]$ there is $n \in (\frac{3-\varepsilon}{2},3-\varepsilon]\cap (1,3)$, we take  $\varsigma=3-\varepsilon$,   $\beta(\tau)=(C^*\hat{\delta})^{-1} \tau^{(3-\varepsilon)/(2n)}\in C[0,1]\cap C^1(0,1]$ (where, we recall, $C^*$ and $\hat{\delta}$ are defined in Assumption \ref{a:postvty}.$(1)$ and \ref{a:Maksym}, respectively) and 
\begin{align}\label{e:vrhopl}
\varrho\bigl(\tau,y\bigr) 
=&\frac{(C^*\hat{\delta})^{-2n}}{\bigl(4Kn(n+1)\bigr)^n}\bigl((y+(C^*\hat{\delta}) \tau^{-\frac{3-\varepsilon}{2n}}\bar{\alpha})^{2n} -((C^*\hat{\delta}) \tau^{-\frac{3-\varepsilon}{2n}}\bar{\alpha})^{2n}\bigr),
\end{align}
then $\varrho\in C\bigl([0,1], C^\infty( \mathbb{R})\bigr)$, $\varrho(\tau,0)=0$, $\tau^{(3-\varepsilon)/(2n)}\gtrsim \tau$ and $\chi(\tau) =(3-\varepsilon)/(2n)$. Moreover, transformation \eqref{e:vrhopl} yields
\begin{align*}
\mu(\alpha)-\mu(\bar{\alpha})=&\tau^{3-\varepsilon}[\mu(\tau^{-(3-\varepsilon)/(2n)}\alpha)-\mu(\tau^{-(3-\varepsilon)/(2n)} \bar{\alpha})] \notag \\
=& \tau^{3-\varepsilon}\varrho \bigl(\tau,(C^*\hat{\delta})\tau^{-(3-\varepsilon)/(2n)}(\alpha-\bar{\alpha})\bigr),
\end{align*}
which verifies Assumption \ref{a:tLip}.


Equations \eqref{e:pollab}, \eqref{e:pspol} and \eqref{e:alpbg} lead to
\begin{align*}
\bar{\lambda}=\lambda(\bar{\alpha})=\Bigl(1+\frac{1}{4n(n+1)}\bar{\alpha}^2\Bigr)^{-1} \AND \bar{s}=s(\bar{\alpha})=\frac{\bar{\alpha}}{2n}.
\end{align*}
It is clear that $\bar{s} \lesssim \beta(\tau) $ by \eqref{e:alpbg}. Noting that $n\in (\frac{3-\varepsilon}{2},3-\varepsilon]$ leads to $\tau^\frac{3}{2n}<\tau^{\frac{6-\varepsilon}{2n}-1} \lesssim \tau^\frac{\epsilon}{2n}\lesssim 1$ and $\tau^{1+\frac{\epsilon}{n}}<\tau^{\frac{3}{n}-1}  \lesssim \tau^\frac{\epsilon}{n}\lesssim 1$, we calculate the quantity
\begin{align*}
  \lambda^\prime (\bar{\alpha}) \partial_\tau \beta(\tau)
=   -\frac{1}{2n(n+1)} \biggl( \frac{3-\varepsilon}{2n}\tau^{\frac{6-\varepsilon}{2n}-1}Q(\tau) \biggr) \Bigl(1+\frac{1}{4n(n+1)} \bar{\alpha}^2\Bigr)^{-2} (C^*\hat{\delta})^{-1}
\lesssim   1
\end{align*}
and
\begin{align*}
 \frac{\bar{s}}{\tau}   \lambda^\prime (\bar{\alpha})
=   -\frac{1}{2n(n+1)} \biggl(  \frac{\tau^{\frac{3}{n}-1}Q(\tau)^2}{2n}\biggr) \Bigl(1+\frac{1}{4n(n+1)} \bar{\alpha}^2\Bigr)^{-2}
\lesssim  1.
\end{align*}

Then, calculate quantity
\begin{align*}
\frac{1}{3}\chi(\tau)+ \frac{\varepsilon}{6n}=\frac{1}{2n} \leq q^\prime(\bar{\alpha}) =  
\frac{1}{2n}+\frac{3\bar{\alpha}^2}{8n^2(n+1)} \leq \frac{1}{3}\chi(\tau)+ \frac{\varepsilon}{6n}+\frac{3\bar{\alpha}^2(1)}{8n^2(n+1)} ,
\end{align*}
and noting $n\in (\frac{3-\varepsilon}{2},3-\varepsilon]$ implies $1-\frac{3-\varepsilon}{2n}>0$, then \eqref{e:denstrg} implies
\begin{align*}
0<\bar{\alpha}(1)< 2n\sqrt{\frac{1}{3}\bigl(1-\frac{3-\epsilon}{2n}\bigr)},
\end{align*}
which, in turn, yields
\begin{align*}
1-3\bar{s}^2-\chi(\tau) = & 1-\frac{3\bar{\alpha}^2}{4n^2}-\frac{3-\varepsilon}{2n} \geq  1-\frac{3\bar{\alpha}^2(1)}{4n^2}-\frac{3-\varepsilon}{2n} 
> \frac{3-\epsilon}{2n}-\frac{3-\varepsilon}{2n}=0.
\end{align*}
It is evident that \eqref{e:btest} holds and then,
Assumption \ref{a:postvty} can be concluded. We have verified all the Assumptions \ref{a:Maksym}--\ref{a:postvty} now and this means this type of \textit{polytropic fluids} belongs to \textit{Fluids} $(I)$.

\subsubsection{Fluids with the \textit{linear} equation of state} \label{s:linf}
Suppose the equation of state of this fluid is
\begin{equation}\label{e:eosln}
	  p=K\rho
\end{equation}
for $K\in(0,\frac{1}{3}]$. Using \eqref{e:eosln} and \eqref{e:bgrhoeq}, we arrive at
\begin{align*}
	\bar{\rho}=\bar{\rho}(1)\tau^{3(1+K)}.
\end{align*}
Let the set $\{\lambda(\alpha), \varrho, \varsigma, \beta(\tau)\}$ be
\begin{gather*}
\alpha=\mu^{-1}(\rho)=\int_{\bar{\rho}(1)}^{\rho}\frac{d\xi}{\xi+p(\xi)}=\frac{1}{1+K}\ln \frac{\rho}{\bar{\rho}(1)}, \quad \lambda(\alpha) \equiv \sqrt{K},\quad \beta(\tau)\equiv 1,\\
 \varsigma= 3(1+K)  \AND \varrho(\tau,y)=\bar{\rho}(1) [e^{(1+K)y}-1].
\end{gather*}
Then $\varrho\in C\bigl([0,1], C^\infty( \mathbb{R})\bigr)$ and $\varrho(\tau,0)=0$.
It is easy to calculate that
\begin{gather*}
\bar{\alpha}=\mu^{-1}(\bar{\rho})=\int_{\bar{\rho}(1)}^{\bar{\rho}}\frac{d\xi}{\xi+p(\xi)}=\frac{1}{1+K}\ln \frac{\bar{\rho}}{\bar{\rho}(1)}= \ln \tau^{3} , \\
	s(\alpha)\equiv \lambda(\alpha)
	\equiv\sqrt{K},
	\AND \mu(\alpha)
	=\bar{\rho}(1)  e^{(1+K)\alpha},
\end{gather*}

Direct calculations, with the help of $\bar{s}=s(\bar{\alpha})=\bar{\lambda}=\lambda(\bar{\alpha})= \sqrt{K}$, show that $\bar{s} \lesssim \beta(\tau)$, and
\begin{gather*}
	\frac{d\mu(\alpha)}{d\alpha}=(1+K)\mu(\alpha) =(1+K)\bar{\rho}(1)  e^{(1+K)\alpha}= \frac{\lambda(\alpha)(\rho+p)}{s(\alpha)},  \\
	\mu(\alpha)-\mu(\bar{\alpha})
	= \tau^{3(1+K)}(\bar{\rho}(1)  e^{(1+K)(\alpha-\bar{\alpha})}-\bar{\rho}(1))=\tau^{3(1+K)} \varrho(\tau,\alpha-\bar{\alpha} ) ,  \\
	\intertext{and}
   \lambda^\prime (\bar{\alpha})\partial_\tau \beta(\tau) \equiv 0,  \qquad \frac{\bar{s}}{\tau}  \lambda^\prime (\bar{\alpha})\equiv 0,
\end{gather*}
which is enough to conclude Assumptions \ref{a:Maksym}, \ref{a:tLip} and \eqref{e:B0bd} in Assumption \ref{a:postvty}.

Note that $q\equiv 1$,
and
\begin{align*}
	 1-3\bar{s}^2 =1-3K =
	 \begin{cases}
	 0 \quad &\text{if} \quad K=\frac{1}{3}, \\
	 1-3K>0 \quad &\text{if} \quad 0<K<\frac{1}{3},
	 \end{cases}
\end{align*}
which verifies Assumption \ref{a:postvty}. Therefore, we conclude that this fluid belongs to \textit{Fluids} $(II)$.

\subsubsection{Chaplygin gases} \label{s:chapf}
Suppose the equation of state of Chaplygin gases is expressed by
\begin{equation}\label{e:eoschp}
	p=-\frac{\Lambda^{1+\vartheta}}{(\rho+\Lambda)^{\vartheta}}+\Lambda
\end{equation}
for $\vartheta\in\bigl(0,\sqrt{\frac{1}{3}}\bigr)$. It is easy to check that $p(0)=0$. 
Equation \eqref{e:intbg}, with the help of \eqref{e:eoschp}, leads to
\begin{align*}
	(\bar{\rho}+\Lambda)^{\vartheta+1}-\Lambda^{\vartheta+1}=\tau^{3(1+\vartheta)}[(\bar{\rho}(1)+\Lambda)^{\vartheta+1}-\Lambda^{\vartheta+1}].
\end{align*}

We take $\{\lambda(\alpha), \varrho, \varsigma, \beta(\tau)\}$ as
\begin{align}
\alpha=\mu^{-1}(\rho)=&\int_{\bar{\rho}(1)}^{\rho}\frac{d\xi}{\xi+p(\xi)}=\frac{1}{\vartheta+1}\ln\frac{(\rho+\Lambda)^{\vartheta+1}-\Lambda^{\vartheta+1}}{(\bar{\rho}(1)+\Lambda)^{\vartheta+1}-\Lambda^{\vartheta+1}},   \label{e:chppar1}  \\
\lambda(\alpha)
=&\frac{\vartheta \Lambda^{1+\vartheta}}{\Lambda^{\vartheta+1}+ [(\bar{\rho}(1)+\Lambda)^{\vartheta+1}-\Lambda^{\vartheta+1}]e^{(\vartheta+1)\alpha}},  \label{e:chppar2}  \\
\beta(\tau)=&1, \quad  \varsigma=3(1+\vartheta) \AND \varrho(\tau,y)=\tau^{-3(1+\vartheta)}(\mu(y+\bar{\alpha})-\mu(\bar{\alpha})). \label{e:chppar3}
\end{align}
It is easy to calculate that
\begin{align}
    \bar{\alpha}=&\mu^{-1}(\bar{\rho})= \frac{1}{\vartheta+1}\ln\frac{(\bar{\rho}+\Lambda)^{\vartheta+1}-\Lambda^{\vartheta+1}}{(\bar{\rho}(1)+\Lambda)^{\vartheta+1}-\Lambda^{\vartheta+1}}= \ln \tau^3  , \label{e:chapalp}\\
	\mu(\alpha) 
	= & \Lambda\biggl\{1+ \bigl[\bigl(1+\frac{\bar{\rho}(1)}{\Lambda}\bigr)^{\vartheta+1}-1\bigr]e^{(\vartheta+1)\alpha}\biggr\}^\frac{1}{\vartheta+1}-\Lambda,   \label{e:chaprho} \\
	\mu(\bar{\alpha}) 
	= & \Lambda\biggl\{1+ \bigl[\bigl(1+\frac{\bar{\rho}(1)}{\Lambda}\bigr)^{\vartheta+1}-1\bigr] \tau^{3(\vartheta+1) }\biggr\}^\frac{1}{\vartheta+1}-\Lambda,   \label{e:chaprho2} \\
	\intertext{and}
	s(\alpha)=&\lambda(\alpha)=\sqrt{p^{\prime}(\rho)}
	= \frac{\vartheta \Lambda^{1+\vartheta}}{\Lambda^{\vartheta+1}+ [(\bar{\rho}(1)+\Lambda)^{\vartheta+1}-\Lambda^{\vartheta+1}]e^{(\vartheta+1)\alpha}}  \notag \\
	>&\frac{\vartheta}{1+[(\frac{\bar{\rho}(1)}{\Lambda}+1)^{\vartheta+1}-1]e^{(\vartheta+1)K_1}}  >0, \label{e:chapslm}
\end{align}
for $|\alpha|\leq K_1$ (recall the fact that $\alpha \in I_\alpha$ and $I_\alpha$ is compact). Then,  \eqref{e:chaprho}, \eqref{e:chaprho2} and \eqref{e:chppar3}, with the help of mean value theorem, imply
\begin{align*}
	\varrho(\tau,\alpha-\bar{\alpha})=&\tau^{-3(1+\vartheta)}(\mu(\alpha)-\mu(\bar{\alpha}))  \notag \\
	= & e^{-(\vartheta+1)\bar{\alpha}} \Lambda \bigl[\bigl(1+ K_3 e^{(\vartheta+1)\alpha}\bigr)^\frac{1}{\vartheta+1} - \bigl(1+ K_3  e^{(\vartheta+1)\bar{\alpha}}\bigr)^\frac{1}{\vartheta+1} \bigr]  \notag \\
	= &    \Lambda (1+K_3 \tau^{3(1+\vartheta)} e^{(\vartheta+1) K_2 (\alpha-\bar{\alpha}) })^{\frac{1}{\vartheta+1}-1} K_3 e^{(\vartheta+1) K_2(\alpha-\bar{\alpha}) }(\alpha-\bar{\alpha})
\end{align*}
for some $K_2\in (0,1)$,
where $K_3=\bigl[\bigl(1+\frac{\bar{\rho}(1)}{\Lambda}\bigr)^{\vartheta+1}-1\bigr]$. Therefore,
$\varrho\in C\bigl([0,1], C^\omega( \mathbb{R})\bigr)$ and $\varrho(\tau,0)=0$.

Direct calculations using \eqref{e:intbg}, \eqref{e:chppar3}, with the help of that (by \eqref{e:chapalp} and \eqref{e:chapslm})
\begin{align*}
	\bar{s}=s(\bar{\alpha})=\bar{\lambda}=\lambda(\bar{\alpha})= \frac{\vartheta \Lambda^{1+\vartheta}}{\Lambda^{\vartheta+1}+\tau^{3(1+\vartheta)}[(\bar{\rho}(1)+\Lambda)^{\vartheta+1}-\Lambda^{\vartheta+1}] } <\vartheta\lesssim \beta   
\end{align*}
and \begin{align*}
	\lambda^\prime(\bar{\alpha})=-\frac{(1+\vartheta)\vartheta \Lambda^{1+\vartheta}[(\bar{\rho}(1)+\Lambda)^{1+\vartheta}-\Lambda^{1+\vartheta}]\tau^{3 (1+\vartheta)}}{\bigl[\Lambda^{\vartheta+1}+\tau^{3(1+\vartheta)}[(\bar{\rho}(1)+\Lambda)^{\vartheta+1}-\Lambda^{\vartheta+1}] \bigr]^2}
\end{align*}
show that
\begin{gather*}
\frac{d\mu(\alpha)}{d\alpha}= \frac{\lambda(\alpha)(\rho+p)}{s(\alpha)} = \rho+p,  \\
\mu(\alpha)-\mu(\bar{\alpha})= \tau^{3(1+\vartheta)} \varrho(\tau, \alpha-\bar{\alpha})  , 
\end{gather*}
and $\lambda^\prime (\bar{\alpha})\partial_\tau \beta(\tau) \equiv 0$,
\begin{align*}
	\frac{\bar{s}}{\tau}  \lambda^\prime (\bar{\alpha})= & -\frac{(1+\vartheta)\vartheta^2 \Lambda^{2(1+\vartheta)}[(\bar{\rho}(1)+\Lambda)^{1+\vartheta}-\Lambda^{1+\vartheta}]\tau^{ 3(1+\vartheta)-1}}{\bigl[\Lambda^{\vartheta+1}+\tau^{3(1+\vartheta)}[(\bar{\rho}(1)+\Lambda)^{\vartheta+1}-\Lambda^{\vartheta+1}] \bigr]^3}
	\lesssim   1.
\end{align*}
This is enough to conclude Assumptions \ref{a:Maksym}, \ref{a:tLip} and \eqref{e:B0bd} in Assumption \ref{a:postvty}.
Note $q\equiv 1$,
and because $0<\vartheta<\sqrt{\frac{1}{3}}$,
\begin{align*}
1-3\bar{s}^2 = & 1-3\biggl(\frac{\vartheta \Lambda^{1+\vartheta}}{\Lambda^{\vartheta+1}+\tau^{3(1+\vartheta)}[(\bar{\rho}(1)+\Lambda)^{\vartheta+1}-\Lambda^{\vartheta+1}] }\biggr)^2 \notag \\
\geq & 1-3 \vartheta ^2 
>0
\end{align*}
for $\tau \in [0,1]$, which verifies Assumption \ref{a:postvty}. Therefore, we conclude the generalized Chaplygin gases belong to \textit{Fluids} $(II)$.

\section{Singular hyperbolic formulations of the conformal Einstein equations}\label{section:2}

This section contributes to the singular hyperbolic formulations of the conformal Einstein equations by using the method initiated in \cite{Oliynyk2016a} (and applied in \cite{LeFloch2015a,Liu2017,Liu2018}). Readers who are familiar with this approach can quickly browse but pay more attentions to the next section on conformal Euler equations which involve main difficulties and contribute to the main innovations of this article.

\subsection{The reduced conformal Einstein equations}\label{section:3.2}

With the help of the wave coordinates $Z^{\mu}$ defined by \eqref{4.5}, we transform conformal Einstein equation to the gauge reduced one,  
\begin{align} \label{4.6}
-2R^{\mu\nu} +&2\nabla^{(\mu}Z^{\nu)}+A_{\kappa}^{\mu\nu}Z^{\kappa}=-4\nabla^{\mu}\nabla^{\nu}\Phi+4\nabla^{\mu}\Phi\nabla^{\nu}\Phi\nonumber\\
 -&2\left[\Box_{g}\Phi+2|\nabla\Phi|^{2}_{g}+(\frac{\rho-p}{2}+\Lambda)e^{2\Phi}\right]g^{\mu\nu} -2e^{2\Phi}(\rho+p)u^{\mu}u^{\nu},
\end{align}
where
\begin{equation*}
A^{\mu\nu}_{\kappa}=-X^{(\mu}\delta^{\nu)}_{\kappa}+Y^{(\mu}\delta^{\nu)}_{\kappa}.
\end{equation*}
Direct calculations yield
\begin{align}
2\nabla^{(\mu}Z^{\nu)}
&= 2\nabla^{(\mu}\Gamma^{\nu)}+\partial_{\tau}\left(\frac{2\psi(\tau)}{3\tau}\right)(g^{\mu 0}\delta^{\nu}_{0}+g^{\nu0}\delta^{\mu}_{0})
-\frac{2\psi(\tau)}{3\tau}
\partial_{\tau}g^{\mu\nu}-4\nabla^{\mu}\nabla^{\nu}\Phi  \label{e:NZ}
\intertext{and}
A_{k}^{\mu\nu}Z^{\kappa}
&= -\Gamma^{(\mu}\Gamma^{\nu)}+4\nabla^{\mu}\Phi\nabla^{\nu}\Phi-\frac{4\psi(\tau)}{3\tau}(\nabla^{\mu}\Phi\delta^{\nu}_{0}+\nabla^{\nu}\Phi\delta^{\mu}_{0})
+\frac{4\Lambda^{2}(\tau)}{9\tau^{2}}\delta^{\mu}_{0}\delta^{\nu}_{0}.\label{e:AZ}
\end{align}
Reduced Einstein equations \eqref{4.6} and \eqref{e:NZ}--\eqref{e:AZ} lead to
\begin{align}\label{e:Rein1}
-2R^{\mu\nu} +&2\nabla^{(\mu}\Gamma^{\nu)}-\Gamma^{\mu}\Gamma^{\nu}=\frac{2\psi(\tau)}{3\tau}\partial_{\tau}g^{\mu\nu}
-\frac{2\partial_{\tau}\psi(\tau)}{3\tau}
(g^{\mu0}\delta^{\nu}_{0}+g^{\nu0}\delta^{\mu}_{0})\nonumber\\
 -&\frac{4\psi(\tau)}{3\tau^{2}}\left(g^{00}+\frac{\psi(\tau)}{3}\right)\delta^{\mu}_{0}\delta^{\nu}_{0}-
\frac{4\psi(\tau)}{3\tau^{2}}g^{0i}\delta_{0}^{(\mu}\delta^{\nu)}_{i}-\frac{2}{\tau^{2}}g^{\mu\nu}\left(g^{00}+\frac{\psi(\tau)}{3}\right)\nonumber\\
 +&\frac{2}{\tau^{2}}\left(\psi(\tau)-\Lambda-\frac{\rho-p}{2}\right)g^{\mu\nu}
-2e^{2\Phi}(\rho+p)u^{\mu}u^{\nu}.
\end{align}
Recalling the formula (e.g., see \cite{Friedrich2000, Ringstroem2009})
\begin{align} \label{E:REPOFR}
R^{\mu\nu}=\frac{1}{2}g^{\lambda\sigma}\ \partial_\lambda \partial_\sigma g^{\mu\nu}+ \nabla^{(\mu} \Gamma^{\nu)}+\frac{1}{2}(Q^{\mu\nu}-X^\mu X^\nu),
\end{align}
where
\begin{align} \label{E:QMUNU}
Q^{\mu\nu}(g,\partial g)  =&\bar{g}^{\lambda\sigma}\bar{\partial}_\lambda(\bar{g}^{\alpha\mu}\bar{g}^{\rho\nu})
\bar{\partial}_\sigma\bar{g}_{\alpha\rho}
+2\bar{g}^{\alpha\mu}\bar{\Gamma}^\eta_{\lambda\alpha}\bar{g}_{\eta\delta}\bar{g}^
{\lambda\gamma}\bar{g}^{\rho\nu}\bar{\Gamma}^\delta_{\rho
	\gamma}  \notag \\
 & + 4\bar{\Gamma}^\lambda_{\delta\eta}
\bar{g}^{\delta\gamma}\bar{g}_{\lambda(\alpha}\bar{\Gamma}^\eta_{\rho)
	\gamma}\bar{g}^{\alpha\mu}\bar{g}^{\rho\nu}
+(\bar{\Gamma}^\mu
-\bar{\gamma}^\mu)(\bar{\Gamma}^\nu-\bar{\gamma}^\nu),
\end{align}
and inserting \eqref{E:REPOFR} and \eqref{E:QMUNU} into \eqref{e:Rein1}, with the help of \eqref{e:lamtau} (that is,  $\frac{\psi(\tau)}{3}=\omega^{2}+\frac{\Omega}{2}$), yields
\begin{align*}
-g^{\kappa\lambda}\partial_{\kappa}\partial_{\lambda}g^{\mu\nu} =&\frac{2\omega^{2}}{\tau}\partial_{\tau}g^{\mu\nu}-\frac{4\omega^{2}}{\tau^{2}}(g^{00}+\omega^{2})
\delta^{\mu}_{0}\delta^{\nu}_{0}
-\frac{4\omega^{2}}{\tau^{2}}g^{0i}\delta_{0}^{(\mu}\delta^{\nu)}_{i}
\nonumber\\
 &-\frac{2}{\tau^{2}}g^{\mu\nu}(g^{00}+\omega^{2})
-\frac{2\Omega}{\tau^{2}}(g^{00}+\omega^{2}+\frac{\Omega}{2})\delta^{\mu}_{0}\delta^{\nu}_{0}
-\frac{\Omega}{\tau}\partial_{\tau}g^{\mu\nu}
\nonumber\\
 &-\frac{2}{\tau^{2}}\omega^{2}\Omega\delta^{\mu}_{0}\delta^{\nu}_{0}
-\frac{2\Omega}{\tau^{2}}g^{0i}\delta_{0}^{(\mu}\delta^{\nu)}_{i}-\frac{\Omega}{\tau^{2}}g^{\mu\nu}
-\frac{2\partial_{\tau}\psi(\tau)}{3\tau}(g^{\mu0}\delta^{\nu}_{0}+g^{\nu0}\delta^{\mu}_{0})\nonumber\\
 &+\frac{2}{\tau^{2}}\left(3\omega^{2}+\Omega -\Lambda-\frac{\rho-p}{2}\right)g^{\mu\nu}\nonumber\\
 &-2e^{2\Phi}(\rho+p)u^{\mu}u^{\nu}+Q^{\mu\nu}(g,\partial g) ,
\end{align*}
which is equivalent to
\begin{align}\label{4.7}
-g^{\kappa\lambda}\partial_{\kappa}\partial_{\lambda}(g^{\mu\nu}-\eta^{\mu\nu})
 =&\frac{2\omega^{2}}{\tau}\partial_{\tau}(g^{\mu\nu}-\eta^{\mu\nu})-\frac{4\omega^{2}}{\tau^{2}}(g^{00}+\omega^{2})\delta^{\mu}_{0}\delta^{\nu}_{0}\nonumber\\
& -\frac{4\omega^{2}}{\tau^{2}}g^{0i}\delta_{0}^{(\mu}\delta^{\nu)}_{i}-\frac{2}{\tau^{2}}g^{\mu\nu}(g^{00}+\omega^{2})+\mathfrak{H}^{\mu\nu},
\end{align}
where
\begin{align*}
\mathfrak{H}^{\mu\nu}
 =&(g^{\kappa\lambda}-\eta^{\kappa\lambda})\partial_{\kappa}\partial_{\lambda}\eta^{\mu\nu}-\frac{2\Omega}{\tau^{2}}(g^{00}+\omega^{2})\delta^{\mu}_{0}\delta^{\nu}_{0}
-\frac{\Omega}{\tau}\partial_{\tau}(g^{\mu\nu}-\eta^{\mu\nu})-\frac{2\Omega}{\tau^{2}}g^{0i}\delta_{0}^{(\mu}\delta^{\nu)}_{i}\nonumber\\
 &-\frac{\Omega}{\tau^{2}}(g^{\mu\nu}-\eta^{\mu\nu})-\frac{2\partial_{\tau}\psi(\tau)}{3\tau}\left((g^{\mu0}-\eta^{\mu0})\delta^{\nu}_{0}+(g^{\nu0}-\eta^{\nu0})\delta^{\mu}_{0}\right)\nonumber\\
 &-\frac{1}{\tau^2}  (\rho-\bar{\rho}-p+\bar{p}) (g^{\mu\nu}-\eta^{\mu\nu})
-\frac{2}{\tau^2}\left(\frac{\rho-\bar{\rho}-(p-\bar{p})}{2}\right)\eta^{\mu\nu}\nonumber\\
 &-\frac{2}{\tau^2}\Big[(\rho-\bar{\rho}+p-\bar{p})u^{\mu}u^{\nu}+
(\bar{\rho}+\bar{p})(u^{\mu}u^{\nu}-\bar{u}^{\mu}\bar{u}^{\nu})\Big]\nonumber\\
 &+Q^{\mu\nu}(g,\partial g)-Q^{\mu\nu}(\eta,\partial\eta),
\end{align*}
and we have applied the identity $3\omega^2-\Lambda+ \Omega -\frac{\rho-p}{2}=-\frac{1}{2}(\rho-\bar{\rho}-p+\bar{p})$ due to \eqref{3.20}.

\begin{remark}\label{remark:3.2}
Note that, by \eqref{E:QMUNU}, $Q^{\mu\nu}(g,\partial g)$ are quadratic in $\partial g=(\partial_{\kappa}g^{\mu\nu})$ and analytical in $g=(g^{\mu\nu})$.
\end{remark}
\begin{remark}
	Note that terms with $\frac{2\partial_{\tau}\psi(\tau)}{3\tau}$ is regular in $\tau$ due to the estimate \eqref{e:dtlambda}.
\end{remark}





Definitions \eqref{4.12}--\eqref{4.14} imply
\begin{align}\label{e:ug}
	g^{0\mu}-\eta^{0\mu}=2\tau \textbf{u}^{0\mu}, \quad
	\partial_{i}(g^{0\mu}-\eta^{0\mu})=\textbf{u}_{i}^{0\mu} \AND \partial_{\tau}(g^{0\mu}-\eta^{0\mu})=\textbf{u}_{0}^{0\mu}+3\textbf{u}^{0\mu}.
\end{align}
Then, by letting $\nu=0$, with the help of \eqref{e:ug}, the equation \eqref{4.7} becomes
\begin{align}\label{4.19}
 &-g^{00}\partial_{\tau} \textbf{u}_{0}^{0\mu}-2g^{0i}\partial_{i}\textbf{u}_{0}^{0\mu}-g^{ij}\partial_{j}\textbf{u}_{i}^{0\mu}\nonumber\\
 =&\frac{1}{\tau}\left[-\frac{g^{00}}{2}(\textbf{u}_{0}^{0\mu}+\textbf{u}^{0\mu})\right]+6\textbf{u}^{0i}\textbf{u}^{0\mu}_i+4\textbf{u}^{00}\textbf{u}_{0}^{0\mu}
-4\textbf{u}^{00}\textbf{u}^{0\mu}+\mathfrak{H}^{0\mu}.
\end{align}
Differentiating \eqref{4.14}, which is the definition of $\textbf{u}_{j}^{0\mu}$, implies
\begin{align*}
\partial_{\tau}\textbf{u}_{j}^{0\mu} =\partial_{\tau}\partial_{j}(g^{0\mu}-\eta^{0\mu})
=\partial_{j}(\textbf{u}_{0}^{0\mu}+3\textbf{u}^{0\mu})
=\partial_{j}\textbf{u}_{0}^{0\mu}+\frac{3}{2\tau}\textbf{u}_{j}^{0\mu},
\end{align*}
namely
\begin{equation*}
\partial_{\tau}\textbf{u}_{j}^{0\mu}-\partial_{j}\textbf{u}_{0}^{0\mu}=\frac{3}{2\tau}\textbf{u}_{j}^{0\mu}.
\end{equation*}
Then, differentiating \eqref{4.12} yields
\begin{equation}\label{e:dtu1}
\partial_{\tau}\textbf{u}^{0\mu}=\partial_{\tau}\left(\frac{g^{0\mu}-\eta^{0\mu}}{2\tau}\right)=\frac{\textbf{u}_{0}^{0\mu}+3\textbf{u}^{0\mu}}{2\tau}-\frac{g^{0\mu}-\eta^{0\mu}}{2\tau^2}
=\frac{1}{2\tau}(\textbf{u}_{0}^{0\mu}+\textbf{u}^{0\mu}).
\end{equation}

For the spatial components, a more delicate transformation is applied to the $\mu=i, \nu=j$ components of \eqref{4.7} in order to rewrite those equations into the desired singular hyperbolic form. The first step is
to contract $(\mu,\nu)=(i,j)$ components of \eqref{4.7} with $\check{g}_{ij}$, where we recall
that $(\check{g}_{pq})=(g^{pq})^{-1}$. A straightforward calculation,
using the identity $\check{g}_{pq} \partial_\mu g^{pq}=-\det(g^{pq})  \partial_\mu \det(g^{pq})^{-1}$ and \eqref{4.7}
with $(\mu,\nu)=(0,0)$, recalling the definition of $\textbf{q}$ in \eqref{e:q}, which is
\begin{equation*}
\textbf{q}=g^{00}-\eta^{00}+\frac{\eta^{00}}{3}\ln(\det(g^{ij})),
\end{equation*}
leads to
\begin{equation}\label{4.22}
\partial_{\lambda}\textbf{q}=\partial_{\lambda}(g^{00}+\omega^{2})
-\frac{\omega^{2}}{3}\check{g}_{pq}\partial_{\lambda}g^{pq}-\frac{2\omega\delta_{\lambda}^{0}\partial_{\tau}\omega}{3}\ln(\det(g^{pq})).
\end{equation}
Then differentiating \eqref{4.22}, we arrive at
\begin{align}\label{4.23}
\partial_{\kappa}\partial_{\lambda}\textbf{q}
 =\partial_{\kappa}\partial_{\lambda}(g^{00}+\omega^{2})-\frac{\omega^{2}}{3}\check{g}_{pq}\partial_{\kappa}\partial_{\lambda}g^{pq}+\mathfrak{F} _{\kappa\lambda},
\end{align}
where
\begin{align*}
\mathfrak{F} _{\kappa\lambda} =& -\frac{\omega^{2}}{3}\partial_{\kappa}\check{g}_{pq}\partial_{\lambda}g^{pq}
-\frac{2\omega\partial_{\tau}\omega\delta^{0}_{\kappa}}{3}\check{g}_{pq}\partial_{\lambda}g^{pq}
-\frac{2(\partial_{\tau}\omega)^{2}}{3}\delta^{0}_{\lambda}\delta^{0}_{\kappa}\ln(\det(g^{pq}))\nonumber\\
 &-\frac{2\omega\partial^{2}_{\tau}\omega}{3}\delta^{0}_{\lambda}\delta^{0}_{\kappa}\ln(\det(g^{pq}))
-\frac{2\omega\partial_{\tau}\omega}{3}\delta^{0}_{\lambda}\check{g}_{pq}\partial_{\kappa}g^{pq}.
\end{align*}
Contracting \eqref{4.23} by $-g^{\kappa\lambda}$ and substituting the corresponding second derivative terms by \eqref{4.7}, with the help of \eqref{4.17}--\eqref{4.18}, result the equation
\begin{align}\label{4.24}
-g^{\kappa\lambda}\partial_{\kappa}\partial_{\lambda}\textbf{q} =&-g^{\kappa\lambda}\partial_{\kappa}\partial_{\lambda}(g^{00}+\omega^{2})
+\frac{\omega^{2}}{3}\check{g}_{pq}g^{\kappa\lambda}
\partial_{\kappa}\partial_{\lambda}g^{pq}-g^{\kappa\lambda}\mathfrak{F} _{\kappa\lambda}\nonumber\\
 =&\frac{2\omega^{2}}{\tau}\partial_{\tau}(g^{00}+\omega^{2})-\frac{4\omega^{2}}{\tau^{2}}(g^{00}+\omega^{2})-\frac{2}{\tau^{2}}g^{00}(g^{00}+\omega^{2})+\mathfrak{H}^{00}\nonumber\\
 &-\frac{\omega^{2}}{3}\check{g}_{pq}\Bigl(\frac{2\omega^{2}}{\tau}\partial_{\tau}g^{pq}-\frac{2}{\tau^{2}}g^{pq}(g^{00}+\omega^{2})+\mathfrak{H}^{pq}\Bigr)-g^{\kappa\lambda}\mathfrak{F} _{\kappa\lambda}\nonumber\\
 =&-\frac{2}{\tau}g^{00}\partial_{\tau}\textbf{q}+4\textbf{u}^{00}\partial_{\tau}\textbf{q}-8(\textbf{u}^{00})^{2}+\mathfrak{F},
\end{align}
where
\begin{equation*}
\mathfrak{F}  =\mathfrak{H}^{00}-\frac{\omega^{2}}{3}\check{g}_{pq}\mathfrak{H}^{pq}
+\frac{4\omega^{3}\partial_{\tau}\omega}{3\tau}\ln(\det(g^{pq}))-g^{\kappa\lambda}\mathfrak{F} _{\kappa\lambda}.
\end{equation*}
Furthermore, by \eqref{4.18}, the equation \eqref{4.24} is equivalent to
\begin{align}\label{4.25}
 -g^{00}\partial_{\tau}\textbf{u}_{0}-2g^{0i}\partial_{i}\textbf{u}_{0}-g^{ij}\partial_{i}\textbf{u}_{j} = -\frac{2}{\tau}g^{00}\textbf{u}_{0}+4\textbf{u}^{00}\textbf{u}_{0}-8(\textbf{u}^{00})^{2}+\mathfrak{F} .
\end{align}

Using \eqref{4.17}--\eqref{4.18} and differentiating them imply that
\begin{align}
\partial_{\tau}\textbf{u}_{j}=&\partial_{\tau}\partial_{j}\textbf{q}=\partial_{j}\partial_{\tau}\textbf{q}=\partial_{j}\textbf{u}_{0},  \label{e:dtuj3}
\intertext{and}
\partial_{\tau}\textbf{u}=&\partial_{\tau}\textbf{q}=\textbf{u}_{0}.\label{e:dtu3}
\end{align}

Introduce an operator
\begin{equation}\label{e:Lop}
\textbf{L}_{lm}^{ij}=\delta^{i}_{l}\delta^{j}_{m}-\frac{1}{3}\check{g}_{lm}g^{ij}.
\end{equation}
Direct calculations yield
\begin{equation}
\partial_{\mu}\textbf{g}^{ij}=(\det(\check{g}_{pq}))^{\frac{1}{3}}\textbf{L}_{lm}^{ij}\partial_{\mu}g^{lm} \AND \textbf{L}_{lm}^{ij}g^{lm}=0.\label{4.11}
\end{equation}
Let us turn to  $\textbf{g}^{ij}-\delta^{ij}$. Applying $(\det(\check{g}_{pq}))^{\frac{1}{3}}\textbf{L}_{lm}^{ij}$ to \eqref{4.7} with $(\mu,\nu)=(l,m)$, with the help of \eqref{e:Lop}--\eqref{4.11}, direct calculations give
\begin{align}\label{4.20}
-g^{\kappa\lambda}\partial_{\kappa}\partial_{\lambda}\textbf{g}^{ij} =&-g^{\kappa\lambda}\partial_{\kappa}[(\det(\check{g}_{pq}))^{\frac{1}{3}}\textbf{L}_{lm}^{ij}\partial_{\lambda}g^{lm}]\nonumber\\
 =&(\det(\check{g}_{pq}))^{\frac{1}{3}}\textbf{L}_{lm}^{ij}(-g^{\kappa\lambda}\partial_{\kappa}\partial_{\lambda}g^{lm})-g^{\kappa\lambda}\partial_{\kappa}[(\det(\check{g}_{pq}))^{\frac{1}{3}}
\textbf{L}_{lm}^{ij}]\partial_{\lambda}g^{lm}\nonumber\\
 =&(\det(\check{g}_{pq}))^{\frac{1}{3}}\textbf{L}_{lm}^{ij}\left(\frac{2\omega^{2}}{\tau}\partial_{\tau}(g^{lm}-\eta^{lm})
-\frac{2}{\tau^{2}}g^{lm}(g^{00}+\omega^{2})+\mathfrak{H}^{lm}\right)\nonumber\\
 &-g^{\kappa\lambda}\partial_{\kappa}[(\det(\check{g}_{pq}))^{\frac{1}{3}}
\textbf{L}_{lm}^{ij}]\partial_{\lambda}g^{lm}\nonumber\\
 =&\frac{2\omega^{2}}{\tau}\partial_{\tau}\textbf{g}^{ij}+\mathfrak{R}^{ij},
\end{align}
where
\begin{align*}
\mathfrak{R}^{ij}=(\det(\check{g}_{pq}))^{\frac{1}{3}}\textbf{L}_{lm}^{ij}\mathfrak{H}^{lm}-
g^{\kappa\lambda}\partial_{\kappa}[(\det(\check{g}_{pq}))^{\frac{1}{3}}
\textbf{L}_{lm}^{ij}]\partial_{\lambda}g^{lm}.
\end{align*}
Then, inserting \eqref{4.15}--\eqref{4.16} into \eqref{4.20}, we arrive at
\begin{align}\label{4.21}
-g^{00}\partial_{\tau}\textbf{u}_{0}^{ij}-2g^{0i}\partial_{i}\textbf{u}_{0}^{ij}-g^{pq}\partial_{p}\textbf{u}_{q}^{ij}
= -\frac{2}{\tau}g^{00}\textbf{u}_{0}^{ij}+4\textbf{u}^{00}\textbf{u}_{0}^{ij}+\mathfrak{R}^{ij}.
\end{align}

Differentiating \eqref{4.16} and \eqref{4.15} also gives
\begin{align}
\partial_{\tau}\textbf{u}_{j}^{lm}=&\partial_{\tau}\partial_{j}\textbf{g}^{lm}=\partial_{j}\partial_{\tau}\textbf{g}^{lm}=\partial_{j}\textbf{u}_{0}^{lm} \label{e:dtuj2}
\intertext{and}
\partial_{\tau}\textbf{u}^{lm}=&\partial_{\tau}(\textbf{g}^{lm}-\delta^{lm})=\partial_{\tau}\textbf{g}^{lm}=\textbf{u}_{0}^{lm}.\label{e:dtu2}
\end{align}

Collecting \eqref{4.19}--\eqref{e:dtu1}, \eqref{4.21}--\eqref{e:dtu2} and \eqref{4.25}--\eqref{e:dtu3} together, we  transform the reduced conformal Einstein equations into the following singular symmetric hyperbolic formulation,
\begin{align}\label{4.26}
A^{\kappa}\partial_{\kappa}\left(\begin{array}{ll}
\textbf{u}^{0\mu}_{0}\\\textbf{u}^{0\mu}_{j}\\\textbf{u}^{0\mu}
\end{array}\right)=\frac{1}{\tau}\textbf{A}\textbf{P}^\star\left(\begin{array}{ll}
\textbf{u}^{0\mu}_{0}\\\textbf{u}^{0\mu}_{j}\\\textbf{u}^{0\mu}
\end{array}\right)+F_1,
\end{align}
\begin{align}\label{4.27}
A^{\kappa}\partial_{\kappa}\left(\begin{array}{ll}
\textbf{u}^{lm}_{0}\\ \textbf{u}^{lm}_{j}\\ \textbf{u}^{lm}
\end{array}\right)=\frac{1}{\tau}(-2g^{00})\Pi\left(\begin{array}{ll}
\textbf{u}^{lm}_{0}\\ \textbf{u}^{lm}_{j}\\ \textbf{u}^{lm}
\end{array}\right)+F_2,
\end{align}
and
\begin{align}\label{4.28}
A^{\kappa}\partial_{\kappa}\left(\begin{array}{ll}
\textbf{u}_{0}\\ \textbf{u}_{j}\\ \textbf{u}
\end{array}\right)=\frac{1}{\tau}(-2g^{00})\Pi\left(\begin{array}{ll}
\textbf{u}_{0}\\ \textbf{u}_{j}\\ \textbf{u}
\end{array}\right)+F_3,
\end{align}
where
\begin{align*}
A^{0}=\left(
\begin{array}{ccc}
-g^{00}& 0& 0\\
0& g^{ij}& 0\\
0&0& -g^{00}
\end{array}\right),\quad
A^{k}=\left(
\begin{array}{ccc}
-2g^{0k}& -g^{jk}& 0\\
-g^{ik}& 0& 0\\
0&0& 0
\end{array}\right),
\end{align*}
\begin{align*}
\textbf{P}^\star=\left(
\begin{array}{ccc}
\frac{1}{2}& 0& \frac{1}{2}\\
0& \delta^{j}_{l}& 0\\
\frac{1}{2}&0& \frac{1}{2}
\end{array}\right),\quad
\textbf{A}=\left(
\begin{array}{ccc}
-g^{00}& 0& 0\\
0& \frac{3}{2}g^{li}& 0\\
0&0& -g^{00}
\end{array}\right),
\end{align*}
\begin{align*}
\Pi=\left(
\begin{array}{ccc}
1& 0& 0\\
0& 0& 0\\
0&0& 0\end{array}\right),\quad
F_1=\left(
\begin{array}{ccc}
6\textbf{u}^{0i}\textbf{u}^{0\mu}_i+4\textbf{u}^{00}\textbf{u}_{0}^{0\mu}-4\textbf{u}^{00}\textbf{u}^{0\mu}+\mathfrak{H}^{0\mu}\\
0\\
0
\end{array}\right),
\end{align*}
and
\begin{align*}
F_2=\p{
4\textbf{u}^{00}\textbf{u}_{0}^{ij}+\mathfrak{R}^{ij}\\
0\\
-g^{00}\textbf{u}_{0}^{lm} },\quad
F_3=\p{
4\textbf{u}^{00}\textbf{u}_{0}-8(\textbf{u}^{00})^{2}+\mathfrak{F} \\
0\\
-g^{00}\textbf{u}_{0} }.
\end{align*}


\subsection{Regular remainder terms }\label{section:3.4}
In order to apply the singular symmetric hyperbolic system of Theorem \ref{pro:3.16}, we have to verify that $\mathfrak{H}^{0\mu}$, $\mathfrak{H}^{ij}$, $\mathfrak{R}^{ij}$, $\mathfrak{F} $ belong to $C^0([0,1], C^\infty(\mathbb{V}))$. The similar examinations and the key calculations have appeared in \cite{LeFloch2015a,Liu2017,Liu2018}\footnote{Let $\epsilon=v_T/c \equiv 1$
	. For more details please refer to \cite{Liu2017,Liu2018}. }. We only state the main ideas and list the key results which are minor variations of the corresponding quantities in  \cite{Liu2017,Liu2018} and the similar calculations apply.

Let us denote
\begin{equation}\label{e:UV}
	\textbf{U}: =(\textbf{u}^{0\mu}_{0},\textbf{u}^{0\mu}_{i},\textbf{u}^{0\mu},\textbf{u}^{lm}_{0},\textbf{u}^{lm}_{i},\textbf{u}^{lm}, \textbf{u} _{0},\textbf{u} _{i},\textbf{u} )^{T} \AND \textbf{V}:=(\delta\zeta, v^i)
\end{equation}
and
\begin{equation}\label{e:bbV}
\mathbb{V} = \mathbb{R}^4\times\mathbb{R}^{12}\times \mathbb{R}^4\times\mathbb{S}_3\times  (\mathbb{S}_3)^3\times\mathbb{S}_3\times \mathbb{R}
\times\mathbb{R}^3\times\mathbb{R} \times \mathbb{R}\times\mathbb{R}^3,
\end{equation}

First note
	\begin{equation}\label{4.50}
	\det({g^{ij}})=e^{\frac{3(2\tau \textbf{u}^{00}-\textbf{u})}{\omega^{2}}},
	\end{equation}
then we have the following expansions
\begin{align}
g^{ij} =&\delta^{ij}+ \textbf{u}^{ij}+\frac{\eta^{ij}}{\omega^{2}}(2\tau \textbf{u}^{00}-\textbf{u})+\mathcal{S}^{ij}(\tau,\textbf{u},\textbf{u}^{\mu\nu} ),  \label{e:gij}\\
\partial_{\lambda} g^{ij} = &   \textbf{u}_{\lambda}^{ij}+\frac{\eta^{ij}}{\omega^{2}}\bigl(2\textbf{u}^{00}\delta_{\lambda}^{0}+2\tau \textbf{u}^{00}_{i}\delta_{\lambda}^{i}+(\textbf{u}_{0}^{00}+\textbf{u}^{00})\delta^{0}_{\lambda}-\textbf{u}_{\lambda}\bigr)  \notag
\\
 &+\frac{2\Omega\eta^{ij}\delta_{\lambda}^{0}}{\tau \omega^{4}}(2\tau\textbf{u}^{00}-\textbf{u})+\mathcal{S}_\lambda^{ij}(\tau,\textbf{u},\textbf{u}^{\mu\nu} , \textbf{u}^{\mu\nu}_\sigma, \textbf{u}_\sigma ),\\
g_{\mu\nu} = & \eta_{\mu\nu} +  \mathcal{S}_{\mu\nu}(\tau,\textbf{u},\textbf{u}^{\mu\nu} ),\\
 \check{g}_{ij}
= & \delta_{ij}+  \mathcal{S}_{ij}( \tau,\textbf{u},\textbf{u}^{\mu\nu}), \\
g^{0\mu}=&\eta^{0\mu}+2\tau\textbf{u}^{0\mu}, \label{e:g00exp}\\
\partial_{\tau} g^{\mu0}=&
\textbf{u}_{0}^{\mu0}+3\textbf{u}^{\mu0}+2\omega\partial_{\tau}\omega \delta^\mu_0,  \label{e:dtg00exp}  \\
\partial_\sigma g_{\mu\nu}= & \partial_\sigma \eta_{\mu\nu}+  \mathcal{S}_{\mu\nu\sigma}(\tau,  \textbf{u}^{\alpha\beta}, \textbf{u}, \textbf{u}^{\alpha\beta}_\gamma, \textbf{u}_\gamma),  \\
g_{00}=& \eta_{00} +\tau\mathcal{S}_{00}(\tau,\textbf{U}), \\
g_{0i}=&\tau \mathcal{S}_{0i}(\tau,\textbf{U}),  \\
u^{0}
=&-\sqrt{-\eta^{00}}+\tau\mathcal{S}(\tau,\textbf{U})
+\beta^2(\tau)\mathcal{W}(\tau,\textbf{U},\textbf{V})+\tau\beta(\tau) \mathcal{V}(\tau,\textbf{U},\textbf{V}), \label{e:u0up} \\
u_{0}
=&\frac{1}{\sqrt{-\eta^{00}}}+\tau\mathcal{S}(\tau,\textbf{U})
+ \beta^2(\tau)\mathcal{W}(\tau,\textbf{U},\textbf{V})+\tau\beta(\tau) \mathcal{V}(\tau,\textbf{U},\textbf{V}), \\
u_{i}=&
\beta(\tau)g_{ij}v^{j}+2\tau\textbf{u}^{0j}\frac{g_{kj}u^{0}}{g_{il}g^{0i}g^{0l}-g^{00}} \notag \\
=&  \beta(\tau)g_{ij}v^{j}+ \tau\mathcal{S}_i(\tau, \textbf{U}) + \beta^2(\tau)\mathcal{W}_i(\tau,\textbf{U},\textbf{V})+\tau\beta(\tau) \mathcal{V}_i(\tau,\textbf{U},\textbf{V}) . \label{e:ui}
\end{align}
where we recall that the upper case callgraphic letters above and below obey the convention in \S \ref{remainder} and they vanish in $\xi=0$. We also remark that because the exact forms of these callgraphic remainders are not important, the remainders of the same letter may change from line to line. In specific, in above \eqref{e:u0up}--\eqref{e:ui}, terms like $\mathcal{V}$ and $\mathcal{W}$ vanish in $\mathbf{V}=0$ as well.

The similar arguments to \cite[Proposition $2.3$]{Liu2017} can be applied to
\begin{equation}\label{e:Qdiff}
Q^{\mu\nu}(g,\partial g)-Q^{\mu\nu}(\eta,\partial\eta)=\mathcal{S}^{\mu\nu}(\tau,u^{\alpha\beta},u,u^{\alpha\beta}_\sigma, u_\sigma).
\end{equation}
By Assumption \ref{a:tLip}, we have identity
\begin{align}
	\frac{\rho-\bar{\rho}}{\tau^2}=\tau^{\varsigma-2}\varrho(\tau, \delta\zeta) \AND \frac{p-\bar{p}}{\tau^2}=c_s^2\bigl(\rho_{K_4}\bigr)\tau^{\varsigma-2}\varrho(\tau, \delta\zeta),  \label{e:t2rho}
\end{align}
where $\rho_{K_4}:=\bar{\rho}+K_4(\rho-\bar{\rho})$ for some constant $K_4$, where $\varrho$ is defined in Assumption \ref{a:tLip} and we have used the mean value theorem above.

By the definitions of field variables \eqref{4.12}--\eqref{e:velcty} and expansions \eqref{4.50} and  \eqref{e:gij}--\eqref{e:ui}, with the help of \eqref{e:pest}--\eqref{e:dtlambda}, \eqref{e:Qdiff} and \eqref{e:t2rho}, we conclude that $\mathfrak{H}^{\mu\nu}$, $\mathfrak{R}^{ij}$, $\mathfrak{F} $ belong to $C^0([0,1], C^\infty(\mathbb{V}))$ and can be expressed as
\begin{align}\label{e:remds1}
	\mathfrak{H}^{\mu\nu} + \mathfrak{R}^{ij}+\mathfrak{F}  =\texttt{S}(\tau,\textbf{u}^{\alpha\beta},\textbf{u},\textbf{u}^{\alpha\beta}_\sigma, \textbf{u}_\sigma, \delta\zeta, v^i), 
\end{align}
where $\texttt{S}(\tau,0,0,0,0,0,0)=0$.

\section{Singular hyperbolic formulations of the conformal Euler equations}\label{section:3}
The main goal of this section is to derive the conformal Euler equations \eqref{symmetric-Euler} and \eqref{lower-case}. The idea of this section is new and contributes to the main innovation of this paper.

\subsection{Formulation of conformal Euler equations for \textit{Fluids} $(I)$}\label{section:3.3}

In this section, we rewrite the conformal Euler equation \eqref{e:cEu1} as a \textit{non-degenerated singular symmetric hyperbolic system} for \textit{Fluids} $(I)$. To achieve this, a series of transformations have been applied to and we demonstrate them in the following four steps to clearly express the ideas of overcoming the difficulties.

\subsubsection{Step $1$: symmetric hyperbolic formulations}

Recall the conformal Euler equations \eqref{e:cEu1}
\begin{align*}
	\nabla_{\mu}\widetilde{T}^{\mu\nu} =&-6\widetilde{T}^{\mu\nu}\nabla_{\mu}\Phi+g_{\kappa\lambda}\widetilde{T}^{\kappa\lambda}g^{\mu\nu}\nabla_{\mu}\Phi.
\end{align*}
First note that differentiating \eqref{e:nol2} yields
\begin{align*}\label{e:unu1}
	u_\mu\nabla_\nu u^\mu=0  \quad \Bigl(\text{that is } \nabla_\nu u^0= - \frac{u_i}{u_0}\nabla_\nu u^i \Bigr),
\end{align*}
and define
\begin{equation}\label{e:Ldef}
	L_{i}^{\mu}=\delta_{i}^{\mu}-\frac{u_{i}}{u_{0}}\delta_{0}^{\mu} \AND L_{k\nu}=g_{\nu\lambda}L^\lambda_k.
\end{equation}
Then using the conformal fluid four velocity $u^\mu$ and $L_{k\nu}$ acting on above conformal Euler equations \eqref{e:cEu1}, respectively yields the following formulation of conformal Euler equations (for more details, we refer to \cite[\S $2.2$]{Oliynyk2015})
\begin{align}
u^{\mu}\partial_{\mu}\rho+(\rho+p)L_{i}^{\mu}\nabla_{\mu}u^{i}= & -3(\rho+p) u^{\mu}\nabla_{\mu}\Phi,\label{Euler-original1} \\
\frac{c_s^2L_{i}^{\mu}}{\rho+p}\partial_{\mu}\rho+M_{ki}u^{\mu}\partial_{\mu}u^{i}= & -L_{k}^{\mu}\partial_{\mu}\Phi, \label{Euler-original2}
\end{align}
where $c_s^{2}=\frac{dp}{d\rho}$
and
\begin{align}\label{e:Mdef}
M_{ki}=&g_{ki}-\frac{u_{i}}{u_{0}}g_{0k}-\frac{u_{k}}{u_{0}}g_{0i}+\frac{u_{i}u_{k}}{u_{0}^2}g_{00}.
\end{align}
Then, by the normalization
\begin{equation*}
g_{00}(u^{0})^2+2g_{0i}u^{0}u^{i}+g_{ij}u^{i}u^{j}=-1,
\end{equation*}
we are able to recover $u^0$,
\begin{equation*}\label{e:uup0}
u^{0}=-\frac{g_{0i}u^{i}-\sqrt{(g_{0i}u^{i})^2-g_{00}(g_{ij}u^{i}u^{j}+1)}}{g_{00}}.
\end{equation*}

\subsubsection{Step $2$: non-degenerated symmetric hyperbolic formulations}
We expect to rewrite \eqref{Euler-original1}--\eqref{Euler-original2} into a singular symmetric hyperbolic system. Once multiplying suitable factors, for example, $s^2/(\rho+p)$ on both sides of \eqref{Euler-original1} and $\rho+p$ on both sides of \eqref{Euler-original2}, then \eqref{Euler-original1}--\eqref{Euler-original2} become symmetric. However, since the behavior of the background density $\bar{\rho}\searrow 0$ and pressure $\bar{p}\searrow 0$, as $\tau\rightarrow 0$ (see \eqref{e:pest}--\eqref{e:bgrhoest1}), the expected behaviors of $\rho$ and $p$ due to small perturbation of initial data $\bar{\rho}(1)$ and $\bar{p}(1)$ also satisfy $\rho\searrow 0$ and $p\searrow 0$ as $\tau\rightarrow 0$, which leads to a fact that some of elements of the coefficient matrix of above symmetric hyperbolic system tend to $0$ or $\infty$ as $\tau\rightarrow 0$. The aim of our method is to use the Theorem \ref{pro:3.16} of the singular symmetric hyperbolic equation in Appendix \ref{section:3.5}, but the probably degenerated coefficients around $\tau=0$ violate the condition \eqref{c:5} in Appendix \ref{section:3.5}. Hence it is not possible to represent this system in a non-degenerate form by multiplying these factors.

In order to overcome this difficulty, we adopt and generalize the idea of the non-degenerated symmetrization of Euler equations originated by Makino
\cite{Makino1986} (the idea is also clearly stated in, for example, \cite{Brauer2014, Oliynyk2008a}).
The key point is to introduce a new density variable $\alpha$ defined by Assumption \ref{a:Maksym},
such that  \eqref{Euler-original1}--\eqref{Euler-original2} become
\begin{align}
u^{\mu}\frac{d\mu }{d\alpha}\partial_{\mu}\alpha+(\mu+\mu^*p)L_{i}^{\mu}\nabla_{\mu}u^{i}=&-3(\mu+\mu^*p) u^{\mu}\nabla_{\mu}\Phi,\label{Euler-transform1a} \\
M_{ki}u^{\mu}\partial_{\mu}u^{i}+\frac{s^2L_{i}^{\mu}}{\mu+\mu^*p}\frac{d\mu }{d\alpha}\partial_{\mu}\alpha=&-L_{k}^{\mu}\partial_{\mu}\Phi.  \label{Euler-transform1b}
\end{align}
Recalling the non-degenerate function $\lambda(\alpha)$ of the fluids defined in Assumption \ref{a:Maksym} and multiplying both sides of \eqref{Euler-transform1a} by $\lambda^2(\alpha)\frac{d(\mu^{-1})}{d\rho }$, with the help of that
$\frac{d\mu(\alpha)}{d\alpha}\frac{d\mu^{-1}(\rho)}{d\rho}=1$, we obtain
\begin{align}
\lambda^2u^{\mu}\partial_{\mu}\alpha+\lambda^2\frac{d(\mu^{-1})}{d\rho }(\mu+\mu^*p)L_{i}^{\mu}\nabla_{\mu}u^{i}=&-3(\mu+\mu^*p)\lambda^2\frac{d(\mu^{-1})}{d\rho } u^{\mu}\nabla_{\mu}\Phi, \label{Euler-transform2a}\\
M_{ki}u^{\mu}\partial_{\mu}u^{i}+\frac{s^2L_{i}^{\mu}}{\mu+\mu^*p}\frac{d\mu }{d\alpha}\partial_{\mu}\alpha=&-L_{k}^{\mu}\partial_{\mu}\Phi. \label{Euler-transform2b}
\end{align}
The relation \eqref{e:Maksym} in Assumption \ref{a:Maksym} which is equivalent to
\begin{equation}\label{e:Maksym2}
\frac{d\mu(\alpha)}{d\alpha}=\frac{\lambda(\alpha)(\mu+\mu^*p)(\alpha)}{s(\alpha)}
\end{equation}
ensures that the above system \eqref{Euler-transform2a}--\eqref{Euler-transform2b} is symmetric. This is because \eqref{e:Maksym2} implies
\begin{equation}\label{Makino-ODE}
 \lambda^2(\mu+\mu^*p)\frac{d(\mu^{-1})}{d\rho}=\frac{s^2 d\mu/d\alpha}{\mu+\mu^*p}=\lambda s.
\end{equation}
In view of \eqref{Makino-ODE}, equations \eqref{Euler-transform2a}--\eqref{Euler-transform2b} turn to
\begin{align}
\lambda^2u^{\mu}\partial_{\mu}\alpha+\lambda sL_{i}^{\mu}\nabla_{\mu}u^{i}=&-3\lambda s u^{\mu}\nabla_{\mu}\Phi, \label{Euler-transform3a}\\
\lambda sL_{i}^{\mu}\partial_{\mu}\alpha+M_{ki}u^{\mu}\nabla_{\mu}u^{k}=&-L_{k}^{\mu}\partial_{\mu}\Phi,  \label{Euler-transform3b}
\end{align}
which is a \textit{non-degenerated} symmetric hyperbolic equation and the coefficient matrix does not include the troublesome variables $\rho$ and $p$.

\begin{remark}
	In \cite{Oliynyk2015} (also see in \cite{Oliynyk2016a,LeFloch2015a,Liu2017,Liu2018}), the author took another non-degenerated symmetric hyperbolic formulation for \eqref{Euler-original1}--\eqref{Euler-original2}, in which the new density variable is defined by $\xi=\xi(\rho)=\int_{\rho(1)}^{\rho}\frac{dy}{y+p(y)}$. Under this variable transformation, \eqref{Euler-original1}--\eqref{Euler-original2} become
\begin{align}
s^2u^{\mu}\partial_{\mu}\xi+s^2L_{i}^{\mu}\nabla_{\mu}u^{i}=&-3s^2u^{\mu}\partial_{\mu}\Phi,\label{Euler-transform4a}\\
s^2L_{i}^{\mu}\partial_{\mu}\xi+M_{ij}u^{\mu}\nabla_{\mu}u^{j}=&-L_{i}^{\mu}\partial_{\mu}\Phi. \label{Euler-transform4b}
\end{align}
It is evident that \eqref{Euler-transform3a}--\eqref{Euler-transform3b} coincide with \eqref{Euler-transform4a}--\eqref{Euler-transform4b} by choosing $\lambda=s$ and $\alpha=\xi$ provided $s$ is non-degenerate (in fact, $s=\sqrt{K}$ in \cite{Oliynyk2015}).
\end{remark}

It is evident that \eqref{Euler-transform3a}--\eqref{Euler-transform3b} admit a homogeneous solution $(\alpha,u^\mu)=(\bar{\alpha}(\tau),-\omega\delta^\mu_0)$, which leads to
\begin{equation}\label{background-solution1}
\partial_{\tau}\bar{\alpha}=\frac{3}{\tau} \bar{q}.
\end{equation}
Subtracting  \eqref{Euler-transform3a}--\eqref{Euler-transform3b} by \eqref{background-solution1}, we obtain the formulation of conformal Euler equations as follows,
\begin{align}
\lambda^2u^{\mu}\partial_{\mu}(\alpha-\bar{\alpha})+\lambda sL_{i}^{\mu}\partial_{\mu}u^{i}=& \frac{3}{\tau}\lambda s u^0 -\frac{3}{\tau}\lambda^2u^0\bar{q}-\lambda s L_{i}^{\mu}\Gamma_{\mu\nu}^{i}u^{\nu},\label{Euler-uppera} \\
M_{ki}u^{\mu}\partial_{\mu}u^{k}+\lambda sL_{i}^{\mu}\partial_{\mu}(\alpha-\bar{\alpha})=& L_{i}^{0}\frac{1}{\tau}-\frac{3}{\tau}\lambda s L_{i}^{0}\bar{q}-M_{ki}u^{\mu}\Gamma_{\mu\nu}^{k}u^{\nu}. \label{Euler-upperb}
\end{align}

\subsubsection{Step $3$: non-degenerated symmetric hyperbolic formulations of new ``good'' variables} \label{s:stp3}
Recall  \eqref{e:densty}--\eqref{e:denstydif} which are
\begin{align}\label{e:resvar}
	\alpha =\beta(\tau)\zeta, \quad \bar{\alpha} =\beta(\tau)\bar{\zeta},\quad u^{i}= \beta(\tau)v^{i} \AND \delta\zeta = \zeta-\bar{\zeta},
\end{align}
then re-express \eqref{Euler-uppera}--\eqref{Euler-upperb} in terms of these new variables $(\delta\zeta, v^i)$. They become
\begin{align}
\lambda^2u^{\mu}\partial_{\mu}\delta\zeta+\lambda sL_{i}^{\mu}\partial_{\mu}v^{i}=&S ,\label{Euler-final1a} \\
\lambda sL_{i}^{\mu}\partial_{\mu}\delta\zeta+M_{ki}u^{\mu}\partial_{\mu}v^{k}=&S_{i},  \label{Euler-final1b}
\end{align}
where
\begin{align*}
S =&\frac{1}{\tau} \frac{1}{\beta} 3\lambda^2 u^{0}(q-\bar{q})-\frac{1}{\tau} \chi(\tau) \lambda^2u^{0}\delta\zeta\nonumber\\
&+\frac{1}{\tau}  \chi(\tau) \left( \frac{\lambda sg_{ij}\beta(\tau) v^{j}}{ u_{0}}\right)v^{i}
+\frac{\lambda s\beta^{\prime}(\tau)g_{0i}u^{0}}{\beta(\tau)u_{0}}v^{i}-\frac{1}{\beta(\tau)}(\lambda s L_{i}^{\mu}\Gamma_{\mu\nu}^{i}u^{\nu}), 
\intertext{and}
S_{i}=&\frac{1}{\tau}\left(-\frac{g_{ik}}{u_{0}}\Bigl(1- 3 s^2 \frac{\bar{q}}{q} -\chi(\tau) \beta(\tau) \lambda s \delta\zeta \Bigr)-\chi(\tau)  M_{ki}u^{0} \right)v^{k}\nonumber\\
&+\frac{1}{\tau}\left(-\frac{1}{\beta(\tau)}(1-3\bar{s}^2)\right)\frac{g_{0i}u^{0}}{u_{0}}
+\frac{1}{\tau}\left(\frac{3(\lambda s-\bar{\lambda}\bar{s}) \bar{s}}{\bar{\lambda}\beta(\tau)}+\chi(\tau)\lambda s \delta\zeta \right)\frac{g_{0i}u^{0}}{u_{0}}\nonumber\\
&-\frac{1}{\beta(\tau)}(M_{ki}u^{\mu}\Gamma_{\mu\nu}^{k}u^{\nu}). 
\end{align*}

\begin{remark}\label{R:chgvar1}
	The variable $\alpha-\bar{\alpha}$ in above system is not suitable for the purpose of formulating the target singular hyperbolic system  \eqref{e:model1} in Appendix \ref{section:3.5}, since, with the help of \eqref{e:rhopro}, the remainder terms including $(\rho-\bar{\rho})/\tau^2$ and $(p-\bar{p})/\tau^2$ in the Einstein equations are \textit{regular} in $\tau$ and analytic in $\delta\zeta$ instead of $\alpha-\bar{\alpha}$, see \eqref{e:t2rho}, and the remainders have been represented by \eqref{e:remds1} in terms of variables $(\tau,\textbf{u}^{\alpha\beta},\textbf{u},\textbf{u}^{\alpha\beta}_\sigma, \textbf{u}_\sigma, \delta\zeta, v^i)$. In other words, if one use $\alpha-\bar{\alpha}$ as the variable, the remainders of the Einstein equation contain the $\frac{1}{\tau^2}$-singular terms. After changing $\alpha-\bar{\alpha}$ to $\delta\zeta$, $u^i$ have to been changed to $v^i$ to make sure this system to be \textit{symmetric}. Hence, we have to change the variables of fluids from $(\alpha-\bar{\alpha},u^i)$ to $(\delta\zeta,v^i)$. In summary,  we emphasize that using $\delta\zeta$ instead of $\alpha-\bar{\alpha}$ as the variable of the target equation \eqref{e:model1} is due to the requirement of the regular remainders in the Einstein equations and adopting $v^i$ rather than $u^i$
is due to the requirement of the symmetric coefficient matrix of the Euler equations.
\end{remark}

Next, in order to apply the singular hyperbolic system \eqref{e:model1} in Appendix \ref{section:3.5}, we have to distinguish the singular (in $\tau$) and the regular terms in $S$ and $S_i$. To achieve this, we expand the following key quantities by direct calculations. With the help of \eqref{e:gij}--\eqref{e:ui},  \eqref{e:Mdef} becomes
\begin{align}\label{e:Mexp}
	M_{ik} = & g_{ki} +\beta^2(\tau) \mathcal{W}_{ki}(\tau,\textbf{U},\textbf{V}) +\beta^3(\tau) \mathcal{U}_{ki}(\tau,\textbf{U},\textbf{V}) \notag  \\
	&+\tau\beta(\tau) \mathcal{V}_{ki}(\tau,\textbf{U},\textbf{V}) +\tau  \mathcal{S}_{ki}(\tau,\textbf{U},\textbf{V}).
\end{align}
Note that $\tau \beta$, $\beta^3$ and  $\beta^2 \in C^1([0,1])$ (by Assumption \ref{a:tLip},  \ref{a:postvty} and \eqref{e:dtbt2}) which surely can be absorbed into the Calligraphic remainders.
However, later on we expect $\frac{1}{\tau}(M_{ik}-g_{ik})$ is regular in $\tau$ and since $\beta^2$, $\beta^3$ and $\tau\beta\lesssim \tau$ (due to the fact that $\beta (\tau)\lesssim \sqrt{\tau}$), it is better to expand $M_{ik}$ in this form. Similarly, using \eqref{e:gij}--\eqref{e:ui}, we expand
\begin{align}\label{e:expL}
L^0_i=&-\omega\beta\delta_{ij} v^j+\beta\mathcal{T}_i(\tau,\textbf{U}, \textbf{V})+\tau\mathcal{S}_i(\tau,\textbf{U}, \textbf{V})  \notag  \\
&+\beta^2\mathcal{W}_i(\tau,\textbf{U}, \textbf{V})+\tau\beta\mathcal{V}_i(\tau,\textbf{U}, \textbf{V}).
\end{align}
Calculating $\Gamma^i_{00}$ yields
\begin{align*}
	\Gamma^i_{00} = &  \frac{1}{2} g^{i0} \partial_{0} g_{00} + g^{ij} \partial_{0} g_{j0} -\frac{1}{2} g^{ij} \partial_{j} g_{00} \notag  \\
	= &  \tau \mathcal{S}^i_{00}(\tau, \textbf{u}^{\alpha\beta}_\sigma, \textbf{u}^{\mu\nu}, \textbf{u})  - g_{jk} g^{ij} \eta_{00} (\textbf{u}^{0k}_0+3\textbf{u}^{0k} )+ \frac{1}{2} g^{ij}  (\eta_{00})^2  \textbf{u}^{00}_j .
\end{align*}
Then by \eqref{e:gij}--\eqref{e:ui}, further expansions give
\begin{align}
L_{i}^{\mu}\Gamma_{\mu\nu}^{i}u^{\nu}=&-\frac{u_i}{u_0}\Gamma_{00}^{i}u^{0}-\frac{u_i}{u_0}\Gamma_{0j}^{i}u^{j}
+\delta_{i}^{j}\Gamma_{j0}^{i}u^{0}+\delta_{i}^{j}\Gamma_{j k}^{i}u^{k}\nonumber\\
=&\frac{u^{0}}{2}g^{ik}(\partial_{i}g_{k0}+\partial_{\tau}g_{ki}-\partial_{k}g_{i0})+ \beta(\tau) \texttt{S}(\tau, \textbf{U},\textbf{V}).\label{e:LGU}
\end{align}
Noting
\begin{equation*}
	\partial_\lambda g_{\mu\nu}=-g_{\mu\alpha} g_{\beta\nu} \partial_\lambda g^{\alpha\beta},
\end{equation*}
we derive that
\begin{align}
u^{\mu}\Gamma_{\mu\nu}^{k}u^{\nu} =&u^{0}\Gamma_{00}^{k}u^{0}+2u^{i}\Gamma_{i0}^{k}u^{0}+u^{j}\Gamma^{k}_{ji}u^{i}\nonumber\\
 =& g^{kj}g_{jl}(\textbf{u}_{0}^{0l}+3\textbf{u}^{0l})-\frac{g^{kj}}{2}\eta_{00}\textbf{u}^{00}_{j} +\beta(\tau)\texttt{S}^k(\tau, \textbf{U},\textbf{V}). \label{e:ugu}
\end{align}
By the mean value theorem, express $s(\alpha)-s(\bar{\alpha})$ and $\lambda(\alpha)- \lambda (\bar{\alpha})$ in terms of $\delta \zeta$,
\begin{align}
s(\alpha)-s(\bar{\alpha}) =&s'(\alpha_{K_6})(\alpha-\bar{\alpha})=\beta(\tau)s'(\alpha_{K_6})\delta\zeta  \label{e:s-bs}
\intertext{and}
\lambda(\alpha)- \lambda (\bar{\alpha}) =&\lambda'(\alpha_{K_5}) (\alpha-\bar{\alpha})=\beta(\tau)\lambda'(\alpha_{K_5})\delta\zeta,  \label{e:l-bl}
\end{align}
where $\alpha_{K_\ell}$ ($\ell=5,6$) are intermediate points defined in \S \ref{s:intpt}  
for some constants $K_5, K_6\in (0,1)$. Since  $\alpha-\bar{\alpha}:=\beta(\tau)\delta\zeta$, then
\begin{equation}\label{e:alexp}
\alpha_{K_\ell}:=\bar{\alpha}+K_\ell(\alpha-\bar{\alpha})=\bar{\alpha}+K_\ell\beta(\tau) \delta\zeta =\texttt{S}(\tau, \delta\zeta).
\end{equation}

With the help of $q = s /\lambda $, we have
\begin{align*}
1- 3 s^2 \frac{\bar{q}}{q} =&1- 3(\lambda-\bar{\lambda}) s \bar{q}  -3(s-\bar{s})\bar{s}-3\bar{s}^2\nonumber\\
 =&1-3\bar{s}^2-3\beta(\tau) \lambda'(\alpha_{K_5}) s\bar{q}\delta\zeta-3\beta(\tau)s'(\alpha_{K_6}) \bar{s}\delta\zeta,
\intertext{and}
q-\bar{q}=&\beta q'(\bar{\alpha})\delta\zeta +\frac{1}{2} \beta^2 q''\bigl(\alpha_{K_7}\bigr)(\delta\zeta)^2,  
\end{align*}
where $\alpha_{K_7}:=\bar{\alpha}+K_7(\alpha-\bar{\alpha})$ for constant $K_7\in (0,1)$.
With the help of identities
\begin{align*}
g_{ik}g^{ij} = \delta_{k}^{j}-g_{0k}g^{0j}  \AND
g_{ik}g_{jl}g^{ij} = g_{lk}-g_{ik}g_{0l}g^{0i},
\end{align*}
we can rewrite $S$ and $S_{i}$ as follows
\begin{align}
S =&\frac{1}{\tau}\lambda^2 u^{0}\left[3q'(\bar{\alpha})+\frac{3}{2} q''\bigl(\alpha_{K_7}\bigr)\beta \delta\zeta -\chi(\tau)  \right]\delta\zeta  +\frac{1}{\tau}\chi(\tau) \left(\frac{ \lambda sg_{ij}\beta(\tau) v^{j}}{u_{0}}\right)v^{i}
\nonumber \\
&+\frac{\lambda s\beta^{\prime}(\tau)g_{0i}u^{0}}{\beta(\tau)u_{0}}v^{i}   - \lambda \frac{s}{\beta(\tau)} \frac{u^{0}}{2}g^{ik}(\partial_{i}g_{k0}+\partial_{\tau}g_{ki}-\partial_{k}g_{i0})-  \texttt{S}(\tau, \textbf{U})  \label{e:S}
\intertext{and}
S_{i}
 =&\frac{1}{\tau}\left(-\frac{g_{ik}}{u_{0}}\Bigl(1- 3 s^2 \frac{\bar{q}}{q} -\chi(\tau)\beta(\tau)\lambda s \delta\zeta\Bigr)-\chi(\tau)M_{ki}u^{0}\right)v^{k}\nonumber\\
 &-2\left(\frac{3(\lambda s-\bar{\lambda}\bar{s}) \bar{s}}{\bar{\lambda}\beta(\tau)}+\chi(\tau)\lambda s \delta\zeta\right)g_{ij}\textbf{u}^{0j}-\frac{1}{\tau}\frac{\tau g_{ij}}{\beta(\tau)}\left((1+6\bar{s}^2)\textbf{u}^{0j}+\textbf{u}_{0}^{0j}\right)\nonumber\\
 &
+\frac{\eta_{00} \textbf{u}_{i}^{00}}{2\beta(\tau)}+\texttt{S}_{i}(\tau,\textbf{U},\textbf{V}). \label{e:Si}
\end{align}

\subsubsection{Step $4$: non-degenerated symmetric hyperbolic formulations of new ``better'' variables} \label{s:stp4}
As mentioned in Remark \ref{R:chgvar1}, we need to rescale the velocity to $v^i$ by $\beta$ to ensure the system to be symmetric. However, this brings a new singular term $\frac{\partial_{\tau}g^{0l}}{\beta(\tau)}$ in $S_i$.
Although equations \eqref{Euler-final1a}--\eqref{Euler-final1b} seem to be consistent with the non-degenerated singular hyperbolic system given in Appendix \ref{section:3.5}, $S_i$ involves $\tau$-singular terms of $\textbf{u}^{00}_i$, $\textbf{u}^{0j}$ and $\textbf{u}^{0j}_0$ in a ``bad'' way in \eqref{e:Si}, which destroys the structure of the singular term in the system of Appendix \ref{section:3.5}. In order to overcome this difficulty, we introduce a new variable

\begin{equation}\label{e:NewVar}
\textbf{v}^{k}=v^{k}-Ag^{0k}=v^{k}-2\tau A \textbf{u}^{0k},
\end{equation}
where
\begin{equation}\label{A}
A=A(\tau)=-\frac{3s^2(\bar{\alpha}(\tau))}{\sqrt{-\eta^{00}}\beta(\tau)}=-\frac{3\bar{s}^2}{\omega\beta(\tau)}.
\end{equation}
This new variable adjusts the relations between $\textbf{u}^{0j}$ and $\textbf{u}^{0j}_0$ to a ``good'' form via subtracting $2\tau A \textbf{u}^{0k}$ (hence the equations at the end of this section will precisely agree with the one in Appendix \ref{section:3.5}).
Note that by Assumption \ref{a:postvty}, $\bar{s}\lesssim \beta$, then $A\lesssim 1$ is a regular term.

Expressing \eqref{Euler-final1b} in terms of $\textbf{v}^{k}$ yields
\begin{equation}\label{e:el}
M_{ki}u^{\mu}\partial_{\mu}(\textbf{v}^{k})+\lambda s L_{i}^{\mu}\partial_{\mu}\delta\zeta =M_{ki}u^{\mu}\partial_{\mu}(Ag^{0k})+S_{i}.
\end{equation}
Then direct calculations give
\begin{align}
M_{ki}u^{\mu}\partial_{\mu}(Ag^{0k})+S_{i}
 =&\frac{1}{\tau}\left(-\frac{g_{ik}}{u_{0}}\Bigl(1-3 s^2 \frac{\bar{q}}{q}-\chi(\tau)\beta(\tau)\lambda s \delta\zeta\Bigr)-\chi(\tau)g_{ki}u^{0}\right)v^{k}\nonumber\\
 &\hspace{-1.5cm} -g_{ij}\left(\frac{1}{\beta(\tau)}((1+6\bar{s}^2)\textbf{u}^{0j}+\textbf{u}_{0}^{0j})+\sqrt{-\eta^{00}}A(\textbf{u}_{0}^{0j}+3\textbf{u}^{0j})\right)\nonumber\\
 &\hspace{-1.5cm}+M_{ki}u^{0}(\partial_{\tau}A)g^{0k}
 +M_{ki}u^{j}\partial_{j}(Ag^{0k})
+\frac{\eta_{00} \textbf{u}_{i}^{00}}{2\beta(\tau)}+\hat{\texttt{S}}_{i}(\tau,\textbf{U},\tilde{\textbf{V}}).  \label{e:MS1}
\end{align}
Let us focus on the dangerous terms on the right hand side of \eqref{e:MS1}. By \eqref{A} and the equation \eqref{background-solution1}, we have
\begin{align}
(\partial_{\tau}A)g^{0k}=&-2\tau\textbf{u}^{0k}\left(\frac{6\bar{s}}{\omega\beta}s'(\bar{\alpha})\partial_\tau \bar{\alpha} -\frac{3\bar{s}^2\partial_{\tau}\omega}{\omega^2\beta}\right)
+2\chi\textbf{u}^{0k}\left(\frac{3\bar{s}^2 }{\omega\beta}\right) \notag\\
= & -  \textbf{u}^{0k} \frac{36\bar{s}^2}{\omega\beta\bar{\lambda}}s'(\bar{\alpha}) -  \textbf{u}^{0k}\frac{6\Omega\bar{s}^2}{\omega^3\beta}
+2\chi\textbf{u}^{0k}\left(\frac{3\bar{s}^2 }{\omega\beta}\right). \label{e:dtAg}
\end{align}
Assumption \ref{a:postvty} and \eqref{e:bgrhoest3} imply $(\partial_{\tau}A)g^{0k}$ is not singular in $\tau$.
Then, with the help of identities
\begin{gather*}
1+6\bar{s}^2+3\sqrt{-\eta^{00}}\beta(\tau)A=1-3\bar{s}^2, \\
A\partial_{\tau}g^{0k}=A(\textbf{u}^{0k}_{0}+3\textbf{u}^{0k}) \AND \partial_{j}(Ag^{0k})=A\partial_{j}g^{0k}=A\textbf{u}_{j}^{0k},
\end{gather*}
we derive a further expression of \eqref{e:el}--\eqref{e:MS1},
\begin{align}
\lambda s L_{i}^{\mu}\partial_{\mu}\delta\zeta+M_{ki}u^{\mu}\partial_{\mu}(\textbf{v}^{k}) =&\frac{1}{\tau}\left(-\frac{g_{ik}}{u_{0}}\Bigl(1-3 s^2 \frac{\bar{q}}{q}-\chi(\tau)\beta(\tau)\lambda s \delta\zeta\Bigr)-\chi(\tau)g_{ki}u^{0}\right)\textbf{v}^{k}
\nonumber\\
 &-\frac{g_{ij}}{\tau}\left(\frac{\tau}{\beta(\tau)}(1-3\bar{s}^2)(\textbf{u}^{0j}_{0}+\textbf{u}^{0j})\right)
+\frac{1}{\tau}\frac{\tau\eta_{00} \textbf{u}_{i}^{00}}{2\beta(\tau)}\nonumber\\
 &+\hat{\texttt{S}}_{i}(\tau,\textbf{U},\tilde{\textbf{V}}),   \label{e:Eufin1}
\end{align}
where $\hat{\texttt{S}}_{i}$ satisfies $\hat{\texttt{S}}_i(\tau,0,0)=0$.

Next, let us express \eqref{Euler-final1a} in terms of $\textbf{v}^{k}$. Direct calculation gives
\begin{equation}\label{e:Sieq1}
\lambda^2u^{\mu}\partial_{\mu}\delta\zeta+\lambda sL_{i}^{\mu}\partial_{\mu}\textbf{v}^{i} = S+\lambda sL_{i}^{\mu}\partial_{\mu}(Ag^{0i}).
\end{equation}
Note that
\begin{align*}
	\partial_\mu (A g^{0i})= \delta^0_\mu (\partial_\tau A)g^{0i}+A\delta^0_\mu(\textbf{u}^{0i}_0+3\textbf{u}^{0i}) +A\delta^j_\mu \textbf{u}^{0i}_j
\end{align*}
is regular in $\tau$ due to \eqref{e:dtAg},
 and using \eqref{e:s-bs} and Assumption \ref{a:postvty}.$(1)$ that $\bar{s} \lesssim \beta(\tau)$,
\begin{align*}
	\frac{s}{\beta(\tau)}=\frac{s-\bar{s}}{\beta(\tau)}+\frac{\bar{s}}{\beta(\tau)}= s'(\alpha_{K_6})\delta\zeta  +\frac{\bar{s}}{\beta(\tau)}
\end{align*}
is also regular in $\tau$, which can be absorbed into $\hat{\texttt{F}}(\tau,\textbf{U},\tilde{\textbf{V}})$ in the following equation. Thus, eventually, the equation \eqref{e:Sieq1} becomes
\begin{align}\label{e:Eufin2}
\lambda^2u^{\mu}\partial_{\mu}\delta\zeta+\lambda sL_{i}^{\mu}\partial_{\mu}\textbf{v}^{i}
 =&\frac{1}{\tau}\lambda^2 u^{0}\left[3q'(\bar{\alpha})+\frac{3}{2} q''\bigl(\alpha_{K_7}\bigr)\beta \delta\zeta -\chi(\tau) \right]\delta\zeta  \nonumber\\
 &
+\frac{\chi(\tau)}{\tau}\left(\frac{\beta(\tau)\lambda sg_{ij}\textbf{v}^{j}}{u_{0}}\right)\textbf{v}^{i}
+\hat{\texttt{F}}(\tau,\textbf{U},\tilde{\textbf{V}}),
\end{align}
where $\hat{\texttt{F}}(\tau,0,0)=0$.

Combining \eqref{e:Eufin1} and \eqref{e:Eufin2} together, we bring them to a matrix form
\begin{equation}\label{symmetric-Euler}
N^{\mu}\partial_{\mu}\tilde{\textbf{V}}=\frac{1}{\tau}\textbf{N}\textbf{P}^\dagger\tilde{\textbf{V}}+\frac{1}{\tau}(\textbf{E}_0 \delta_\mu^0+\textbf{E}_{q}\delta_\mu^q) \textbf{U}^{\mu}
+F(\tau,\tilde{\textbf{V}},\textbf{U}),
\end{equation}
where
\begin{align}\label{e::Vtil}
	\tilde{\textbf{V}}=(\delta\zeta,\textbf{v}^{p})^{T}, \qquad  \textbf{U}^{\mu}=(\textbf{u}^{0\mu}_{0},\textbf{u}^{0\mu}_{j},\textbf{u}^{0\mu})^{T}
\end{align}
and
\begin{align*}
N^{\mu}=\left(
\begin{array}{cc}
\lambda^2 u^{\mu}& \lambda s L_{p}^{\mu}\\
\lambda s L_{r}^{\mu}& M_{rp}u^{\mu}
\end{array}\right),\quad
\textbf{E}_0 =\frac{\tau\eta_{00} }{2\beta(\tau)}\left(
\begin{array}{ccc}
0&0&0\\
0&\delta^j_r&0
\end{array}\right),
\end{align*}
\begin{align*}
\textbf{N}=\left(
\begin{array}{cc}
\lambda^2 u^{0}\left[3q'(\bar{\alpha})+\frac{3}{2} q''\bigl(\alpha_{K_7}\bigr)\beta \delta\zeta -\chi(\tau) \right] & \frac{1}{u_{0}}\chi(\tau)\beta (\tau)\lambda s g_{ij}\textbf{v}^{j}\\
\frac{1}{u_{0}}\chi(\tau)\beta (\tau)\lambda s g_{rj}\textbf{v}^{j} & -\frac{g_{ir}}{u_{0}}\bigl(1-3 s^2 \frac{\bar{q}}{q} \bigr)-\chi(\tau)g_{ri}u^{0},
\end{array}\right),
\end{align*}
\begin{align*}
\textbf{E}_{q}=-\frac{\tau}{\beta(\tau)}(1-3\bar{s}^2) \left(
\begin{array}{ccc}
0&0&0\\
g_{rq}&0&g_{rq}
\end{array}\right) ,
\quad
\textbf{P}^\dagger=\left(
\begin{array}{cc}
1&0\\
0&\delta_{p}^i
\end{array}
\right).
\end{align*}
and $F=(\hat{\texttt{F}}, \hat{\texttt{S}}_{i})^T$.

\subsection{Formulation of conformal Euler equations for \textit{Fluids} $(II)$}\label{section:3.3a}
When $\beta\equiv $ constant, by \eqref{e:rhopro} in Assumption \ref{a:tLip}, the suitable variable of density is $\delta\zeta=\alpha-\bar{\alpha}$ (see Remark \ref{R:chgvar1}). Therefore, once obtaining \eqref{Euler-uppera}--\eqref{Euler-upperb}, we do not need to proceed Step $3$--$4$, that is, \S \ref{s:stp3}--\S \ref{s:stp4}, which are performed due to the bad extra decay rate ($\beta(\tau)\lesssim \sqrt{\tau}$) of density and velocity in that case. However, another procedure has been carried out if $q$ is conserved with respect to the perturbations, that is
\begin{equation}\label{e:slcons}
	q= \bar{q}.
\end{equation}
We have to change $u^j$ to $u_q$, otherwise, Condition \eqref{c:6} in Appendix \ref{section:3.5} can not be satisfied due to the degeneracy of $\textbf{BP}$ and $\textbf{P}u$ is the velocity $u^j$, by letting $\textbf{P}u=0$ in $\textbf{P}^\perp B(t,u) \textbf{P}$ and noting
\begin{align}
	u_q=g_{q0} u^0+g_{qi} u^i,  \label{e:u-q}
\end{align}
then $\textbf{P}^\perp B(t,\textbf{P}^\perp u) \textbf{P}=\textbf{P} B(t,\textbf{P}^\perp u) \textbf{P}^\perp \neq 0$. For more details of this case, see \cite{Oliynyk2016a,LeFloch2015a,Liu2017,Liu2018}. In other words, a good expression of the Euler equations verifying Condition \eqref{c:6} in Appendix \ref{section:3.5} relies on the variable $u_q$ instead of $u^i$. In this section, we rewrite the Euler equations in terms of variables $(\delta\zeta, u_q)$ first.

We express $u^k$ in terms of $(g^{\mu\nu}, u_q)$ by performing changes of variables from $u^i$ to $u_q$, which are related
via the map $u^i = u^i(u_q, g^{\mu\nu})$ given by
\begin{equation}\label{velocity}
u^{k}=g^{ki}u_{i}+g^{k0}u_{0}=g^{ki}u_{i}+g^{k0}\frac{-g^{0i}u_{i}-\sqrt{(g^{0i}u_{i})^2-g^{00}(1+g^{ij}u_{i}u_{j})}}{g^{00}}.
\end{equation}
In above derivation, we have used the normalization \eqref{e:nol2}, which is $g^{\alpha\beta}u_\alpha u_\beta=-1$,
to obtain
\begin{equation*}\label{e:ud0}
u_{0}=\frac{-g^{0i}u_{i}-\sqrt{(g^{0i}u_{i})^2-g^{00}(1+g^{ij}u_{i}u_{j})}}{g^{00}}.
\end{equation*}
Denote $J^{ij}$ the Jacobian of $u^i = u^i(u_j, g^{\mu\nu})$.
Differentiating \eqref{velocity} with respect to $u_j$, we calculate
\begin{align}\label{Jacobi}
J^{ij} =&\frac{\partial u^{i}}{\partial u_{j}}=g^{ij}+\frac{g^{0i}}{g^{00}}\left(-g^{0j}-
\frac{2 g^{0j} g^{0k}u_{k}-g^{00}g^{kj}u_{k}}{\sqrt{(g^{0k}u_{k})^2-g^{00}(1+g^{kj}u_{k}u_{j})}}\right)\nonumber\\
 =&\delta^{ij}+\textbf{u}^{ij}+\frac{\eta^{ij}}{\omega^2}(2\tau\textbf{u}^{00}-\textbf{u})+\mathcal{S}^{ij}(\tau,\textbf{u},\textbf{u}^{\mu\nu}) +\frac{2\tau\textbf{u}^{0i}}{\eta^{00}+2\tau \textbf{u}^{00}}\Big(-2\tau\textbf{u}^{0j}\nonumber\\
 &-\frac{4\tau^2\textbf{u}^{0j}\textbf{u}^{0k}u_{k}-(\eta^{00}+2\tau\textbf{u}^{00})g^{kj}u_{k}}{2\sqrt{(2\tau\textbf{u}^{0k}u_{k})^2-(\eta^{00}+2\tau\textbf{u}^{00})
		(1+g^{kl}u_{k}u_{l})}}\Big)\nonumber\\
 =&\delta^{ij}+\mathcal{J}^{ij}(\tau,\textbf{u}^{0\nu},u_{k}).
\end{align}
Where $\mathcal{J}^{ij}(\tau,0,0)=0$.
We differentiate $u^i$ with respect to $x^\mu$, then a simple chain rule gives
\begin{equation}\label{variable-transform}
\partial_{\mu}u^{i}=J^{ij}\partial_{\mu}u_{j}+\frac{\partial u^{i}}{\partial g^{\alpha\beta}}\partial_{\mu}g^{\alpha\beta}.
\end{equation}
Inserting \eqref{variable-transform} into \eqref{Euler-uppera}--\eqref{Euler-upperb}, and  multiplying both sides of \eqref{Euler-upperb} by $J^{ij}$, we can rewrite the Euler equations as
\begin{align}
\lambda^2u^{\mu}\partial_{\mu}(\alpha-\bar{\alpha})+\lambda sL_{i}^{\mu}J^{iq}\partial_{\mu}u_{q}=&
\frac{3}{\tau}\lambda^2 u^0\bigl( q -\bar{q} \bigr)   \notag  \\
&  -\lambda sL_{i}^{\mu}\left(\frac{\partial u^{i}}{\partial (g^{\alpha\beta})}\partial_{\mu}g^{\alpha\beta}+\Gamma_{\mu\nu}^{i}u^{\nu}\right)
,\label{final2} \\
M_{ki}u^{\mu}J^{kj}J^{iq}\partial_{\mu}u_{q}+\lambda sJ^{ij}L_{i}^{\mu}\partial_{\mu}(\alpha-\bar{\alpha})=&J^{jq}\Bigl[
-\frac{1}{\tau} \frac{1}{u_{0}} \bigl(1 -  3 s^2 \frac{\bar{q}}{q} \bigr)  u_{q} \notag\\
&   -M_{ki}u^{\mu}
\Bigl(\frac{\partial u^{k}}{\partial (g^{\alpha\beta})}\partial_{\mu}g^{\alpha\beta}+\Gamma_{\mu\nu}^{k}u^{\nu}\Bigr)
\Bigr]. \label{final2b}
\end{align}

Multiplying $\frac{1}{\lambda^2 u^0}$ on the both sides of above Euler equations  \eqref{final2}--\eqref{final2b} (this step is necessary in order to satisfy Condition \eqref{c:7} of the singular hyperbolic system \eqref{e:model1}, see \eqref{e:PDDP} for the examination of this condition) and rephrasing them to a more compact matrix form yields that
\begin{equation}\label{lower-case}
\hat{N}^{\mu}\partial_{\mu} \hat{\textbf{V}} = \frac{1}{\tau}\hat{\textbf{N}}\hat{\textbf{P}}^\dagger \hat{\textbf{V}}+\hat{H}(\tau,\textbf{U},\hat{\textbf{V}}),
\end{equation}
where $\hat{\textbf{V}}=(\delta\zeta, u_{q})^T$, $\hat{N}^{\mu}$ and $\hat{H}$ are given by
\begin{align*}
\hat{N}^{\mu}=\left(\begin{array}{cc}
 \frac{ u^{\mu}}{  u^0}& q J^{iq}L_{i}^{\mu}\frac{1}{  u^0}\\
q J^{ij}L_{i}^{\mu} \frac{1}{  u^0} & \frac{q^2}{s^2} M_{ki}J^{kj}J^{iq}u^{\mu} \frac{1}{  u^0}
\end{array}
\right),
\end{align*}
and
\begin{align*}
\hat{H}=\frac{1}{\lambda^2 u^0}\left(\begin{array}{c}
-\lambda sL_{i}^{\mu}(\frac{\partial u^{i}}{\partial (g^{\alpha\beta})}\partial_{\mu}g^{\alpha\beta}+\Gamma_{\mu\nu}^{i}u^{\nu})\\
-J^{ij}M_{ki}u^{\mu}
(\frac{\partial u^{k}}{\partial (g^{\alpha\beta})}\partial_{\mu}g^{\alpha\beta}+\Gamma_{\mu\nu}^{k}u^{\nu})
\end{array}
\right).
\end{align*}

In order to use Theorem \ref{pro:3.16}, we list two cases to choose $\hat{\textbf{N}}$ and $\hat{\textbf{P}}^\dagger$:
\begin{enumerate}
\item \label{case1} If $q= \bar{q} $ and $1-3\bar{s}^2\geq \hat{\delta}$ hold, then set
\begin{align}
\hat{\textbf{N}}= \left(\begin{array}{cc}
1 & 0\\
0&-\frac{1}{\lambda^2 u^0 u_{0}}(1-3 s^2 )J^{ij}
\end{array}
\right)
\AND
\hat{\textbf{P}}^\dagger=\left(\begin{array}{cc}
0&0\\
0&\delta^{q}_{i}
\end{array}
\right). \label{e:case1}
\end{align}

\item \label{case1b} If $q= \bar{q}$ and $1-3s^2\equiv 0$ hold, then set
\begin{align}
\hat{\textbf{N}}= \left(\begin{array}{cc}
1 & 0\\
0& \frac{1}{\bar{\lambda}^2}\delta^{iq}
\end{array}
\right)
\AND
\hat{\textbf{P}}^\dagger=\left(\begin{array}{cc}
0&0\\
0&0
\end{array}
\right). \label{e:case1b}
\end{align}

\end{enumerate}

Now we have transformed the Einstein--Euler system to the target singular symmetric hyperbolic form. In the next section, we focus on examining the Conditions in Appendix \ref{section:3.5} by above formulations.



\section{Proof of the main Theorems}\label{s:theorem:1.4}
In this section, we prove the main theorem based on Theorem \ref{pro:3.16}.

\subsection{Proof of Theorem \ref{theorem:1.4}}

\subsubsection{Local existence and continuation of reduced conformal Einstein--Euler equations}

The local existence and continuation of reduced conformal Einstein--Euler equations can be derived using standard local existence and continuation results for
symmetric hyperbolic systems (see, e.g. Theorems 2.1 and 2.2 of \cite{Majda2012}), as long as the reduced conformal Einstein--Euler equations are well-defined when they are  written as a symmetric hyperbolic
system, that is, the conformal metric $g^{\mu\nu}$ remains non-degenerate and the conformal fluid four-velocity remains future directed,
\begin{equation*} \label{e:welldef}
\det(g^{\mu\nu})< 0 \AND u^0 < 0.
\end{equation*}
and $\rho$ remains strictly positive, and the new variables $(\textbf{U},\textbf{V})$ or $(\textbf{U},\tilde{\textbf{V}})$ are equivalent to the original ones $(g^{\mu\nu}, \partial_\sigma g^{\mu\nu}, \rho, u^\mu)$. We omit the details for which we refer readers to \cite[\S $3$]{Liu2017} and \cite[\S $4$]{Liu2018}, but give the proposition to state this result.
\begin{proposition} \label{rcEEexist}
	Suppose $k\in \mathbb{Z}_{\geq 3}$, $\Lambda>0$,    $(g^{\mu\nu}_0)$ $\in$ $H^{k+1}(\mathbb{T}^3,\mathbb{S}_4)$,
	and $(g^{\mu\nu}_1)$ $\in$ $H^{k}(\mathbb{T}^3,\mathbb{S}_4)$,
	$\nu^\alpha$ $\in$  $H^{k}(\mathbb{T}^3,\mathbb{R}^4)$ and $\rho_0$ $\in$ $H^{k}(\mathbb{T}^3)$,
	where $\nu^\alpha$ is normalized by $g_{0\alpha\beta}\nu^\alpha\nu^\beta=-1$, and $\det(g^{\mu\nu}_0) < 0$
	and $\rho_0>0$ on
	$\mathbb{T}^3 $. Then there exists a $T_1 \in (0,1]$ and
	a unique classical solution
	\begin{align*}
	(g^{\mu\nu},u^{\mu}, \rho )\in \bigcap_{\ell=0}^2 C^{\ell}((T_1,1],H^{k+1-\ell}(\mathbb{T}^3 ))
	\times \bigcap_{\ell=0}^1 &  C^{\ell}((T_1,1],H^{k-\ell}(\mathbb{T}^3))  \notag \\
	& \times \bigcap_{\ell=0}^1  C^{\ell}((T_1,1],H^{k-\ell}(\mathbb{T}^3)),
	\end{align*}
	of the conformal Einstein--Euler equations, given by \eqref{e:cEin1} and \eqref{e:cEu1}, on the spacetime region
	$(T_1,1]\times \mathbb{T}^3 $ that satisfies
	\begin{equation*}
	(g^{\mu\nu}, \partial_\tau g ^{\mu\nu},\rho, u^\alpha )|_{\tau=1} =( g^{\mu\nu}_0,g^{\mu\nu}_1, \rho_0,\nu^\alpha ).
	\end{equation*}
	Moreover,
	\begin{itemize}
		\item[(i)] the vector $(\textbf{U},\textbf{V})$, see \eqref{e:UV}, is well-defined, lies in the space
		\begin{equation*}
		(\textbf{U},\textbf{V}) \in  \bigcap_{\ell=0}^1  C^{\ell}((T_1,1],H^{k-\ell}(\mathbb{T}^3,\mathbb{V})), \label{e:locU}
		\end{equation*}
		where
		\begin{equation*}\label{e:bbV}
		\mathbb{V} = \mathbb{R}^4\times\mathbb{R}^{12}\times \mathbb{R}^4\times\mathbb{S}_3\times  (\mathbb{S}_3)^3\times\mathbb{S}_3\times \mathbb{R}
		\times\mathbb{R}^3\times\mathbb{R} \times \mathbb{R}\times\mathbb{R}^3,
		\end{equation*}
		and solves \eqref{4.26}--\eqref{4.28} and the Euler equations \eqref{symmetric-Euler} basing on a transformation \eqref{e:NewVar}, or \eqref{lower-case} via transformation \eqref{e:u-q} on the spacetime region $(T_1,1]\times \mathbb{T}^3$, and
		\item[(ii)] there exists a constant $\sigma > 0$, independent of   $T_1\in (0,1)$, such that if $(\textbf{U},\textbf{V})$ satisfies
		\begin{equation*}
		\|(\textbf{U},\textbf{V})\|_{L^\infty((T_1,1],H^s(\mathbb{T}^3))} < \sigma,
		\end{equation*}
		then the solution $(g^{\mu\nu},u^{\mu}, \rho)$ can be uniquely continued as a classical solution
		with the same regularity
		to the larger spacetime region $(T^*_1,1]\times \mathbb{T}^3$ for some $T^*_1 \in (0,T_1)$.
	\end{itemize}
\end{proposition}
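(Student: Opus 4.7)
The plan is to treat the reduced conformal Einstein--Euler system on the time slab $(T_1,1]$ with $T_1>0$ as a \emph{bona fide} (non-singular) symmetric hyperbolic system, apply off-the-shelf local existence and continuation theorems, and then translate the result back through the variable changes. On $(T_1,1]$ the apparent singular coefficient $1/\tau$ is uniformly bounded, so the combined system \eqref{4.26}--\eqref{4.28} together with either \eqref{symmetric-Euler} (for Makino-type $(1)$ fluids) or \eqref{lower-case} (for Makino-type $(2)$ fluids) has the form $A^\mu(\textbf{W})\partial_\mu \textbf{W} = F(\tau,\textbf{W})$ with $A^0$ symmetric positive definite, $A^i$ symmetric, and $F$ smooth, whenever $\det(g^{\mu\nu})<0$, $\rho>0$, and $u^0<0$. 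Under the initial hypotheses these open conditions hold on $\Sigma_1$, hence persist on a short strip below $\tau=1$.

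First I would check that the change of variables $(g^{\mu\nu},\partial_\sigma g^{\mu\nu},\rho,u^\mu)\longleftrightarrow(\textbf{U},\textbf{V})$ defined by \eqref{4.12}--\eqref{e:denstydif}, together with the further replacement \eqref{e:NewVar} or \eqref{velocity}, is a $C^\infty$ diffeomorphism on the open set carved out by $\det(g^{\mu\nu})<0$, $\rho>0$, $u^0<0$. Invertibility of $\mu$ (Assumption \ref{a:Maksym}), of the rescaling $\beta(\tau)$ (Assumption \ref{a:tLip}), of the Jacobian $J^{ij}$ from \eqref{Jacobi}, and of the affine map \eqref{e:NewVar}, all follow from the assumptions in \S\ref{S:Makinolike} and the non-degeneracy of $g^{\mu\nu}$, so $(\textbf{U},\textbf{V})$ lies in $\mathbb{V}$ with the same Sobolev regularity as the original data. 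Initial data $(g_0^{\mu\nu},g_1^{\mu\nu},\rho_0,\nu^\alpha)$ in $H^{k+1}\times H^k\times H^k\times H^k$ therefore push forward to initial data for $(\textbf{U},\textbf{V})$ in $H^k$.

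Second, I would invoke the standard local existence theorem for quasilinear symmetric hyperbolic systems (e.g. Majda, Theorems $2.1$--$2.2$ in \cite{Majda2012}). Since $k\geq 3 > 3/2+1$, $H^k(\mathbb{T}^3)$ is a Banach algebra continuously embedded in $C^1$, so the theorem produces a unique classical solution $(\textbf{U},\textbf{V})\in\bigcap_{\ell=0}^1 C^\ell((T_1,1],H^{k-\ell})$ on some nontrivial strip, and Sobolev embedding preserves the open conditions $\det(g^{\mu\nu})<0$, $\rho>0$, $u^0<0$ down to $\tau=T_1$. Running the transformations in reverse recovers $(g^{\mu\nu},u^\mu,\rho)$ in the regularity class displayed in the statement: one order of regularity is gained for $g^{\mu\nu}$ because $\partial g^{\mu\nu}$ is part of $\textbf{U}$ and solves its own evolution equation. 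The wave gauge $Z^\mu=0$ is preserved by Zengino\u{g}lu's propagation argument (Remark \ref{remark:3.1}), so the reduced system actually solves \eqref{e:cEin1}--\eqref{e:cEu1}.

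Finally, for the continuation statement (ii), I would use the standard blow-up criterion for symmetric hyperbolic systems: as long as $\|(\textbf{U},\textbf{V})\|_{L^\infty((T_1,1],H^k)}$ stays bounded, the coefficient matrices $A^\mu$ remain symmetric hyperbolic with uniformly bounded, uniformly positive $A^0$ (provided $\sigma$ is small enough that $g^{\mu\nu}$ stays close to $\eta^{\mu\nu}$, $\rho$ close to $\bar{\rho}$, and $u^0$ close to $-\omega$), and a standard Gronwall argument on the energy $\|(\textbf{U},\textbf{V})(\tau)\|_{H^k}$ lets one extend past any prospective breakdown time. The threshold $\sigma$ is chosen small enough, independent of $T_1$, so that the open non-degeneracy conditions are maintained under $L^\infty\hookleftarrow H^k$. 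I expect the only nontrivial point is bookkeeping: verifying that the two distinct Euler formulations \eqref{symmetric-Euler} and \eqref{lower-case} are both genuinely symmetric hyperbolic on the relevant open set and that the inverse change of variables is $C^\infty$ up to the boundary of that set; once this is in hand the local theory is entirely routine and the serious work is deferred to the global-in-$\tau$ analysis via Theorem \ref{pro:3.16}.
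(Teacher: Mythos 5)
Your proposal is correct and follows essentially the same route as the paper: the paper likewise treats the system on $(T_1,1]$ as a regular quasilinear symmetric hyperbolic system, invokes Majda's local existence and continuation theorems, and reduces the matter to checking non-degeneracy ($\det(g^{\mu\nu})<0$, $u^0<0$, $\rho>0$) and the equivalence of the variables $(\textbf{U},\textbf{V})$ (or $(\textbf{U},\tilde{\textbf{V}})$) with $(g^{\mu\nu},\partial_\sigma g^{\mu\nu},\rho,u^\mu)$, deferring the detailed bookkeeping to \cite[\S 3]{Liu2017} and \cite[\S 4]{Liu2018}. Your added remarks on gauge propagation and the regularity gain for $g^{\mu\nu}$ are consistent with how the paper handles these points.
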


\subsubsection{Proof for \textit{Fluids} $(I)$} \label{s:Mak1}
Let us first gather the Einstein equations \eqref{4.26}--\eqref{4.28} and the Euler equations \eqref{symmetric-Euler} together, recalling \eqref{e:UV} and \eqref{e::Vtil} that
\begin{align}\label{e:UtilV}
	\textbf{U}: =(\textbf{u}^{0\mu}_{0},\textbf{u}^{0\mu}_{j},\textbf{u}^{0\mu},\textbf{u}^{lm}_{0},\textbf{u}^{lm}_{j},\textbf{u}^{lm}, \textbf{u} _{0},\textbf{u} _{j},\textbf{u} )^{T} \AND \tilde{\textbf{V}}=(\delta\zeta,\textbf{v}^{p})^{T},
\end{align}
to get the complete non-degenerated singular symmetric hyperbolic system
\begin{equation}\label{symmetric-Euler1Ma1}
B^{\mu}\partial_{\mu}\p{\textbf{U}\\ \tilde{\textbf{V}}}=\frac{1}{\tau}\textbf{BP}\p{\textbf{U}\\ \tilde{\textbf{V}}}+H,
\end{equation}
where
\begin{gather*}
	\textbf{B}=\left(
	\begin{array}{cccc}
	\textbf{A}&0&0&0\\
	0&-2g^{00}\mathbb I&0&0\\
	0&0&-2g^{00}\mathbb I&0\\
	-(\textbf{E}_0 \delta^0_\mu +\textbf{E}_{q} \delta^q_\mu) &0&0&-\textbf{N}
	\end{array}
	\right), \\
	B^{\mu}=\left(
	\begin{array}{cccc}
	A^{\mu}&0&0&0\\
	0&A^{\mu}&0&0\\
	0&0&A^{\mu}&0\\
	0&0&0&-N^{\mu}
	\end{array}
	\right),
	\quad
	\textbf{P}=\left(
	\begin{array}{cccc}
	\textbf{P}^\star &0&0&0\\
	0&\Pi&0&0\\
	0&0&\Pi&0\\
	0&0&0&\textbf{P}^\dagger
	\end{array}
	\right)
\end{gather*}
and $H=(F_1,F_2,F_3, -F)^T$.

Then by simply reversing the time $\tau\rightarrow -\tau$
on above equations or the model equation in Appendix \ref{section:3.5}, we can apply Theorem \ref{pro:3.16} to above equations \eqref{symmetric-Euler1Ma1} directly.

We need to verify all the conditions in Appendix \ref{section:3.5}. \textbf{Conditions \eqref{c:1}--\eqref{c:3}} are evident and with the help of \eqref{e:alexp}, it is direct to verify $H,\,B^{\mu},\,\textbf{B}\in C^{0}([T_{0},0],C^{\infty}(\mathbb{V} ))$.

The key to \textbf{Condition \eqref{c:4}} are $B^{0}\in C^{1}([T_{0},0],C^{\infty}( \mathbb{V}  ))$ and $[\textbf{P}, \textbf{B}]=\textbf{PB}-\textbf{BP}=0$. We now verify them respectively.

\textbf{$(1)$ Verification of $B^{0}=\diag\{A^0,A^0,A^0,-N^0\}\in C^{1}([0,1],C^{\infty}( \mathbb{V}  ))$:} $\partial_\tau A^0$ has been investigated in \cite{Oliynyk2016a,Liu2017,Liu2018}, we need to verify $N^{0}\in C^{1}([0,1],C^{\infty}( \mathbb{V}  ))$.
First we prove that $\tau A \in C^1([0,1])$ which can be verified by noting that $ A \in C^0([0,1])$ (recall that $A$ is defined by \eqref{A}) and
by \eqref{A} and the equation \eqref{background-solution1},
\begin{align*}
\partial_{\tau} ( \tau A )
= & -\frac{3\bar{s}^2}{\omega\beta } - \frac{18\bar{s}^2}{\omega\beta\bar{\lambda}}s'(\bar{\alpha}) - \frac{3\Omega\bar{s}^2}{\omega^3\beta}
+ \chi \left(\frac{3\bar{s}^2 }{\omega\beta}\right) \lesssim 1 \quad (\text{by Assumption \ref{a:postvty}, } \bar{s}\lesssim \beta),
\end{align*}
for $\tau\in [0,1]$. Therefore, \eqref{e:NewVar} can be expressed as $v^j=\textbf{v}^j+\tau A(\tau) \mathcal{S}^j(\tau,\textbf{U})$ where, as our convention of notations, $\mathcal{S}^j(\tau,\textbf{U})\in C^1([0,1],C^\infty(\mathbb{V}))$ and $\mathcal{S}^j(\tau,0) =0$.
Applying \eqref{e:u0up}--\eqref{e:ui}, \eqref{background-solution1}, \eqref{e:resvar}, \eqref{e:Mexp} and \eqref{e:expL}, we calculate $D_\tau N^0(\tau, \textbf{U}, \tilde{\textbf{V}})$ (recall the convention of notation in \S \ref{s:deriv}, the derivative operator $D_\tau$ is the partial derivative with respect to the first variable $\tau$) and try to see if it is continuous in $[0,1]$.
Directly expanding all the derivatives in $D_\tau N^0$, with the help of that $\tau A \in C^1([0,1])$, we find the key quantities in $D_\tau N^0$ (the other terms are easy to bound in $[0,1]$) are
\begin{align}
D_{\tau}u^{0}(\tau,\textbf{U},\tilde{\textbf{V}}) =&\frac{1}{\omega}\frac{\Omega}{\tau}+\texttt{S}(\tau,\textbf{U},\tilde{\textbf{V}}),  \label{e:du0}\\
D_{\tau}u_{0}(\tau,\textbf{U},\tilde{\textbf{V}}) =& \frac{1}{  \omega^3}  \frac{\Omega}{\tau}  +\texttt{T}(\tau,\textbf{U},\tilde{\textbf{V}}), \\
D_{\tau}M_{ki}(\tau,\textbf{U},\tilde{\textbf{V}}) =&
\texttt{F}_{ki}(\tau,\textbf{U},\tilde{\textbf{V}})
\end{align}
where, by \eqref{e:bgrhoest4}--\eqref{e:bgrhoest3}, $|\Omega/\tau| \lesssim 1$ for $\tau\in[0,1]$ and the typewriter fond remainders $\texttt{S}$, $\texttt{T}$, $\texttt{F}_{ki}\in C^0([0,1],C^\infty(\mathbb{V}))$  agree with the conventions in \S \ref{remainder} and $\texttt{S}(\tau,0,0)=\texttt{T}(\tau,0,0)=\texttt{F}_{ki}(\tau,0,0)=0$. Also note that
\begin{align*}
	s D_{\tau}u_{i}(\tau,\textbf{U},\tilde{\textbf{V}}) =&(\bar{s}+ s'(\alpha_{K_5}) \beta \delta\zeta)\partial_\tau \beta (\tau)g_{ij}\textbf{v}^{j}+\texttt{L}_i(\tau,\textbf{U},\tilde{\textbf{V}}),
\end{align*}
and another crucial quantities are (view $\lambda (\alpha)$ and $s(\alpha)$ are functional of $(\tau,\textbf{U},\tilde{\textbf{V}})$)
\begin{align}
	D_{\tau}\lambda (\alpha)=&D_\tau \lambda\bigl(\bar{\alpha}(\tau)+\beta(\tau) \delta\zeta\bigr)=
	\lambda^\prime(\alpha) \Bigl[\delta\zeta\partial_\tau\beta(\tau)+\frac{3}{\tau}\bar{q}\Bigr] \notag \\
	=&	(\lambda^\prime(\bar{\alpha})+\lambda''(\alpha_{K_8})\beta\delta\zeta) \Bigl[\delta\zeta\partial_\tau\beta(\tau)+\frac{3}{\tau}\bar{q}\Bigr]  \notag  \\
	= & 	 \lambda^\prime(\bar{\alpha})\Bigl[\delta\zeta\partial_\tau\beta(\tau)+\frac{3\bar{s}}{\tau\bar{\lambda}} \Bigr]+\lambda''(\alpha_{K_8})\delta\zeta  \Bigl[\delta\zeta\beta\partial_\tau\beta(\tau)+\frac{3}{\tau}\beta(\tau)\bar{q}\Bigr] ,
	\intertext{and}
	\beta(\tau)D_{\tau}s(\alpha) =&\beta(\tau) D_\tau s\bigl(\bar{\alpha}(\tau)+\beta(\tau) \delta\zeta\bigr)=
	\beta(\tau) s^\prime( \alpha ) \Bigl[\delta\zeta\partial_\tau\beta(\tau)+\frac{3}{\tau}\bar{q}\Bigr]\notag \\=&\beta(\tau)(s^\prime(\bar{\alpha})+s''(\alpha_{K_9})\beta\delta\zeta) \Bigl[\delta\zeta\partial_\tau\beta(\tau)+\frac{3}{\tau}\bar{q}\Bigr] \notag \\
	=& (s^\prime(\bar{\alpha})+s''(\alpha_{K_9}) \beta\delta\zeta) \Bigl[\delta\zeta \beta(\tau)\partial_\tau\beta(\tau)+\frac{3}{\tau}\beta(\tau)\bar{q} \Bigr], \label{e:bds}
\end{align}
where 
$K_8$, $K_9\in(0,1)$. Then Assumption \ref{a:postvty}.\eqref{e:B0bd} and inequalities \eqref{e:dtbt2} guarantee $N^{0}\in C^{1}([0,1],C^{\infty}( \mathbb{V}  ))$, and then we conclude this condition.  

\textbf{$(2)$ Verification of $[\textbf{P}, \textbf{B}]=\textbf{PB}-\textbf{BP}=0$: }To verify this condition, we only need to examine the following three relations,
\begin{align}
	\textbf{P}^\star\textbf{A}=&\textbf{A}\textbf{P}^\star, \label{e:comm1}\\
	\textbf{P}^\dagger\textbf{N}=&\textbf{N}\textbf{P}^\dagger  \label{e:comm2}
	\intertext{and}
	\textbf{P}^\dagger(\textbf{E}_0\delta^0_\mu+\textbf{E}_q\delta^q_\mu) =& (\textbf{E}_0\delta^0_\mu+\textbf{E}_q\delta^q_\mu)\textbf{P}^\star. \label{e:comm3}
\end{align}
It is evident that \eqref{e:comm1} and \eqref{e:comm2} hold by direct calculations. Let us focus on \eqref{e:comm3} and calculate
\begin{align*}
	&(\textbf{E}_0\delta^0_\mu+\textbf{E}_q\delta^q_\mu)\textbf{P}^\star \notag \\
	&=\Biggl(\frac{\tau\eta_{00} }{2\beta(\tau)}\left(
	\begin{array}{ccc}
	0&0&0\\
	0&\delta^l_r&0
	\end{array}\right) \delta^0_\mu -\frac{\tau}{\beta(\tau)}(1-3\bar{s}^2) \left(
	\begin{array}{ccc}
	0&0&0\\
	g_{rq}&0&g_{rq}
	\end{array}\right)\delta^q_\mu\Biggr) \left(
	\begin{array}{ccc}
	\frac{1}{2}& 0& \frac{1}{2}\\
	0& \delta^{j}_{l}& 0\\
	\frac{1}{2}&0& \frac{1}{2}
	\end{array}\right)  \notag \\
	& 
	=  \frac{\tau\eta_{00} }{2\beta(\tau)}\left(
	\begin{array}{ccc}
	0&0&0\\
	0&\delta^j_r&0
	\end{array}\right) \delta^0_\mu -\frac{\tau}{\beta(\tau)}(1-3\bar{s}^2) \left(
	\begin{array}{ccc}
	0&0&0\\
	g_{rq}&0&g_{rq}
	\end{array}\right)\delta^q_\mu  \notag  \\
	&=\left(
	\begin{array}{cc}
	1&0\\
	0&\delta_{r}^i
	\end{array}
	\right)\Biggl(\frac{\tau\eta_{00} }{2\beta(\tau)}\left(
	\begin{array}{ccc}
	0&0&0\\
	0&\delta^j_i&0
	\end{array}\right) \delta^0_\mu -\frac{\tau}{\beta(\tau)}(1-3\bar{s}^2) \left(
	\begin{array}{ccc}
	0&0&0\\
	g_{iq}&0&g_{iq}
	\end{array}\right)\delta^q_\mu\Biggr)  \notag  \\
	&\hspace{9cm} =\textbf{P}^\dagger(\textbf{E}_0\delta^0_\mu+\textbf{E}_q\delta^q_\mu).
\end{align*}
Now we complete the verification of  \textbf{Condition \eqref{c:4}}.

Next, we verify \textbf{Condition \eqref{c:5}}. In order to state this concisely, we address it by using the following simple Lemma,
\begin{lemma}\label{T:Matpos}
	Suppose a real block matrix
	\begin{align*}
		Q=\p{\tilde{A} & 0 \\
		   E & N}
	\end{align*}
	where $\tilde{A}$ is a $m\times m$ symmetric matrix, $N$ a $n\times n$ symmetric matrix and $E$ a $n\times m$ matrix. If there is a constant $\epsilon>0$, such that
	\begin{align*}
		\tilde{A}-\frac{1}{2\epsilon}\mathbb{I} \AND N-\frac{\epsilon}{2}EE^T
	\end{align*}
	are both positive definite,
	then $Q$ is positive definite, that is
	\begin{align*}
		(X^T,Y^T)Q\p{X\\Y} \geq 0
	\end{align*}
	for any $m\times 1$ matrix $X\in \mathbb{R}^m$ and $n\times 1$ matrix $Y\in \mathbb{R}^n$.
\end{lemma}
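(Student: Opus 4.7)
The plan is to expand the quadratic form, apply a weighted Cauchy--Schwarz (Young) inequality to the cross term, and then recognize the two remaining terms as exactly the positive definite pieces guaranteed by the hypothesis. Concretely, for column vectors $X\in\mathbb{R}^m$ and $Y\in\mathbb{R}^n$ the block structure gives
\begin{equation*}
(X^T,Y^T)\,Q\p{X\\Y} \;=\; X^T \tilde{A} X \;+\; Y^T E X \;+\; Y^T N Y,
\end{equation*}
so the only obstruction to non-negativity is the (possibly negative) cross term $Y^T E X$, since $Q$ itself is not symmetric.

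To handle that cross term, I would apply Young's inequality with parameter $\epsilon>0$ to $X$ and $E^T Y$, giving
\begin{equation*}
Y^T E X \;\geq\; -|X^T (E^T Y)| \;\geq\; -\frac{1}{2\epsilon}\, X^T X \;-\; \frac{\epsilon}{2}\, Y^T E E^T Y.
\end{equation*}
Substituting this lower bound back produces
\begin{equation*}
(X^T,Y^T)\,Q\p{X\\Y} \;\geq\; X^T\!\left(\tilde{A}-\tfrac{1}{2\epsilon}\mathbb{I}\right)\!X \;+\; Y^T\!\left(N-\tfrac{\epsilon}{2}EE^T\right)\!Y,
\end{equation*}
which by the two hypotheses is a sum of two non-negative (in fact strictly positive on non-zero arguments) quadratic forms.

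The only subtlety I anticipate is the precise sense of ``positive definite'' being asserted in the conclusion: the displayed inequality in the lemma is written as $\geq 0$, and strict positivity requires $X$ and $Y$ not both zero together with the strict positive-definiteness of the two diagonal blocks $\tilde{A}-\tfrac{1}{2\epsilon}\mathbb{I}$ and $N-\tfrac{\epsilon}{2}EE^T$. Beyond choosing $\epsilon$ to balance those two blocks (which is given by hypothesis, so no optimization is needed), the argument is essentially a one-line Young's inequality and no further structural input from the Makino-type framework of the paper is required. This is why I expect the lemma itself to be routine; its role must be to isolate the positivity mechanism that will later be invoked when verifying \textbf{Condition \eqref{c:5}} for the block matrix $\textbf{B}$, where the off-diagonal blocks $\textbf{E}_0,\textbf{E}_q$ and the $\tau$-dependent weights $\beta(\tau)$, $\bar{s}$, $1-3\bar{s}^2$ will furnish the required $\epsilon$.
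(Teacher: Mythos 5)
Your proposal is correct and follows essentially the same route as the paper: expand the block quadratic form, bound the cross term $Y^TEX$ by Young's inequality with parameter $\epsilon$, and absorb the resulting pieces into the two positive definite blocks $\tilde{A}-\tfrac{1}{2\epsilon}\mathbb{I}$ and $N-\tfrac{\epsilon}{2}EE^T$. The paper merely states the two-sided version of this bound, so there is no substantive difference.
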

\begin{proof}
	For any $m\times 1$ matrix $X\in \mathbb{R}^m$ and $n\times 1$ matrix $Y\in \mathbb{R}^n$, note there is an inequality that
	\begin{align*}
		-\frac{\epsilon}{2}Y^TEE^T Y-\frac{1}{2\epsilon}X^T X \leq Y^T E X \leq \frac{\epsilon}{2}Y^TEE^T Y+\frac{1}{2\epsilon}X^T X
	\end{align*}
	for some constant $\epsilon >0$. Then since
	\begin{align*}
		(X^T,Y^T)\p{\tilde{A} & 0 \\
			E & N}\p{X\\Y}=X^T \tilde{A}X+Y^TNY+Y^T E X,
	\end{align*}
	we arrive at
	\begin{align*}
		& X^T \Bigl(\tilde{A}-\frac{1}{2\epsilon}\mathbb{I}\Bigr)X+Y^T\Bigl(N-\frac{\epsilon}{2}EE^T\Bigr)Y \leq (X^T,Y^T)\p{\tilde{A} & 0 \\
			E & N}\p{X\\Y} \notag \\
		&\hspace{6cm}\leq X^T \Bigl(\tilde{A}+\frac{1}{2\epsilon}\mathbb{I}\Bigr)X+Y^T\Bigl(N+\frac{\epsilon}{2}EE^T\Bigr)Y.
	\end{align*}
	We then complete the proof of this Lemma due to the positive definite properties of $\tilde{A}-\frac{1}{2\epsilon}\mathbb{I}$ and $N-\frac{\epsilon}{2}EE^T$ for some $\epsilon>0$.
\end{proof}
\begin{remark}
	Note that if $E=(a_{ij})$, $E^T$ denote $(a^T_{kl})$ where $a^T_{kl}=a_{lk}$, then we can calculate the element $b_{il}$ of $EE^T$ is equal to $\delta^{kj} a_{ik} a_{lj}$.
\end{remark}

Now let us use this Lemma to examine the Condition \eqref{c:5}. Firstly, we introduce a notation
\begin{align*}
	\cir{[M]}:= \ring{M}(\tau) \AND \til{[M]}:=\tilde{M}(\tau, \textbf{U}, \tilde{\textbf{V}})
\end{align*}
respectively, provided there is a decomposition of a functional $M(\tau, \textbf{U}, \tilde{\textbf{V}}) =\ring{M} (\tau)+\tilde{M} (\tau, \textbf{U}, \tilde{\textbf{V}})$
where $\tilde{M}(\tau,0,0)=0$. Then, it is easy to conclude simple but \textit{crucial identities} that
\begin{gather}
	\cir{[M]}=\ring{M}(\tau)=M(\tau, 0,0),  \label{e:cir0}  \\
	\cir[M_1M_2]=\cir[M_1]\cir[M_2]=\ring{M}_1\ring{M}_2
	\intertext{and} \cir[M_1+M_2]=\cir[M_1]+\cir[M_2]=\ring{M}_1+\ring{M}_2
\end{gather}
for $M_\ell=\ring{M}_\ell(\tau)+\tilde{M}_\ell(\tau, \textbf{U}, \tilde{\textbf{V}})$ $(\ell=1,2)$.

In order to verify \eqref{e:Bineq}, we only concern
\begin{align*}
Q=\frac{1}{\kappa}\ring{\textbf{B}}-\ring{B}^0=\cir{\Bigl[\frac{1}{\kappa} \textbf{B} - B^0\Bigr]}=\frac{1}{\kappa} \textbf{B}(\tau,0,0) - B^0(\tau,0,0),
\end{align*}
due to the fact that evidently $\ring{B}^0\geq \frac{1}{\gamma_1}\mathbb{I}$ for some constant $\gamma_1>0$ and $\ring{\textbf{B}}$ is bounded.
We only need to verify $Q$ is positive definite, which implies $\ring{B}^0\leq \frac{1}{\kappa}\ring{\textbf{B}}$.
Then, the corresponding matrices in the Lemma are selected by
\begin{align}
	\tilde{A}=&\cir\p{\frac{1}{\kappa}\textbf{A}-A^{0}&0&0 \\
		0&-\frac{2}{\kappa}g^{00} \mathbb{I}-A^{0}&0 \\
		0&0&-\frac{2}{\kappa}g^{00} \mathbb{I}-A^{0}}, \label{e:A} \\ N=&\cir\p{-(\frac{1}{\kappa}\textbf{N}-N^0)} \label{e:D}
	\intertext{and}
	E=&\cir\p{-\frac{1}{\kappa}(E_0\delta^0_\mu +E_q \delta^q_\mu) & 0 & 0}. \label{e:E}
\end{align}
Then by Lemma \ref{T:Matpos}, we only need to concentrate on examining
\begin{align}
	\tilde{A}-\frac{1}{2\epsilon}\mathbb{I} =& \cir \p{\frac{1}{\kappa}\textbf{A}-A^{0}-\frac{1}{2\epsilon}\mathbb{I}&0&0 \\
		0&-\frac{2}{\kappa}g^{00} \mathbb{I}-A^{0}-\frac{1}{2\epsilon}\mathbb{I}&0 \\
		0&0&-\frac{2}{\kappa}g^{00} \mathbb{I}-A^{0}-\frac{1}{2\epsilon}\mathbb{I}}   \label{e:A-I}
	\intertext{and}
	N-\frac{\epsilon}{2}EE^T= & \cir\p{-(\frac{1}{\kappa}\textbf{N}-N^0)-\frac{\epsilon}{2}\frac{1}{\kappa^2}(E_0\delta^0_\mu +E_q \delta^q_\mu)(E_0\delta^0_\nu +E_l \delta^l_\nu)^T \delta^{\mu\nu}}  \label{e:D-EE}
\end{align}
are both positive definite. Let us, with the help of \eqref{e:cir0}, calculate
\begin{align}
	\cir\Bigl[\frac{1}{\kappa}\textbf{A}-A^{0}-\frac{1}{2\epsilon}\mathbb{I}\Bigr]
	= & \p{  \bigl(\frac{1}{\kappa}-1\bigr) \omega^2 -\frac{1}{2\epsilon} & 0& 0\\
		0& \bigl(\frac{3}{2\kappa}-1-\frac{1}{2\epsilon}\bigr)\delta^{ij}& 0\\
		0&0& \bigl(\frac{1}{\kappa}-1\bigr) \omega^2 -\frac{1}{2\epsilon}},  \label{e:AAI}\\
	\cir\Bigl[-\frac{2}{\kappa}g^{00} \mathbb{I}-A^{0}-\frac{1}{2\epsilon}\mathbb{I} \Bigr]
= & \p{	\bigl(\frac{2}{\kappa}-1\bigr)\omega^2 -\frac{1}{2\epsilon} & 0& 0\\
	0& 	 \bigl(\frac{2}{\kappa}\omega^2 -1-\frac{1}{2\epsilon}\bigr) \delta^{ij}& 0\\
	0&0& 	\bigl(\frac{2}{\kappa}-1\bigr)\omega^2 -\frac{1}{2\epsilon}} \label{e:gAI}
\end{align}
and by noting that $\cir{[u^0]}=-\omega$ and $\cir{[u_0]}=\frac{1}{\omega}$, and since $q''(\alpha)\lesssim 1$, then
\begin{align} \label{e:NN}
	&\cir\Bigl[-\Bigl(\frac{1}{\kappa}\textbf{N}-N^0\Bigr)-\frac{\epsilon}{2}\frac{1}{\kappa^2}(E_0\delta^0_\mu +E_q \delta^q_\mu)(E_0\delta^0_\nu +E_l \delta^l_\nu)^T \delta^{\mu\nu} \Bigr] \notag \\
	= & \p{  \frac{1}{\kappa}\bar{\lambda}^2 \omega \bigl( 3q'(\bar{\alpha})-\chi(\tau)\bigr)  -\bar{\lambda}^2 \omega & 0 \\
	0 & \bigl[\frac{1}{\kappa}   \bigl(1- 3   \bar{s}^2  \bigr)-\frac{1}{\kappa}\chi(\tau)  -1\bigr]\delta_{ri} \omega -\frac{\epsilon}{2}\frac{1}{\kappa^2}\frac{\tau^2}{\beta^2} S(\tau)  \delta_{ri}}
\end{align}
where
\begin{align*}
S(\tau)
=& \Bigl(-\frac{1 }{2\omega^2 }\Bigr)^2 +2  (1-3\bar{s}^2)^2.
\end{align*}
Now we examine all the elements in above \eqref{e:AAI}--\eqref{e:NN} are positive by recalling
$\hat{\delta}$ defined by \eqref{e:hdel} which is $
\hat{\delta}\in \bigl(0,  \bigl(1+\sqrt{\frac{3}{\Lambda}}\bigr)\min \{ \frac{3}{4} ,  \frac{ \Lambda}{3+\Lambda} \} \bigr)$
and 
taking
\begin{equation*}
	\kappa =  \frac{1}{2}\biggl(1+\sqrt{\frac{3}{\Lambda}}\biggr)^{-1} \hat{\delta}   \AND \epsilon=\frac{1}{2},
\end{equation*}
which implies \eqref{e:A-I} and \eqref{e:D-EE} are both positive definite. With the help of \eqref{e:bgrhoest4} (which implies $\omega^2\geq \frac{\Lambda}{3} $),
we arrive at
\begin{align*}
	\Bigl(\frac{1}{\kappa}-1\Bigr) \omega^2 -1    \geq  & \Bigl[\frac{2}{ \hat{\delta} } \Bigl(1+\sqrt{\frac{3}{\Lambda}}\Bigr)-1\Bigr]\frac{\Lambda}{3}-1 >1+\frac{\Lambda}{3}>0   \quad  \Bigl(\text{by } \hat{\delta}<\Bigl(1+\sqrt{\frac{3}{\Lambda}}\Bigr)\frac{ \Lambda}{3+\Lambda}\Bigr), \\
	\frac{3}{2\kappa}-1-1  = & 2\Bigl(1+\sqrt{\frac{3}{\Lambda}}\Bigr)\frac{3}{ 2\hat{\delta} }-2>2 >0   \quad  \Bigl(\text{by } \hat{\delta}<\Bigl(1+\sqrt{\frac{3}{\Lambda}}\Bigr)\frac{3}{4}\Bigr), \\
	\frac{2}{\kappa}\omega^2 -1-1 \geq  & 2\Bigl(1+\sqrt{\frac{3}{\Lambda}}\Bigr)\frac{2\Lambda}{3\hat{\delta}}-2  > 2+\frac{4}{3}\Lambda >0   \quad  \Bigl(\text{by } \hat{\delta}<\Bigl(1+\sqrt{\frac{3}{\Lambda}}\Bigr)\frac{ \Lambda}{3+\Lambda}\Bigr).
\end{align*}
With the help of Assumption \ref{a:postvty}.$(1)$, we obtain
\begin{align*}
	\frac{1}{\kappa} \bigl( 3q'(\bar{\alpha})-\chi(\tau)\bigr)  -1 \geq 2  \Bigl(1+\sqrt{\frac{3}{\Lambda}}\Bigr)\frac{3\hat{\delta}}{\hat{\delta}}-1 >5+6\sqrt{\frac{3}{\Lambda}} >0.
\end{align*}
By using \eqref{e:btest} and noting $C^*$ is defined in Assumption \ref{a:postvty}.$(1)$, we have
\begin{align*}
	\Bigl[\frac{1}{\kappa}   \bigl(1 - 3 \bar{s}^2  - \chi(\tau) \bigr) -1\Bigr] \omega & -\frac{1}{4}\frac{1}{\kappa^2}\frac{\tau^2}{\beta^2}S(\tau) \geq  \Bigl(\frac{\hat{\delta}}{\kappa}     -1\Bigr) \sqrt{\frac{\Lambda}{3}} - \frac{1}{4 \kappa^2}(C^*\hat{\delta})^2 \Bigl(\frac{9}{4\Lambda^2}+2\Bigr)   \notag \\
	& \geq
	\Bigl(2\Bigl(1+\sqrt{\frac{3}{\Lambda}}\Bigr)\frac{ \hat{\delta}}{\hat{\delta}}     -1\Bigr) \sqrt{\frac{\Lambda}{3}} - 1 =1+\sqrt{\frac{\Lambda}{3}}>0 
\end{align*}
for any $\tau\in[0,1]$.
Then we verified \textbf{Condition \eqref{c:5}}.

Let us turn to \textbf{Condition \eqref{c:6}}.
First calculate
\begin{align*}
	\textbf{P}^\perp=\left(
	\begin{array}{cccc}
		(\textbf{P}^\star)^\perp &0&0&0\\
		0&\Pi^\perp&0&0\\
		0&0&\Pi^\perp&0\\
		0&0&0&(\textbf{P}^\dagger)^\perp
	\end{array}
	\right).
\end{align*}
Then this condition means we need to examine
\begin{align*}
	(\textbf{P}^\dagger)^\perp N^0\bigl(\tau,\textbf{P}^\perp(\textbf{U},\tilde{\textbf{V}})^T\bigr) (\textbf{P}^\dagger)=(\textbf{P}^\dagger) N^0\bigl(\tau,\textbf{P}^\perp(\textbf{U},\tilde{\textbf{V}})^T\bigr)  (\textbf{P}^\dagger)^\perp=0.
\end{align*}
This always holds since $(\textbf{P}^\dagger)^\perp\equiv 0$. The other parts are easy to check and the same as the corresponding part in  \cite{Oliynyk2016a}.

The last \textbf{Condition \eqref{c:7}} can be examined by noting $(\textbf{P}^\dagger)^\perp\equiv 0$ and using the derivations in \cite[\S $7.1$]{Liu2017}, which we omit here and readers can consult \cite[\S $7.1$]{Liu2017} for the details.

Having verified that all of the hypotheses of Theorem \ref{pro:3.16} are satisfied, we conclude that there exists a constant $\sigma>0$, such that if
\begin{align*}
	\|g_{0}^{\mu\nu}-\eta^{\mu\nu}(1)\|_{H^{k+1}}+\|g_{1}^{\mu\nu}-\partial_{\tau}\eta^{\mu\nu}(1)\|_{H^{k}}+\|\rho_0 -\bar{\rho}(1)\|_{H^{k}}+
	\|\nu^{i} \|_{H^{k}}<\sigma,
\end{align*}
which, by Assumption \ref{a:Maksym},  \eqref{e:q}--\eqref{e:denstydif}, \eqref{e:NewVar}--\eqref{A} and $\delta\zeta=\beta^{-1}(\mu^{-1})'(\rho_{K_{10}})(\rho-\bar{\rho})$,  implies
\begin{align*}
	\|(\mathbf{U},\tilde{\mathbf{V}})|_{\tau=1}\|_{H^k}\leq C\sigma ,
\end{align*}
then by Theorem \ref{pro:3.16}, there exists a $T_*\in(0,1)$, and  a unique classical solution $(\mathbf{U},\tilde{\mathbf{V}})\in C^1([T_*,1]\times \mathbb{T}^3)$ that satisfies
\begin{equation}\label{e:uvreg}
	(\mathbf{U},\tilde{\mathbf{V}})\in C^{0}([T_{\ast},1],H^{k})\cap C^{1}([T_{\ast},1],H^{k-1}),
\end{equation}
and the energy estimate
\begin{equation}\label{e:apriest}
	\|(\mathbf{U},\tilde{\mathbf{V}}) \|_{H^{k}}  \leq
	C \|(\mathbf{U},\tilde{\mathbf{V}})|_{\tau=1}\|_{H^{k}}  \leq C\sigma
\end{equation}
for all $1 \geq \tau > T_*$, and can be uniquely continued to a larger time interval $(T^*,1]$ for all $T^*\in [0,T_*)$. Furthermore, above bound leads to the solutions $(\mathbf{U},\tilde{\mathbf{V}})$ exist globally on
$\mathfrak{M} = (0, 1] \times  \mathbb{T}^3$ and satisfy the estimates \eqref{e:apriest} with $T_* = 0 $. In particular, this implies, via the definition \eqref{e:UtilV} of $(\mathbf{U},\tilde{\mathbf{V}})$, that
\begin{align*}
\|(\textbf{u}^{0\mu}_{0},\textbf{u}^{0\mu}_{j},\textbf{u}^{0\mu},\textbf{u}^{lm}_{0},\textbf{u}^{lm}_{j},\textbf{u}^{lm}, \textbf{u} _{0},\textbf{u} _{j},\textbf{u}, \delta\zeta,\textbf{v}^{p}) \|_{H^{k}} \leq C\sigma.
\end{align*}
Furthermore, by \eqref{e:gij}--\eqref{e:dtg00exp}, we obtain
\begin{align*}
	\|g^{\mu\nu}-\eta^{\mu\nu}\|_{H^{k+1}}+ \|\partial_\sigma g^{\mu\nu}-\partial_\sigma \eta^{\mu\nu}\|_{H^k} \leq C\|\mathbf{U}\|_{H^k} \leq C\sigma,
\end{align*}
and the transformations in  Assumption \ref{a:Maksym} and \ref{a:tLip} imply there exists a function $\varrho\in C\bigl([0,1], C^\infty( \mathbb{R})\bigr)$ satisfying $\varrho(\tau,0)=0$
such that
\begin{align*}
\rho-\bar{\rho} =\tau^{\varsigma}  \varrho\bigl(\tau,  \beta^{-1}(\tau)(\alpha- \bar{\alpha})\bigr) ,\quad \varsigma \geq 2.
\end{align*}
Then Moser estimates for composition of functions (see, e.g. \cite[Lemma A.$3$]{Liu2017}) yield
\begin{align*}
	\|\rho-\bar{\rho}\|_{H^k} \leq C\|\beta^{-1}(\alpha-\bar{\alpha} )\|_{H^k} \leq C\sigma,
\end{align*}
and using \eqref{e:NewVar}--\eqref{A}, that is, $u^{p}=\beta(\tau) (\mathbf{v}^p + 2\tau A \textbf{u}^{0p})$, we derive that
\begin{equation*}
	\|u^p\|_{H^k}\leq \beta(\tau) \| \mathbf{v}^p\|_{H^k}+2\tau \beta(\tau) A(\tau) \|\mathbf{u}^{0p} \|_{H^k} \leq C\sigma
\end{equation*}
for $\tau \in(0,1]$. In addition, \eqref{e:uvreg} with above transformations and estimates yield \eqref{e:gvrred}. Then, we complete the proof of the main Theorem \ref{theorem:1.4} for Fluids $(I)$.

\subsubsection{Proof for \textit{Fluids} $(II)$}  \label{s:Mak2}

As the previous section \S \ref{s:Mak1},
Let us first gather the Einstein equations \eqref{4.26}--\eqref{4.28} and the Euler equations \eqref{lower-case} together, recalling the definition of
\begin{align*}
\textbf{U}: =(\textbf{u}^{0\mu}_{0},\textbf{u}^{0\mu}_{j},\textbf{u}^{0\mu},\textbf{u}^{lm}_{0},\textbf{u}^{lm}_{j},\textbf{u}^{lm}, \textbf{u} _{0},\textbf{u} _{j},\textbf{u} )^{T} \AND \hat{\textbf{V}}:=(\delta\zeta, u_q)^{T},
\end{align*}
to get the complete non-degenerated singular symmetric hyperbolic system
\begin{equation}\label{symmetric-Euler1Ma2}
B^{\mu}\partial_{\mu}\p{\textbf{U}\\ \hat{\textbf{V}}}=\frac{1}{\tau}\textbf{BP}\p{\textbf{U}\\ \hat{\textbf{V}}}+H
\end{equation}
where
\begin{gather*}
\textbf{B}=\left(
\begin{array}{cccc}
\textbf{A}&0&0&0\\
0&-2g^{00}\mathbb I&0&0\\
0&0&-2g^{00}\mathbb I&0\\
0 &0&0& \hat{\textbf{N}}
\end{array}
\right), \\
B^{\mu}=\left(
\begin{array}{cccc}
A^{\mu}&0&0&0\\
0&A^{\mu}&0&0\\
0&0&A^{\mu}&0\\
0&0&0& \hat{N}^{\mu}
\end{array}
\right),
\quad
\textbf{P}=\left(
\begin{array}{cccc}
\textbf{P}^\star &0&0&0\\
0&\Pi&0&0\\
0&0&\Pi&0\\
0&0&0&\hat{\textbf{P}}^\dagger
\end{array}
\right)
\end{gather*}
and $H=(F_1,F_2,F_3, \hat{H})^T$. To prove the main theorem in this case, our purpose is still to apply Theorem \ref{pro:3.16} in Appendix \ref{section:3.5}. Thus we need to examine Conditions \eqref{c:1}--\eqref{c:7} for the equation \eqref{symmetric-Euler1Ma2}.

It is easy to verify \textbf{Conditions \eqref{c:1}--\eqref{c:2}} 
as the previous section.

Next, in order to verify \textbf{Condition \eqref{c:3}}, we demonstrate $H$ is regular in all the variables, that is $H\in C^{0}([0,1],C^{\infty}(\mathbb{V}))$, by expanding the following crucial quantities.
Firstly differentiating \eqref{velocity} with respect to $g^{\mu\nu}$, we arrive at
\begin{align*}
	\frac{\partial u^{k}}{\partial(g^{00})} =&g^{k0}\left(\frac{(1+g^{ij}u_{i}u_{j})}{2g^{00}\sqrt{(g^{0i}u_{i})^2-g^{00}(1+g^{ij}u_{i}u_{j})}}
	+\frac{\sqrt{(g^{0i}u_{i})^2-g^{00}(1+g^{ij}u_{i}u_{j})}}{(g^{00})^2}\right)\\  =&\tau\mathcal{M}^{k}_{00}(\tau,\textbf{U},\hat{\textbf{V}}),   \\
	\frac{\partial u^{k}}{\partial (g^{0i})} =&\delta^{k}_{i}\left(\frac{-g^{0q}u_{q}-\sqrt{(g^{0q}u_{q})^2-g^{00}(1+g^{ij}u_{i}u_{j})}}{g^{00}}\right)\\ &+g^{k0}\left(-u_{i}-\frac{g^{0i}u_{i}^2}{g^{00}\sqrt{(g^{0q}u_{q})^2-g^{00}(1+g^{ij}u_{i}u_{j})}}\right)\\  =&-\frac{\delta^{k}_{i}}{\sqrt{-\eta^{00}}}+\mathcal{M}^{k}_{0i}(\tau,\textbf{U},\hat{\textbf{V}}),
	\intertext{and}
	\frac{\partial u^{k}}{\partial (g^{ij})} =&\delta^{k}_{j}u_{i}+\frac{g^{k0}u_{i}u_{j}}{2\sqrt{(g^{0q}u_{q})^2-g^{00}(1+g^{ij}u_{i}u_{j})}} = \mathcal{M}^{k}_{ij}(\tau,\textbf{U},\hat{\textbf{V}}),
\end{align*}
where $\mathcal{M}^{k}_{\alpha\beta}(\tau,\textbf{U},\hat{\textbf{V}})$ agrees with the convention of notation of remainders in \S\ref{remainder} 
and satisfies $\mathcal{M}^{k}_{\alpha\beta}(\tau,0,0)=0$.
Then by using \eqref{e:gij}--\eqref{e:ui} and letting $\beta(\tau)=1$, we easily get
\begin{equation}\label{e:Ldudg}
L_{k}^{\mu}\frac{\partial u^{k}}{\partial (g^{\alpha\beta})}\partial_{\mu}g^{\alpha\beta}
= \mathcal{N}(\tau,\textbf{U},\hat{\textbf{V}}),
\end{equation}
where $\mathcal{N}(\tau,\textbf{U},\hat{\textbf{V}})$ agrees with the convention of notation of remainders 
and satisfies $\mathcal{N}(\tau,0,0)=0$. Next let us calculate terms involving the Christoffel terms.
Direct calculations, with the help of that
\begin{equation*}
u^{0}=g^{00}u_{0}+g^{k0}u_{k}=-\sqrt{(g^{0i}u_{i})^2-g^{00}(1+g^{ij}u_{i}u_{j})},
\end{equation*}
yield
\begin{align}
\Gamma_{\mu\nu}^{k}u^{\mu}u^{\nu}=&\Gamma^{k}_{00}u^{0}u^{0}+2\Gamma_{0i}^{k}u^{0}u^{i}+\Gamma_{lm}^{k}u^{l}u^{m}\nonumber\\
=&\Gamma_{00}^{k}\left((g^{0i}u_{i})^2-g^{00}(1+g^{ij}u_{i}u_{j})\right)\nonumber\\
 &-2\Gamma^{k}_{0i}\sqrt{(g^{0i}u_{i})^2-g^{00}(1+g^{ij}u_{i}u_{j})}(g^{i0}u_{0}+g^{im}u_{m})\nonumber\\
 &+\Gamma_{lm}^{k}(g^{l0}u_{0}+g^{lq}u_{q})(g^{m0}u_{0}+g^{mq}u_{q})\nonumber\\
 =&\mathcal{N}^{k}(\tau,\textbf{U},\hat{\textbf{V}}), \label{e:Guu}
\end{align}
where $\mathcal{N}^{k}(\tau,\textbf{U},\hat{\textbf{V}})$ agrees with the convention of notation of remainders 
and satisfies $\mathcal{N}^{k}(\tau,0,0)=0$. Some direct expansions, with the help of \eqref{e:Mexp}--\eqref{e:ugu}, \eqref{velocity}, \eqref{Jacobi},  \eqref{e:Ldudg}--\eqref{e:Guu}, lead to $H\in C^{0}([0,1],C^{\infty}(\mathbb{V}))$.

As the previous section, in order to verify \textbf{Condition \eqref{c:4}}, the crucial condition is $B^0\in C^1([0,1], C^\infty(\mathbb{V}))$ and we only need to check $\hat{N}^0\in C^1([0,1], C^\infty(\mathbb{V}))$. Using \eqref{e:B0bd} in Assumption \ref{a:postvty}, with the help of \eqref{velocity} which implies $  u^i(\tau,\mathbf{U},\hat{\textbf{V}}) \in C^1([0,1], C^\infty(\mathbb{V})) $,  the similar derivations to the previous section \eqref{e:du0}--\eqref{e:bds} yield, 
\begin{align}
	D_{\tau}u^{0}(\tau,\textbf{U},\hat{\textbf{V}}) =&\frac{1}{\omega}\frac{\Omega}{\tau} +\texttt{S}(\tau,\textbf{U},\hat{\textbf{V}}),  \label{e:Duu0} \\
	D_{\tau}u_{0}(\tau,\textbf{U},\hat{\textbf{V}}) =&\frac{1}{  \omega^3}  \frac{\Omega}{\tau} +\texttt{T}(\tau,\textbf{U},\hat{\textbf{V}}), \\
	D_{\tau}M_{ki}(\tau,\textbf{U},\hat{\textbf{V}}) =&
	\texttt{F}_{ki}(\tau,\textbf{U},\hat{\textbf{V}}),  \\
	D_{\tau} q(\alpha) =& q^\prime(\alpha) \frac{3}{\tau}\bar{q} \equiv 0  \quad(\text{Becasue } q\equiv \bar{q}) ,  \label{e:Dq}
\end{align}
and by $q\equiv \bar{q}$,  Assumption \ref{a:Maksym}.\eqref{e:Maksym} and \ref{a:tLip}.\eqref{e:rhopro}, \eqref{background-solution1}, \eqref{e:p-p}, \eqref{e:bgrhoest1} (the bounds of $\bar{\rho}(\tau)$), and
\begin{align}
	(s^2)^\prime (\alpha) = \frac{d}{d\alpha} \mu^*\bigl( dp/d\rho\bigr)=\frac{d^2p }{d\rho^2}\Bigr|_{\mu(\alpha)}\frac{d\mu}{d\alpha}\Bigr|_{\alpha}
\end{align}
we arrive at
\begin{align}\label{e:Ds2}	
	D_{\tau} & s^2(\alpha)  =   (s^2)'(\alpha) \frac{3}{\tau} \bar{q}   =p''\bigr|_{\mu(\alpha)}\frac{d\mu}{d\alpha}\Bigr|_{\alpha} \frac{3}{\tau} \bar{q} =p''\bigr|_{\mu(\alpha)}\frac{\mu(\alpha)+\mu^*p(\alpha)}{q(\alpha)}  \frac{3}{\tau} \bar{q}   \notag  \\
	= & p''\bigr|_{\mu(\alpha)}\frac{\mu(\alpha)-\mu(\bar{\alpha})+\mu^*p(\alpha)-\mu^*p(\bar{\alpha})}{q(\alpha)} \frac{3}{\tau} \bar{q} +p''\bigr|_{\mu(\alpha)}\frac{\mu(\bar{\alpha})+\mu^*p(\bar{\alpha})}{q(\alpha)}  \frac{3}{\tau} \bar{q}\notag  \\
	= & 3 p''\bigr|_{\mu(\alpha)} (1+c_s^2(\rho_{K_4}))\tau^{\varsigma-1}  \varrho\bigl(\tau,  \beta^{-1}(\tau)(\alpha- \bar{\alpha})\bigr)    +p''\bigr|_{\mu(\alpha)} \bigl( \bar{\rho}+ \bar{p}(\rho) \bigr) \frac{3}{\tau} \lesssim 1,
\end{align}
for $\varsigma\geq 2$ and $\tau\in[0,1]$.
Note that \eqref{Jacobi} implies
\begin{equation}\label{e:Dj}
	D_\tau J^{iq}(\tau,\textbf{u}^{0\nu},u_{k})=\texttt{J}^{iq}(\tau,\textbf{u}^{0\nu},u_{k}),
\end{equation}
in other words, $J^{ij}\in C^{1}([0,1],C^{\infty}( \mathbb{V}))$.
Assisted with above preparations  \eqref{e:Duu0}--\eqref{e:Dj}, we can directly expand $D_\tau \hat{N}^0$ and then it is evident to conclude $B^0\in C^1([0,1], C^\infty(\mathbb{V}))$. The other part of this condition is easy to be examined and we omit the details.

Then, let us turn to \textbf{Condition \eqref{c:5}}. By \eqref{e:cir0}, to verify \eqref{e:Bineq}, as before, we only concern
\begin{align*}
Q=\frac{1}{\hat{\kappa}}\ring{\textbf{B}}-\ring{B}^0=\cir{\Bigl[\frac{1}{\hat{\kappa}} \textbf{B} - B^0\Bigr]}=\frac{1}{\hat{\kappa}} \textbf{B}(\tau,0,0) - B^0(\tau,0,0),
\end{align*}
and to verify $Q$ is positive definite. We focus on one element of $Q$ that is $
	 \frac{1}{\hat{\kappa}} \hat{\textbf{N}}(\tau,0,0) - \hat{N}^0(\tau,0,0)  $.
There are two cases to proceed:
\begin{enumerate}
	\item If $\hat{\textbf{N}}$ and $\hat{N}^0$ are defined by \eqref{e:case1}, then, with the help of \eqref{e:Mexp}, \eqref{e:expL},  \eqref{velocity} and  \eqref{Jacobi} (Note that $u_q=0$ and $u^{0i}=0$ imply $v^i=0$ and further $\text{v}^i=0$ by \eqref{e:NewVar} and \eqref{velocity}),
	\begin{align*}
	 \frac{1}{\hat{\kappa}} \hat{\textbf{N}}(\tau,0,0) - \hat{N}^0(\tau,0,0)
	= & \p{\bigl(\frac{1}{\hat{\kappa}}-1\bigr)  & 0  \\
	0  & \frac{1}{\bar{\lambda}^2} \bigl(\frac{1}{\hat{\kappa}}(1-3\bar{s}^2)-1\bigr)\delta^{iq}   } . 
	\end{align*}
		\item If $\hat{\textbf{N}}$ and $\hat{N}^0$ are defined by \eqref{e:case1b}, then,
	\begin{align*}
	 \frac{1}{\hat{\kappa}} \hat{\textbf{N}}(\tau,0,0) -  \hat{N}^0(\tau,0,0)
	= & \p{\bigl(\frac{1}{\hat{\kappa}}-1\bigr)   & 0  \\
		0  & \frac{1}{\bar{\lambda}^2}\bigl(\frac{1}{\hat{\kappa}} -1\bigr)\delta^{iq} } . 
	\end{align*}
\end{enumerate}
By taking
\begin{equation*}
	\hat{\kappa} :=  \Bigl(1+\sqrt{\frac{3}{\Lambda}}\Bigr)^{-1} \hat{\delta},
\end{equation*}
above $	 \frac{1}{\hat{\kappa}} \hat{\textbf{N}}(\tau,0,0) - \hat{N}^0(\tau,0,0)$, by noting $\hat{\kappa}<3/4$ in \eqref{e:hdel}, with the help of \eqref{e:Maksym} and Assumption \ref{a:postvty}.$(2)$, is positive definite in both cases. This, with the help of \eqref{e:A}, \eqref{e:A-I},  \eqref{e:AAI} and \eqref{e:gAI} and the derivations in the previous section \S \ref{s:Mak1}, implies $Q$ is positive definite, which, in turn, implies \textbf{Condition \eqref{c:5}}.

For \textbf{Conditions \eqref{c:6} and \eqref{c:7}}, 
if $\hat{\textbf{N}}$ and $\hat{\textbf{P}}^\dagger$ are defined by \eqref{e:case1b}, Then Conditions \eqref{c:6} and \eqref{c:7} hold evidently due to $\hat{\textbf{P}}^\dagger\equiv 0$.
We only need to examine the case when $\hat{\textbf{N}}$ and $\hat{\textbf{P}}^\dagger$ are defined by \eqref{e:case1}.
In this case,
\begin{equation}\label{e:PUV}
	\textbf{P}  (\textbf{U}, \hat{\textbf{V}})^T=\Bigl(\frac{1}{2}\textbf{u}^{0\mu}_0+\frac{1}{2}\textbf{u}^{0\mu}, \textbf{u}^{0\mu}_l, \frac{1}{2}\textbf{u}^{0\mu}_0+\frac{1}{2}\textbf{u}^{0\mu}, \textbf{u}^{lm}_0, 0,0,\textbf{u}_0, 0,0, 0, u_i\Bigr)^T.
\end{equation}
Verifying Condition \eqref{c:6} is equivalent to verifying the following conditions,
\begin{align*}
	(\textbf{P}^\star)^\perp A^0 (\tau,\textbf{P}^\perp (\textbf{U}, \hat{\textbf{V}})^T) \textbf{P}^\star=& \textbf{P}^\star  A^0(\tau,\textbf{P}^\perp (\textbf{U}, \hat{\textbf{V}})^T) (\textbf{P}^\star)^\perp=0,  
	\\
	\Pi^\perp A^0(\tau,\textbf{P}^\perp (\textbf{U}, \hat{\textbf{V}})^T)\Pi=&\Pi A^0(\tau,\textbf{P}^\perp (\textbf{U}, \hat{\textbf{V}})^T) \Pi^\perp=0,  
	\\
	(\hat{\textbf{P}}^\dagger)^\perp N^0 (\tau,\textbf{P}^\perp (\textbf{U}, \hat{\textbf{V}})^T) \hat{\textbf{P}}^\dagger=& \hat{\textbf{P}}^\dagger  N^0(\tau,\textbf{P}^\perp (\textbf{U}, \hat{\textbf{V}})^T) (\hat{\textbf{P}}^\dagger)^\perp=0 . 
\end{align*}
To obtain $A^0 (\tau,\textbf{P}^\perp (\textbf{U}, \hat{\textbf{V}})^T)$ and $N^0 (\tau,\textbf{P}^\perp (\textbf{U}, \hat{\textbf{V}})^T)$, we set $\textbf{P}  (\textbf{U}, \hat{\textbf{V}})^T=0$ in the variables $(\textbf{U}, \hat{\textbf{V}})^T$ of $A^0$ and $N^0$. Direct calculations yield
\begin{align}
	& (\textbf{P}^\star)^\perp A^0 (\tau,\textbf{P}^\perp (\textbf{U}, \hat{\textbf{V}})^T) \textbf{P}^\star = \textbf{P}^\star  A^0(\tau,\textbf{P}^\perp (\textbf{U}, \hat{\textbf{V}})^T) (\textbf{P}^\star)^\perp \notag  \\
	= & \p{\frac{1}{2} & 0 & -\frac{1}{2}\\
    0 & 0 & 0 \\
    -\frac{1}{2} & 0 & \frac{1}{2} } \left.\left(
	\begin{array}{ccc}
	-g^{00}  & 0& 0\\
	0& g^{ij}& 0\\
	0&0& -g^{00}
	\end{array}\right)\right|_{\textbf{P}  (\textbf{U}, \hat{\textbf{V}})^T=0} \p{\frac{1}{2} & 0 & \frac{1}{2}\\
		0 & \delta^j_l & 0 \\
		\frac{1}{2} & 0 & \frac{1}{2} }=0,  \\
	& \Pi^\perp A^0(\tau,\textbf{P}^\perp (\textbf{U}, \hat{\textbf{V}})^T)\Pi=\Pi A^0(\tau,\textbf{P}^\perp (\textbf{U}, \hat{\textbf{V}})^T) \Pi^\perp  \notag  \\
	= & \p{0 & 0 & 0\\
	0 & 1 & 0 \\
	0 & 0 & 1 } \left.\left(
\begin{array}{ccc}
-g^{00}  & 0& 0\\
0& g^{ij}& 0\\
0&0& -g^{00}
\end{array}\right)\right|_{\textbf{P}  (\textbf{U}, \hat{\textbf{V}})^T=0} \p{1 & 0 & 0\\
	0 & 0 & 0 \\
	0 & 0 & 0 }=0
\intertext{and}
	& (\hat{\textbf{P}}^\dagger)^\perp \hat{N}^0 (\tau,\textbf{P}^\perp (\textbf{U}, \hat{\textbf{V}})^T) \hat{\textbf{P}}^\dagger= \hat{\textbf{P}}^\dagger  \hat{N}^0(\tau,\textbf{P}^\perp (\textbf{U}, \hat{\textbf{V}})^T) (\hat{\textbf{P}}^\dagger)^\perp \notag \\
	= & \p{1 & 0 \\0 & 0} \left.\left(\begin{array}{cc}
	1 & 0 \\
	0 & \frac{1}{\lambda^2} M_{ki}J^{kj}J^{iq}
	\end{array}
	\right)\right|_{\textbf{P}  (\textbf{U}, \hat{\textbf{V}})^T=0} \left(\begin{array}{cc}
	0&0\\
	0&\delta^{q}_{i}
	\end{array}
	\right)=0 . \label{e:PDP}
\end{align}
Note that we have set $u_i=0$ (from \eqref{e:PUV} and $\textbf{P}  (\textbf{U}, \hat{\textbf{V}})^T=0$) in \eqref{e:PDP}, which leads to $L^0_i=0$ in $\hat{N}^0$. This completes the verification of \textbf{Condition \eqref{c:6}}.

At the end, the examination of \textbf{Condition \eqref{c:7}} is the same as the previous work in \cite[\S $3$]{Oliynyk2016a} and \cite[\S $7$]{Liu2017}. To avoid repeating the examination, we only note a crucial identity in the proof, that is
\begin{align} \label{e:PDDP}
	(\hat{\textbf{P}}^{\dagger})^\perp[D_{(\textbf{U}, \hat{\textbf{V}}) }\hat{N}^0(\tau,\textbf{U}, \hat{\textbf{V}})\textbf{W}](\hat{\textbf{P}}^{\dagger})^\perp\equiv 0 ,
\end{align}
for any $\textbf{W} \in \mathbb{V}$. The same derivations, with the help of the identity \eqref{e:PDDP}, conclude \textbf{Condition \eqref{c:7}}.

Having verified that all of the hypotheses of Theorem \ref{pro:3.16} are satisfied, we can conclude the  main Theorem \ref{theorem:1.4} via the similar arguments to the previous section \S\ref{s:Mak1}, but only noting that, in this case, the invertible transformation between $u_q$ and $u^i$ is given by \eqref{e:u-q} and \eqref{velocity}, and then they can be controlled by each other with the gravitational variables,
\begin{align*}
	\|u_q\|_{H^k}\lesssim \|(\mathbf{U},u^i)\|_{H^k} \AND \|u^i\|_{H^k} \lesssim \|(\mathbf{U},u_q)\|_{H^k}.
\end{align*}
We omit the reduplicate details and complete the proof of the main Theorem \ref{theorem:1.4} for the Fluids $(II)$.

\appendix


\section{A class of symmetric hyperbolic systems}\label{section:3.5}

In this Appendix, we introduce the main tool for this article which is a variation of the theorem originally established in \cite[Appendix B]{Oliynyk2016a}. The proof of it has been omit, but readers can see the details\footnote{A minor revision and improvement about the condition \eqref{c:7} has been included in \href{https://arxiv.org/abs/1505.00857}{arXiv:1505.00857v4} and only \eqref{e:A3} is necessary for our case. An alternative expression of this condition is given in \cite{Liu2017,Liu2018}. } in \cite{Oliynyk2016a}, and its generalizations in \cite{Liu2017,Liu2018}.

Consider the following symmetric hyperbolic system.
\begin{align}
B^{\mu}\partial_{\mu}u =&\frac{1}{t}\textbf{BP}u+H\quad\text{in}\;[T_{0},T_{1}]\times\mathbb{T}^{n},  \label{e:model1}\\
u =&u_{0}\quad\quad\quad\qquad\text{in}\;{T_{0}}\times\mathbb T^{n},\label{e:model2}
\end{align}
where we require the following \textbf{Conditions}:
\begin{enumerate}[(I)]
\item \label{c:1} $T_{0}<T_{1}\leq0$.

\item \label{c:2} $\textbf{P}$ is a constant, symmetric projection operator, i.e., $\textbf{P}^{2}=\textbf{P}$, $\textbf{P}^{T}=\textbf{P}$ and $\partial_\mu \textbf{P}=0$.

\item \label{c:3} $u=u(t,x)$ and $H(t,u)$ are $\mathbb R^{N}$-valued maps, $H\in C^{0}([T_{0},0],C^{\infty}(\mathbb R^{N}))$ and satisfies $H(t,0)=0$.

\item \label{c:4} $B^{\mu}=B^{\mu}(t,u)$ and $\textbf{B}=\textbf{B}(t,u)$ are $\mathbb M_{N\times N}$-valued maps, and $B^{\mu},\,\textbf{B}\in C^{0}([T_{0},0],C^{\infty}( \mathbb R^{N}))$, $B^{0}\in C^{1}([T_{0},0],C^{\infty}( \mathbb R^{N}))$ and they satisfy
\begin{align}\label{e:comBP}
	(B^{\mu})^{T}=B^{\mu},\quad [\textbf{P}, \textbf{B}]=\textbf{PB}-\textbf{BP}=0.
\end{align}

\item \label{c:5} Suppose\footnote{This variation of the original condition (v) and (B.3) in \cite{Oliynyk2016a} facilitates the examinations of the conditions of this theorem, and the proof is easy to be recovered by minor corrections. Note that the $\tau$-singular terms caused by $\tilde{B}^0 (t,u)$ and $\tilde{\textbf{B}} (t,u)$ can be absorbed into the principle singular term with the good sign. We give a Remark \ref{R:A2} on the key revisions in the proof. See the proof in \cite{Oliynyk2016a} or more details in \cite[\S $5$]{Liu2017} involving such variations. }
\begin{align}
	B^0=&\ring{B}^0(t)+\tilde{B}^0 (t,u)  \label{e:B0exp}
	\intertext{and}
	\textbf{B}=&\ring{\textbf{B}}(t) +\tilde{\textbf{B}} (t,u) \label{e:bfBexp}
\end{align}
where $\tilde{B}^0 (t,0)=0$ and $\tilde{\textbf{B}} (t,0)=0$.
There exists constants $\kappa,\,\gamma_{1},\,\gamma_{2}$ such that
\begin{align}
	\frac{1}{\gamma_{1}}\mathbb I\leq \ring{B}^{0}\leq \frac{1}{\kappa}\ring{\textbf{B}}\leq\gamma_{2}\mathbb I \label{e:Bineq}
\end{align}
for all $ t \in[T_{0},0] $.

\item \label{c:6} For all $(t,u)\in[T_{0},0]\times\mathbb R^{N}$, we have
$$
\textbf{P}^{\bot}B^{0}(t,\textbf{P}^\perp u)\textbf{P}=\textbf{P}B^{0}(t,\textbf{P}^\perp u)\textbf{P}^{\bot}=0,
$$
where
$
\textbf{P}^{\bot}=\mathbb I-\textbf{P}
$
is the complementary projection operator.

\item \label{c:7} There exists constants $\varsigma,\,\beta_{1}$ and $\varpi>0$ such that
\begin{align} \label{e:A3}
|\textbf{P}^{\bot}[D_{u}B^{0}(t,u)(B^{0})^{-1}\textbf{BP}u]\textbf{P}^{\bot}|_{op} \leq& |t|\varsigma+\frac{2\beta_{1}}{\varpi+|P^{\bot}u|^2}|\textbf{P}u|^{2}. 
\end{align}

\end{enumerate}

\begin{theorem}\label{pro:3.16}
	Suppose that $k\geq \frac{n}{2}+1$, $u_{0}\in H^{k}(\mathbb T^{n})$ and conditions \eqref{c:1}--\eqref{c:7} are fulfilled. Then there exists a $T_{\ast}\in (T_{0},0)$, and a unique classical solution $u\in C^{1}([T_{0},T_{\ast}]\times\mathbb T^{n})$ that satisfies $u\in C^{0}([T_{0},T_{\ast}],H^{k})\cap C^{1}([T_{0},T_{\ast}],H^{k-1})$ and the energy estimate
	$$
	\|u(t)\|_{H^{k}}^{2}-\int_{T_{0}}^{t}\frac{1}{\tau}\|\textbf{P}u\|_{H^{k}}^{2}d\tau\leq
	Ce^{C(t-T_{0})}(\|u(T_{0})\|_{H^{k}}^{2})
	$$
	for all $T_{0}\leq t<T_{\ast}$, where
	$
	C=C(\|u\|_{L^{\infty}([T_{0},T_{\ast}),H^{k})},\gamma_{1},\gamma_{2},\kappa),
	$
	and can be uniquely continued to a larger time interval $[T_{0},T^{\ast})$ for all $T^{\ast}\in(T_{\ast},0]$ provided $\|u\|_{L^{\infty}([T_{0},T_{\ast}),W^{1,\infty})}<\infty$.
\end{theorem}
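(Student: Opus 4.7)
The plan is to establish the theorem in three stages: a local existence/continuation result, the core uniform energy estimate that controls the solution up to (and, by continuation, through) the singular time $t=0$, and a continuation step to deduce the stated extension criterion.

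\emph{Stage 1: Local existence away from the singularity.} For any interval $[T_0, T_*']$ with $T_*' < 0$ strictly separated from $0$, the factor $1/t$ in the source term is merely a bounded coefficient, so \eqref{e:model1}--\eqref{e:model2} is a quasilinear symmetric hyperbolic system in the classical sense. Conditions \eqref{c:2}, \eqref{c:3}, \eqref{c:4}, together with the positivity $\ring{B}^0 \geq \gamma_1^{-1}\mathbb{I}$ from \eqref{c:5}, then place the problem within the scope of the standard local existence and continuation theorems for quasilinear symmetric hyperbolic systems (Kato, Majda). This yields a classical solution $u \in C^0([T_0,T_*'],H^k)\cap C^1([T_0,T_*'],H^{k-1})$, together with the standard blow-up criterion in $W^{1,\infty}$. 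The real work is to show that the $H^k$-norm remains controlled as $T_*' \nearrow 0$, so that this local solution extends to any $T^* \in (T_*,0]$ under the stated hypothesis.

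\emph{Stage 2: The core singular energy estimate.} I will work with the weighted higher-order energy
\begin{equation*}
E_k(t)^2 := \sum_{|\alpha|\leq k}\bigl\langle D^\alpha u,\, B^0(t,u)\, D^\alpha u\bigr\rangle_{L^2(\mathbb{T}^n)}.
\end{equation*}
Differentiating in $t$ and substituting $B^0\partial_t u = -B^j\partial_j u + t^{-1}\mathbf{B}\mathbf{P}u + H - [D^\alpha,\cdot]$ commutator terms, symmetry of $B^\mu$ handles the transport terms via integration by parts (yielding $\|\partial B\|_{L^\infty}E_k^2$ contributions, controlled by $\|u\|_{W^{1,\infty}}$ through Moser estimates). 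The decisive contribution is the singular term
\begin{equation*}
\frac{2}{t}\sum_{|\alpha|\leq k}\bigl\langle D^\alpha u,\,\mathbf{B}\mathbf{P}D^\alpha u\bigr\rangle \;+\; \frac{2}{t}\,(\text{commutators of } D^\alpha \text{ with } \mathbf{B}\mathbf{P}).
\end{equation*}
Since $\mathbf{P}$ is constant by \eqref{c:2}, $\mathbf{P}\mathbf{B}=\mathbf{B}\mathbf{P}$ by \eqref{e:comBP}, and $\mathbf{P}^T=\mathbf{P}=\mathbf{P}^2$, the leading term collapses to $\frac{2}{t}\langle \mathbf{P}D^\alpha u,\mathbf{B}\mathbf{P}D^\alpha u\rangle$. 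Splitting $\mathbf{B}=\ring{\mathbf{B}}+\widetilde{\mathbf{B}}$ via \eqref{e:bfBexp} and invoking \eqref{e:Bineq}, the $\ring{\mathbf{B}}$ part gives the crucial good-sign dissipation
\begin{equation*}
\frac{2}{t}\langle \mathbf{P}D^\alpha u,\ring{\mathbf{B}}\mathbf{P}D^\alpha u\rangle \;\leq\; \frac{2\kappa}{t}\langle \mathbf{P}D^\alpha u,\ring{B}^0 \mathbf{P}D^\alpha u\rangle \;\leq\; -\frac{2\kappa}{|t|\gamma_1}\|\mathbf{P}D^\alpha u\|_{L^2}^2 \qquad (t<0),
\end{equation*}
which is precisely what produces the $-\int_{T_0}^t \tau^{-1}\|\mathbf{P}u\|_{H^k}^2\,d\tau$ term in the stated inequality. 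The $\widetilde{\mathbf{B}}$-contribution vanishes at $u=0$, so $|\widetilde{\mathbf{B}}(t,u)|\lesssim |u|$; paired with $\|\mathbf{P}u\|_\infty$ from Sobolev embedding ($k\geq n/2+1$), it is absorbed into the principal dissipation for small enough $u$ (or into the benign $CE_k^2$ term).

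\emph{Stage 3: Dangerous cross-terms and the role of (VI)--(VII).} The subtlety is the $\mathbf{P}^\perp$-component: the commutator $[D^\alpha, B^0]\partial_t u$, which after using the equation produces a potentially singular contribution $t^{-1}\langle D^\alpha u,[D^\alpha,(B^0)^{-1}\mathbf{B}\mathbf{P}]\cdot\rangle$. The order-one piece of this commutator, after pairing against $\mathbf{P}^\perp D^\alpha u$, reduces to an expression controlled by $\mathbf{P}^\perp [D_u B^0\cdot(B^0)^{-1}\mathbf{B}\mathbf{P}u]\mathbf{P}^\perp$, and Condition \eqref{c:7} bounds precisely this operator by $|t|\varsigma+2\beta_1(\varpi+|\mathbf{P}^\perp u|^2)^{-1}|\mathbf{P}u|^2$. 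The first piece is regular in $t$; the second, when multiplied by $\|\mathbf{P}^\perp u\|_{H^k}^2/|t|$, is bounded by $\beta_1 t^{-1}\|\mathbf{P}u\|_{L^\infty}^2$ times something finite, and is therefore absorbed into the dissipation produced in Stage 2 provided $\|\mathbf{P}u\|_\infty$ is small (i.e., $\beta_1$ times a Sobolev-embedded norm is smaller than $\kappa/\gamma_1$). Condition \eqref{c:6} is what ensures that $B^0$ does not couple $\mathbf{P}u$ to $\mathbf{P}^\perp u$ at leading order in the off-diagonal block, so that the decomposition $u = \mathbf{P}u + \mathbf{P}^\perp u$ is compatible with the energy in a useful way. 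The resulting differential inequality has the schematic form
\begin{equation*}
\frac{d}{dt}E_k^2 \;-\; \frac{C_0}{|t|}\|\mathbf{P}u\|_{H^k}^2 \;\leq\; C(\|u\|_{W^{1,\infty}})\,E_k^2,
\end{equation*}
to which a standard Gronwall argument (noting $t<0$, so $-\int_{T_0}^t \tau^{-1}\|\mathbf{P}u\|_{H^k}^2\,d\tau\geq 0$) delivers exactly the claimed bound.

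\emph{Stage 4: Continuation.} Combining the uniform estimate of Stage 2 with the standard continuation criterion of Stage 1 and Sobolev embedding $H^k\hookrightarrow W^{1,\infty}$ for $k>n/2+1$, any a priori $L^\infty_t W^{1,\infty}_x$ control of $u$ on $[T_0,T^*)$ converts, via Stage 2, to an $L^\infty_t H^k_x$ bound, and the local solution extends to all of $[T_0,T^*)$ for any $T^*\in(T_*,0]$. The expected main obstacle is Stage 3: identifying exactly which commutator terms in the $\mathbf{P}^\perp$-block are covered by the hypothesis \eqref{e:A3} versus which must be absorbed into the Stage 2 dissipation; in particular, verifying that the coefficient $\beta_1$ and the size of $\|\mathbf{P}u\|_\infty$ combine to keep the absorption constant strictly smaller than the dissipation constant $\kappa/\gamma_1$ is what forces the smallness assumption implicit in the constant $\sigma$ of the main Theorem and what governs how close to $t=0$ one can extend.
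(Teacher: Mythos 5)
Your proposal follows essentially the same route as the paper's (which defers to Oliynyk's Appendix~B energy argument, recording in Remark~\ref{R:A2} only the modification forced by the weakened Condition~\eqref{c:5}): the weighted $H^k$ energy built from $B^0$, the good-signed singular term extracted via $[\textbf{P},\textbf{B}]=0$ and $\ring{B}^0\le\kappa^{-1}\ring{\textbf{B}}$, absorption of the $\tilde{B}^0$ and $\tilde{\textbf{B}}$ perturbations into that dissipation for small data, Condition~\eqref{c:7} to tame the $\textbf{P}^{\bot}$-block of the singular $D_uB^0$ contribution, and Gronwall plus the standard continuation criterion. One small caveat: the remainder $\frac{2}{t}\langle \textbf{P}D^\alpha u,\tilde{\textbf{B}}\,\textbf{P}D^\alpha u\rangle$ is still singular in $t$, so it must be absorbed into the singular dissipation term using smallness (exactly as in Remark~\ref{R:A2}), and cannot be dumped into the benign $CE_k^2$ term as your parenthetical alternative suggests.
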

Let us end this Appendix with a remark of the proof although we omit the detailed proof.
	\begin{remark}\label{R:A2}
		We give the key revision in the proof due to the changing of Condition \eqref{c:5}, (i.e. \eqref{e:B0exp}--\eqref{e:Bineq}). As in \cite[Page $2203$]{Liu2017}, we rewriting $\ring{B}^0 $ as $\ring{B}^0 =(\ring{B} ^0 )^{\frac{1}{2}}(\ring{B}^0 )^{\frac{1}{2}}$, which can be done since
		$\ring{B}^0 $ is a real symmetric and positive-definite. We see from \eqref{e:Bineq} that
		\begin{align} \label{Afrbound}
		(\ring{B} ^0)^{-\frac{1}{2}}\ring{\textbf{B}}(\ring{B}^0)^{-\frac{1}{2}}\geq \kappa\mathds{1}.
		\end{align}
		Since, by \eqref{e:comBP}--\eqref{e:bfBexp}
		\begin{align*}
		\frac{2}{t} \langle D^\alpha u, \textbf{B}  D^\alpha \mathbf{P}  u\rangle=&\frac{2}{t} \langle D^\alpha \mathbf{P} u, (\ring{B}^0)^{\frac{1}{2}}[(\ring{B}^0)^{-\frac{1}{2}}\ring{\textbf{B}} (\ring{B}^0)^{-\frac{1}{2}}](\ring{B}^0)^{\frac{1}{2}} D^\alpha \mathbf{P}  u\rangle +\frac{2}{t} \langle D^\alpha\mathbf{P} u, \tilde{\textbf{B}}  D^\alpha \mathbf{P}  u\rangle  \notag  \\
		\leq & \frac{2\kappa}{t} \langle D^\alpha \mathbf{P} u,  \ring{B}^0  D^\alpha \mathbf{P}  u\rangle +\frac{2}{t} \langle D^\alpha\mathbf{P} u, \tilde{\textbf{B}}  D^\alpha \mathbf{P}  u\rangle  \notag  \\
		=& \frac{2\kappa}{t} \langle D^\alpha \mathbf{P} u,  B^0  D^\alpha \mathbf{P}  u\rangle-\frac{2\kappa}{t} \langle D^\alpha \mathbf{P} u,  \tilde{B}^0  D^\alpha \mathbf{P}  u\rangle +\frac{2}{t} \langle D^\alpha\mathbf{P} u, \tilde{\textbf{B}}  D^\alpha \mathbf{P}  u\rangle.
		\end{align*}
		It follows immediately from \eqref{Afrbound} that
		\begin{align} \label{E:KAPPACONTR}
		\frac{2}{t}\sum_{0\leq |\alpha|\leq s-1}\langle D^\alpha u,\textbf{B}  D^\alpha \mathbf{P}   u\rangle \leq  \underbrace{\frac{2\kappa}{t}\vertiii{\mathbf{P}  u}^2_{H^{s-1}}}_{\text{I}}-\underbrace{\frac{1}{t} C \|u\|_{H^{s-1}} \|\textbf{P}u\|^2_{H^{s-1}}}_{\text{II}},
		\end{align}
		where
		\begin{align*}
			\vertiii{u}^2_{H^k}:=\sum_{0\leq |\alpha|\leq k}\langle D^\alpha u, B^0\bigl(t,u(t,\cdot)\bigr) D^\alpha u\rangle.
		\end{align*}
	Then, the following proof are the same as \cite[Appendix B]{Oliynyk2016a} or \cite[\S $5$]{Liu2017} just noting that the term II in \eqref{E:KAPPACONTR} can be absorbed by I  in the rest of estimates provided the data is small enough.
	\end{remark}


\section*{Acknowledgement}
We would like to thank Prof. Todd A. Oliynyk for helpful discussions. The first author thanks Prof. Uwe Brauer for his useful comments and is partially supported by the China Postdoctoral Science Foundation Grant under the grant No. 2018M641054, the Fundamental Research Funds for the Central Universities, HUST: 5003011036 and the National Natural Science Foundation of China (NSFC) under the grant No. 11971503. The second author is grateful to the financial support from Monash University during his stay, and is partially supported by the National Natural Science Foundation of China (NSFC) under the grant Nos. 12071435,11701517,11871212, the Natural Science Foundation of Zhejiang Province under the granr No. LY20A010026 and the Fundamental Research Funds of Zhejiang Sci-Tech University under the grant No. 2020Q037. We also thank the referee for their comments and criticisms, which have served to improve the content and exposition of this article.


\bibliographystyle{amsplain}
\bibliography{FLRWgegas}
\end{document}